\def\doi{8(4:16)2012}
\newcommand{\cexcex}[1]{}%
\def\qedhere{}
\let\origparagraph\paragraph
\renewcommand{\paragraph}[1]{\origparagraph{\textbf{#1}}}
\newenvironment{definition}[1][]
{\begin{defi}\ifthenelse{\equal{#1}{}}{}{(\textsc{#1}). }\ignorespaces
}%
{\end{defi}\par\noindent\ignorespacesafterend}
\newenvironment{theorem}[1][]
{\begin{thm}\ifthenelse{\equal{#1}{}}{}{(\textsc{#1}). }\ignorespaces
}%
{\end{thm}\par\noindent\ignorespacesafterend}
\newenvironment{lemma}[1][]
{\begin{lem}\ifthenelse{\equal{#1}{}}{}{(\textsc{#1}). }\ignorespaces
}%
{\end{lem}\par\noindent\ignorespacesafterend}
\newenvironment{proposition}[1][]
{\begin{prop}\ifthenelse{\equal{#1}{}}{}{(\textsc{#1}). }\ignorespaces
}%
{\end{prop}\par\noindent\ignorespacesafterend}
\newenvironment{corollary}[1][]
{\begin{cor}\ifthenelse{\equal{#1}{}}{}{(\textsc{#1}). }\ignorespaces
}%
{\end{cor}\par\noindent\ignorespacesafterend}
\newenvironment{example}[1][]
{\begin{exa}\ifthenelse{\equal{#1}{}}{}{(\textsc{#1}). }\ignorespaces
}%
{\end{exa}\par\noindent\ignorespacesafterend}
\newcommand{\rref}[2][]{\prettyref{#2}}
\newcommand{\bebecomes}{\mathrel{::=}}
\newcommand{\alternative}{~|~}
\newcommand{\DIclass}{\ensuremath{\mathcal{DI}}}%
\newcommand{\DCIclass}{\ensuremath{\mathcal{DCI}}}%
\newcommand{\DI}[1][]{\ensuremath{\DIclass_{\ifthenelse{\equal{#1}{}}{}{#1}}}\xspace}
\newcommand{\DIp}[1][]{\ensuremath{\DIclass_{\ifthenelse{\equal{#1}{}}{\land,\lor}{#1,\land,\lor}}}\xspace}
\newcommand{\DIS}[1][]{\ensuremath{\DCIclass_{\ifthenelse{\equal{#1}{}}{}{#1}}}\xspace}
\newcommand{\DISp}[1][]{\ensuremath{\DCIclass_{\ifthenelse{\equal{#1}{}}{\land,\lor}{#1,\land,\lor}}}\xspace}
\newcommand{\DIop}[1][]{\ensuremath{\DIclass^\circ_{\ifthenelse{\equal{#1}{}}{}{#1}}}\xspace}
\newcommand{\DIopp}[1][]{\ensuremath{\DIclass^\circ_{\ifthenelse{\equal{#1}{}}{\land,\lor}{#1,\land,\lor}}}\xspace}
\newcommand{\DISop}[1][]{\ensuremath{\DCIclass^\circ_{\ifthenelse{\equal{#1}{}}{}{#1}}}\xspace}
\newcommand{\DISopp}[1][]{\ensuremath{\DCIclass^\circ_{\ifthenelse{\equal{#1}{}}{\land,\lor}{#1,\land,\lor}}}\xspace}
\newcommand{\ivr}{H}
\newcommand{\inv}{F}
\newcommand{\precond}{A}
\renewcommand{\linferenceRuleNameSeparation}{\hspace{5pt}}
\newcommand{\stdI}{\dLint[state=\nu]}
\newcommand{\I}{\stdI}
\begin{document}
\title[The Structure of Differential Invariants and Differential Cut Elimination]{The Structure of Differential Invariants and Differential Cut Elimination}
\author[A.~Platzer]{Andr\'e Platzer}
\address{Carnegie Mellon University, Computer Science Department, Pittsburgh, PA, USA}
\email{aplatzer@cs.cmu.edu}
\thanks{
This material is based upon work supported by the National Science Foundation under
NSF CAREER Award CNS-1054246, NSF EXPEDITION CNS-0926181, and under Grant No.
CNS-0931985.
}

\begin{abstract}
  The biggest challenge in hybrid systems verification is the handling of differential equations. Because computable closed-form solutions only exist for very simple differential equations, proof certificates have been proposed for more scalable verification.
  Search procedures for these proof certificates are still rather ad-hoc, though, because the problem structure is only understood poorly.
  We investigate \emph{differential invariants}, which define an induction principle for differential equations and which can be checked for invariance along a differential equation just by using their differential structure, without having to solve them.
  We study the structural properties of differential invariants.
  To analyze trade-offs for proof search complexity, we identify more than a dozen relations between several classes of differential invariants and compare their deductive power.
  As our main results, we analyze the deductive power of differential cuts and the deductive power of differential invariants with auxiliary differential variables.
  We refute the \emph{differential cut elimination hypothesis} and show that, unlike standard cuts, differential cuts are fundamental proof principles that strictly increase the deductive power.
  We also prove that the deductive power increases further when adding \emph{auxiliary differential variables} to the dynamics.

\end{abstract}
\keywords{
Proof theory, differential equations, differential invariants, differential cut elimination, differential dynamic logic, hybrid systems, logics of programs, real differential semialgebraic geometry.
}
\subjclass{F.4.1, F.3.1, G.1.7}

\maketitle

\newsavebox{\Rval}%
\sbox{\Rval}{$\mathbb{R}$}

\section{Introduction}

Hybrid systems \cite{Tavernini87,DBLP:conf/lics/Henzinger96,DBLP:journals/tac/BranickyBM98,DavorenNerode_2000} are systems with joint discrete and continuous dynamics, e.g., aircraft that move continuously in space along differential equations for flight and that are controlled by discrete control decisions for flight control like collision avoidance maneuvers.
Hybrid systems verification is an important but challenging and undecidable problem \cite{DBLP:conf/lics/Henzinger96,DBLP:journals/tac/BranickyBM98}.
Several verification approaches for hybrid systems have been proposed.
Verifying properties of differential equations is at the heart of hybrid systems verification.
In fact, we can prove properties of hybrid systems exactly as good as properties of differential equations can be proved.
This surprising intuition is made formally rigorous by our relatively complete axiomatization of our logic for hybrid systems relative to properties of differential equations \cite{DBLP:journals/jar/Platzer08}.
One direction is obvious, because differential equations are part of hybrid systems, so we can only understand hybrid systems to the extent that we can reason about their differential equations.
We have proven the other direction as well, by proving that all true properties of hybrid systems can be reduced effectively to elementary properties of differential equations \cite{DBLP:journals/jar/Platzer08}.
Moreover, we gave a proof calculus for hybrid systems that performs this reduction constructively and, vice versa, provides a provably perfect lifting of every approach for differential equations to hybrid systems \cite{DBLP:journals/jar/Platzer08}.
The proof of this result is constructive, but also very complicated, which explains why some approaches for differential equations are easier to lift to hybrid systems in practice than others.
In this paper, we consider an approach (induction for differential equations) that lifts naturally.

Thus, the remaining (yet undecidable) question is how to prove properties of differential equations.
If the differential equation has a simple polynomial solution, then this is easy \cite{DBLP:journals/jar/Platzer08} using the decidable theory of first-order real arithmetic \cite{tarski_decisionalgebra51}.
Unfortunately, almost no differential equations have such simple solutions.
Polynomial solutions arise in linear differential equations with constant coefficients where the coefficient matrix is nilpotent. But this is a very restricted class.
For other differential equations, numerous approximation techniques have been considered to obtain approximate answers \cite{DBLP:conf/hybrid/GreenstreetM99,DBLP:conf/hybrid/AsarinDG03,DBLP:journals/tac/GirardP07,RatschanS07,DBLP:journals/sttt/Frehse08}.
It is generally surprisingly difficult to get them formally sound, however, due to inherent numerical approximation that make the numerical image computation problem itself neither semidecidable nor co-semidecidable \cite{DBLP:conf/hybrid/PlatzerC07,DBLP:journals/mst/Collins07}, even when tolerating arbitrarily large error bounds on the decision.

As alternative approaches that are not based on approximation, proof certificate techniques have been proposed for hybrid systems verification, including barrier certificates \cite{DBLP:conf/hybrid/PrajnaJ04,DBLP:journals/tac/PrajnaJP07}, equational templates \cite{DBLP:journals/fmsd/SankaranarayananSM08}, differential invariants \cite{DBLP:journals/logcom/Platzer10,DBLP:conf/cav/PlatzerC08,DBLP:journals/fmsd/PlatzerC09}, and a constraint-based template approach \cite{DBLP:conf/cav/GulwaniT08}.
Once a proof certificate has been found, it can be checked efficiently.
But we first have to find it.
Previous search procedures are based on searching for parameters of various user-specified templates \cite{DBLP:conf/hybrid/PrajnaJ04,DBLP:journals/tac/PrajnaJP07,DBLP:journals/fmsd/SankaranarayananSM08,DBLP:conf/cav/GulwaniT08,DBLP:conf/cav/PlatzerC08,DBLP:journals/fmsd/PlatzerC09}.
But these verification techniques fail if the template does not include the required form.
How do we need to choose the templates?
What are the trade-offs for choosing them?
If we choose a smaller class of witness templates to search through, could this make the search more difficult? Or would we miss out on proofs altogether? Would we no longer be able to prove some properties at all even though they are true?
The primary reason why search procedures for hybrid systems proof certificates are still ad-hoc is that the structure of proof certificates of hybrid systems has not been understood very well so far.

In this paper, we identify and characterize the structure of hybrid systems proof certificates.
What relationships exist between various choices for classes of proof certificates?
Are there system properties that cannot be proven when focusing on a particular class of invariants?
Are any of the choices superior to others or are they mutually incomparable?
Invariants are well-understood for discrete systems but not yet for continuous and hybrid systems.

\begin{figure}[tb]
  \centering
  \newcommand{\reason}[2][]{\textcolor{blue}{#1}}
  \newcommand{\incomparable}{i}
  \begin{tikzpicture}[xscale=1.4,yscale=1.4]%
    \tikzstyle{cross line}=[preaction={draw=white,-,line width=8pt}]
    \tikzstyle{back line}=[]%
    \tikzstyle{less}=[arrows=right hook-stealth']
    \tikzstyle{lesseq}=[arrows=right to reversed-stealth']
    \tikzstyle{eq}=[double distance=2pt]
    \tikzstyle{suspected}=[gray,dashed]
    \tikzstyle{unknown}=[gray,dotted]
    \tikzstyle{incomp}=[decorate,decoration=zigzag]
    \node (eq) at (1,0) {$\DI[=]$};
    \node (peq) at (2.5,0) {$\DIp[=]$};
    \node (gt) at (0,-1) {$\DI[>]$};
    \node (pgt) at (3.5,-1) {$\DIp[>]$};
    \node (geq) at (0,1) {$\DI[\geq]$};
    \node (pgeq) at (3.5,1) {$\DIp[\geq]$};
    \node (di) at (7,0) {$\DI$}; %
    \node (pgeqeq) at (5.5,1) {$\DIp[\geq,=]$};
    \node (pgteq) at (5.5,-1) {$\DIp[>,=]$};
    \draw[incomp] (eq) --node[sloped,below] {\reason[\ref{prop:eq},\ref{prop:gt}]{\incomparable}} (gt); 
    \draw[less] (eq) --node[sloped,above] {\reason[\ref{prop:eq},\ref{prop:eq-geq}]{$\leq$}} (geq);
    \draw[incomp] (gt) --node[left] {\reason[\ref{prop:atom-p}]{\incomparable}} (geq);
    \draw[eq] (eq) -- node[above]{\reason[\ref{prop:EDP}]{$=$}} (peq);
    \draw[less] (geq) -- node[above]{\reason[\ref{thm:atomic}]{$<$}} (pgeq);
    \draw[less] (gt) -- node[below]{\reason[\ref{thm:atomic}]{$<$}} (pgt);
    \draw[less] (pgeqeq) -- node[above,sloped]{\reason[\ref{cor:p-incomp}]{$<$}} (di);
    \draw[less] (pgt) -- node[below]{\reason[\ref{prop:atom-p}]{$<$}} (pgteq); %
    \draw[less] (peq) -- node[sloped,above]{\reason[\ref{prop:eq-geq}]{$<$}} (pgeq);
    \draw[eq] (pgeq) -- node[above]{\reason[\ref{prop:eq-geq}]{$=$}} (pgeqeq);
    \draw[incomp,back line,cross line] (pgt) -- node[right,pos=0.2] {\reason[\ref{prop:atom-p}]{\incomparable}} (pgeq);
    \draw[incomp] (peq) --node[below,sloped] {\reason[\ref{cor:p-incomp}]{\incomparable}} (pgt);
    \draw[incomp,back line] (pgteq) -- node[left,pos=0.25] {\reason[\ref{prop:atom-p}]{\incomparable}}
    node[right,pos=0.25] {\reason[\ref{thm:pgteq-di}]{\incomparable}} (pgeqeq);
    \draw[less] (pgteq) -- node[below,sloped]{\reason[\ref{thm:pgteq-di}]{$<$}} (di);
    \draw[less,cross line] (peq) -- node[above,pos=0.42]{\reason[\ref{prop:eq}]{$<$}} %
    (di);
    \draw[less,back line,shorten <=2pt] (peq) to[bend left=10] node[above,sloped,pos=0.6]{\reason[\ref{cor:p-incomp}]{$<$}} 
      (pgteq); %
      \begin{scope}[xshift=8.4cm,yshift=0cm]
      \draw[less,shorten >=6pt,shorten <=6pt] (0,1) node {$\mathcal{A}$\,\,\,}--node[above] {$\mathcal{A}<\mathcal{B}$} node[below,text width=1cm] {strict\\inclusion} ++(1.3,0) node {~$\mathcal{B}$};
      \draw[eq,shorten >=6pt,shorten <=6pt] (0,-0.2) node {$\mathcal{A}$\,\,\,}--node[above] {$\mathcal{A}\mequiv \mathcal{B}$} node[below=4pt] {equivalent} ++(1.3,0) node {~$\mathcal{B}$};
      \draw[incomp,shorten >=4pt,shorten <=4pt] (0,-1) node {$\mathcal{A}$\,\,\,}--node[below=4pt] {incomparable} ++(1.3,0) node {~$\mathcal{B}$};
      \end{scope}
  \end{tikzpicture}
  \caption{Differential invariance chart (proposition numbers are indicated for each relation)}
  \label{fig:diffind-chart}
\end{figure}
We consider \emph{differential invariants}, which include several previous approaches as special cases (yet in modified forms to make the reasoning sound).
Logical proofs with differential invariants have been instrumental in enabling the verification of several practical hybrid systems, including separation properties in complex curved flight collision avoidance maneuvers for air traffic control \cite{DBLP:conf/fm/PlatzerC09}, advanced safety, reactivity and controllability properties of train control systems with disturbance and PI controllers \cite{DBLP:conf/icfem/PlatzerQ09}, and properties of electrical circuits \cite{Platzer10}.
Our logic-based proof approach with differential invariants has been the key enabling technique to make formal verification of these systems possible.

Here, we study the structure of differential invariants from a more foundational perspective and develop their proof theory.
Differential equations enjoy various universal computation properties, hence verification is not even semidecidable \cite{DBLP:journals/tcs/Branicky95,GracaCB07,BournezCGH07,DBLP:journals/mst/Collins07}.
Consequently, every complete proof rule is unsound or ineffective.
Hence, proof theory is not a study of completeness but a study of alignment and relative provability.

We analyze the relationships between several classes of differential invariants.
Our main tool for this is the study of \emph{relative deductive power}.
For comparing two classes of differential invariants, $\mathcal{A}$ and $\mathcal{B}$, we investigate whether there is a system property that only $\mathcal{A}$ can prove or whether all properties that can be proven using $\mathcal{A}$ can also be proven using $\mathcal{B}$.
If the answer is yes (inclusion), we give a construction that translates proofs using $\mathcal{A}$ into proofs using $\mathcal{B}$.
If the answer is no (separation), we prove for a formal proof using $\mathcal{A}$ that there is no formal proof using $\mathcal{B}$.
Of course, there are infinitely many possible proofs to check.
We, thus, show the deductive power separation properties by coming up with an indirect characteristic that separates the properties provable using $\mathcal{B}$ from the particular proof using $\mathcal{A}$.
These separation properties that we identify in our proofs are of more general interest beyond the systems we show.
We identify more than a dozen (16) relationships between nine classes of differential invariants (summarized in \rref{fig:diffind-chart}),
which shed light on how the classes compare in terms of their deductive power for systems analysis.

While our study is mostly one of logically fundamental properties like deductive power and provability, our proofs indicate additional computational implications, e.g., to what extent the polynomial degree or formula complexity increases with the respective inclusion and equivalence reductions shown in \rref{fig:diffind-chart}.
Our results summarized in \rref{fig:diffind-chart} show, for example, that equational differential invariants do not need Boolean operators, yet the required polynomial degree increases.
It is also easy to read off general limits of classes of differential invariants from our proofs of the separation properties, which can be exploited constructively for differential invariant search.

Observe that the algebraic structure of nonlinear real arithmetic alone is not sufficient to explain the relations identified in \rref{fig:diffind-chart}.
Both the algebraic and the differential structure of differential invariants matter for the answer, because the dynamics along differential equations determines which properties hold when following the system dynamics.
Consequently, the differential structure of the differential equation and of the property matter.
Even if the real algebraic structures match, we still do not know if the corresponding differential structures align in a compatible way.
We will see that the joint differential-algebraic structure of the problem can be surprising even for very simple differential equations already. In particular, our observations are fundamental and cannot be sidestepped by restricting attention to other classes of differential equations.

Ultimately, invariance crucially depends on the differential-geometrical structure induced by the differential equation.
We, thus, study the proof-theoretical properties of what can be proved about a differential equation based on its differential and algebraic invariant structure.
Most importantly, and most surprisingly, we refute the \emph{differential cut elimination hypothesis} \cite{DBLP:journals/logcom/Platzer10}.
Differential cuts \cite{DBLP:journals/logcom/Platzer10} have a simple intuition.
Similar to a cut in first-order logic, they can be used to first prove a lemma and then use it.
By the seminal cut elimination theorem of Gentzen \cite{Gentzen35I,Gentzen35}, standard logical cuts in first-order logic do not change the deductive power and can always be eliminated from proofs.
Unlike standard cuts, however, differential cuts work for differential equations and can be used to change the dynamics of the system.
Does this differential cut proof principle also support differential cut elimination?
Are differential cuts only a convenient proof shortcut?
Or are differential cuts an independent fundamental proof principle?
We show that the addition of differential cuts actually increases the deductive power.
There are properties of differential equations that can only be proven using differential cuts, not without them.
Hence, differential cuts indeed turn out to be a fundamental proof principle.
Four years ago we had conjectured that differential cuts are necessary to prove a certain class of air traffic control properties \cite{DBLP:journals/logcom/Platzer10}.
We have now refuted this conjecture, since those differential cuts can still be eliminated with a clever construction.
But we show that differential cuts are still necessary in general.
This illustrates the surprisingly subtle nature of proving properties of differential equations.

Furthermore, we present new proof rules for auxiliary differential variables and prove that the addition of auxiliary differential variables increases the deductive power, even in the presence of differential cuts.
That is, there are system properties that can only be proven using auxiliary differential variables in the dynamics relative to which the original system variables relate.
Hence, auxiliary differential variables are also a fundamental proof principle.
This is similar to discrete programs where auxiliary variables may also be necessary to prove some properties.
We now show that the same also holds for differential equations even in the presence of differential cuts. Refining differential equations with auxiliary differential variables adds to the deductive power, which, surprisingly, has not been considered before.

These alignments of the relative deductive power shed light on the properties and practical implications of various choices for hybrid systems verification.
They help make informed decisions concerning which restrictions on proof search (and differential invariant search) are tolerable without changing the deductive power.
In this paper we study the problem of proving properties of differential equations.
This directly relates to a study of proving properties of hybrid systems by way of our proof calculus from previous work that we have shown to be a complete axiomatization of hybrid systems relative to properties of differential equations \cite{DBLP:journals/jar/Platzer08}.
This previous result makes it formally precise how the verification of hybrid systems can be reduced directly to the verification of properties of differential equations, and how our new results about differential equations lift to hybrid systems \cite{DBLP:journals/jar/Platzer08,DBLP:journals/logcom/Platzer10,Platzer10}.

Our research requires a symbiosis of logic with elements of real arithmetical, differential, semialgebraic, and geometrical principles. 
Based on the results presented in this paper, we envision a continuing development of a new field that we call \emph{real differential semialgebraic geometry}.
In this work, it is of paramount importance to distinguish semantical truth from deductive proof.
It does not help if a property is true, unless we can produce a proof to show that it is true.
We assume that the reader is familiar with the proof theory of classical logic \cite{Fitting96a,Andrews02}, including the underlying notions of formal deduction, and the relationship and differences between syntax, semantics, and proof calculi.
We also assume basic knowledge of differential equations \cite{Walter:ODE} and of first-order real arithmetic \cite{tarski_decisionalgebra51}.

\section{Preliminaries}
Continuous dynamics described by differential equations are a crucial part of hybrid system models.
An important subproblem in hybrid system verification is the question whether a system following a (vectorial) differential equation \m{\D{x}=\theta} that is restricted to an \emph{evolution domain constraint} region $\ivr$ will always stay in the region $F$.
We represent this by the modal formula \m{\dbox{\hevolvein{\D{x}=\theta}{\ivr}}{\inv}}.
It is true at a state~$\iget[state]{\I}$ if, indeed, a system following \m{\D{x}=\theta} from~$\iget[state]{\I}$ will always stay in $\inv$ at all times (at least as long as the system stays in $\ivr$).
It is false at~$\iget[state]{\I}$ if the system can follow \m{\D{x}=\theta} from~$\iget[state]{\I}$ and leave $\inv$ at some point in time, without having left $\ivr$ at any time.
Here, $\inv$ and $\ivr$ are (quantifier-free) formulas of real arithmetic and \m{\D{x}=\theta} is a (vectorial) differential equation, i.e., \m{x=(x_1,\dots,x_n)} is a vector of variables and \m{\theta=(\theta_1,\dots,\theta_n)} a vector of polynomial terms;
for extensions to rational functions, see \cite{DBLP:journals/logcom/Platzer10}.
In particular, $\ivr$ describes a region that the continuous system cannot leave (e.g., because of physical restrictions of the system or because the controller otherwise switches to another mode of the hybrid system).
In contrast, $\inv$ describes a region for which we want to prove that the continuous system \m{\hevolvein{\D{x}=\theta}{\ivr}} will never leave it.
If, for instance, the formula \m{\ivr\limply\inv} is valid (i.e., the evolution domain region $\ivr$ is contained in region $\inv$), then \m{\dbox{\hevolvein{\D{x}=\theta}{\ivr}}{\inv}} is valid trivially.
This reasoning alone rarely helps, because $\ivr$ will not be contained in $\inv$ in the interesting cases.
For example, in an aircraft controller, the evolution domain constraint $\ivr$ could be that the air speed is nonnegative, whereas the property $\inv$ that we want to prove is that the aircraft do not collide, which is in no way entailed by $\ivr$.

\paragraph{Differential Dynamic Logic (Excerpt)}
The modal logical principle described above can be extended to a full dynamic logic for hybrid systems, called differential dynamic logic \dL \cite{DBLP:journals/jar/Platzer08,DBLP:journals/logcom/Platzer10,Platzer10}.
Here we only need propositional operators and modalities for differential equations.
For our purposes, it is sufficient to consider the \dL fragment with the following grammar (where $x$ is a vector of variables, $\theta$ a vector of terms of the same dimension, and $F,\ivr$ are formulas of (quantifier-free) first-order real arithmetic over the variables $x$):
\[
\phi,\psi ~\bebecomes~ 
F
\alternative \phi \land \psi
\alternative \phi \lor \psi
\alternative \phi \limply \psi
\alternative \phi \lbisubjunct \psi
\alternative \dbox{\hevolvein{\D{x}=\theta}{\ivr}}{F}
\]
A state is a function \m{\iget[state]{\I}:V\to\reals} that assigns real numbers to all variables in the set \m{V=\{x_1,\dots,x_n\}}.
We denote the value of term $\theta$ in state $\iget[state]{\I}$ by \m{\ivaluation{\I}{\theta}}.
The semantics is that of first-order real arithmetic with the following addition:
\begin{iteMize}{$\bullet$}
\item \m{\imodels{\I}{\dbox{\hevolvein{\D{x}=\theta}{\ivr}}{F}}} iff
for each function \m{\varphi:[0,r]\to(V\to\reals)} of some duration $r$       we have \m{\varphi(r) \models F} under the following two conditions:
\newcommand{\Ifz}{\iconcat[state=\varphi(\zeta)]{\stdI}}%
      \begin{enumerate}[(1)]
        \item the differential equation holds, i.e., for each variable~$x_i$ and each time $\zeta \in [0,r]$:
      \[
      \D[t]{\,\ivaluation{\iconcat[state=\varphi(t)]{\stdI}}{x_i}} (\zeta) = \ivaluation{\iconcat[state=\varphi(\zeta)]{\stdI}}{\theta_i}
      \]
      in particular, \m{\ivaluation{\iconcat[state=\varphi(t)]{\stdI}}{x_i}} has to be differentiable at $\zeta$ as a function of $t$
      \item and the evolution domain is always respected, i.e.,
      \m{\imodels{\Ifz}{\ivr}} for each $\zeta \in [0,r]$.
      \end{enumerate}
\end{iteMize}
Other details about the logic, its semantics, and proof rules that are not of immediate concern here can be found in previous work \cite{DBLP:journals/jar/Platzer08,DBLP:journals/logcom/Platzer10,Platzer10}.
We do not need to consider the full logic and full proof calculus here, because both are strictly compositional.
The other proof rules deal with handling other features of hybrid systems like their discrete dynamics, sequential compositions, nondeterministic choices, and repetitions.
For our purposes, it is sufficient to assume a decision procedure for first-order logic of real-closed fields \cite{tarski_decisionalgebra51} and a propositionally complete base calculus.
For simplicity, we also allow standard logical cuts just to have a simple way of glueing multiple proofs together.
In the sequel, we denote the use of instances of valid tautologies of first-order real arithmetic in proofs by \irref{qear}.
For reference, these background proof rules are summarized in \rref{app:proof-rules}.

\paragraph{Solutions as Explicit Witnesses}
{\let\precond\inv%
An explicit witness for the validity of a formula like \m{\precond \limply \dbox{\hevolvein{\D{x}=\theta}{\ivr}}{\inv}} would be the set of all solutions of the differential equation along with a proof that, when starting in any state that satisfies $\precond$, formula $\inv$ holds all along every solution of \m{\D{x}=\theta} as long as formula $\ivr$ holds.
If we happen to know a (unique) solution \m{X(t)=f(t,x_0)} of the differential equation \m{\D{x}=\theta} with a function $f(t,x_0)$ of time $t$ and the initial state $x_0$, then we have the following sound rule

\[
  {\linferenceRule[sequent]
    {\lsequent{\precond~}{~\forall{r}{\left(r\geq0\land\forall{\zeta}{\left(0\leq\zeta\leq r \limply \subst[\ivr]{x}{f(\zeta,x)}\right)}\limply \subst[\inv]{x}{f(r,x)}\right)}}}
    {\lsequent{\precond}{\dbox{\hevolvein{\D{x}=\theta}{\ivr}}{\inv}}}
}{}
\]
where $\subst[\inv]{x}{f(r,x)}$ is the result of applying to $\inv$ the substitution that replaces the variable $x$ by $f(r,x)$ and similarly for $\subst[\ivr]{x}{f(\zeta,x)}$ .
It is very easy to see why this rule is sound \cite{DBLP:journals/jar/Platzer08}, because it directly follows the semantics.
The problem is that it is usually not a useful proof rule, because it is rarely effective.
It only helps if we can effectively compute a (unique) solution $f(t,x_0)$, as a function of $t$ and $x_0$, to the symbolic initial-value problem \m{\D{x}=\theta,x(0)=x_0} for a variable symbol $x_0$.
Notice that conventional initial-value problems are numerical with concrete numbers $x_0\in\reals^n$ as initial values, not symbols \cite{Walter:ODE}.
This is not enough for our purpose, because we need to prove that the formula holds for all states satisfying initial assumption $\precond$, which could be uncountably many.
We can hardly solve uncountably many different initial-value problems to verify a system.
Also, the rule only helps when the resulting arithmetic is computable and the formula with the (alternating) quantifiers $\forall r$ and $\forall \zeta$ in the premise can be decided.
Even very simple linear differential equations like \m{\hevolve{\D{x}=y\syssep\D{y}=-x}} have trigonometric functions as solutions, which leads to undecidable arithmetic by a simple corollary \cite[Theorem 2]{DBLP:journals/jar/Platzer08} to G\"odel's incompleteness theorem \cite{Goedel_1931}.
For most differential equations, the solutions cannot be computed effectively, fall outside decidable classes of arithmetic, or do not even exist in closed form.

Semantical approaches to proving properties of differential equations are not very informative for actual provability.
We need to consider the problem from a proof-theoretic perspective and investigate syntactic proof rules that are computationally effective, i.e., they lead to computable or decidable formulas and have computable side conditions.
Coming up with computationally ineffective proof rules for differential equations would obviously be trivial, even if they are sound and complete.
The appropriate fundamental question to ask is how provability compares and aligns for different choices of sound and effective proof rules.
This is what we address in this paper.
}

\section{Differential Invariants \& Differential Cuts} \label{sec:diffind}
The most fundamental question about a differential equation for safety verification purposes is whether a formula $\inv$ is an invariant, i.e., whether formula \m{\inv \limply \dbox{\hevolvein{\D{x}=\theta}{\ivr}}{\inv}} is valid (true in all states).
At first sight, invariance questions may look like a somewhat special case (pre- and postcondition are the same $\inv$ here), but they are really at the heart of the hybrid systems verification problem.
All more complicated safety properties of hybrid systems reduce to a series of invariance questions using the proof calculus that we presented in previous work \cite{DBLP:journals/jar/Platzer08,Platzer10}.
For instance, formulas of the form \m{\precond \limply \dbox{\hevolvein{\D{x}=\theta}{\ivr}}{B}} can be derived using the usual variation
\begin{equation}
\linfer
{\lsequent{\precond}{\inv}
& \lsequent{\inv}{\dbox{\hevolvein{\D{x}=\theta}{\ivr}}{\inv}}
& \lsequent{\inv}{B}}
{\lsequent{\precond}{\dbox{\hevolvein{\D{x}=\theta}{\ivr}}{B}}}
  \label{eq:odeb-indirect}
\end{equation}
We will use this variation occasionally.
Formally, variation~\rref{eq:odeb-indirect} can be derived in proof calculi using standard propositional cuts and the G\"odel generalization:

\centerline{
\begin{calculus}
      \cinferenceRule[genb|{$[]gen$}]{$\ddiamond{}{}/\dbox{}{}$ generalization}
      {\linferenceRule[sequent]
        {\lsequent{F}{G}}
        {\lsequent{\dbox{\hevolvein{\D{x}=\theta}{\ivr}}{F}}{\dbox{\hevolvein{\D{x}=\theta}{\ivr}}{G}}}
      }{}
\end{calculus}
}
What we need to do to use \rref{eq:odeb-indirect} effectively, however, is to find a good choice for the invariant $\inv$ that makes \m{\inv \limply \dbox{\hevolvein{\D{x}=\theta}{\ivr}}{\inv}} valid.
For this, we need to understand which formulas are good candidates for invariants.
\begin{definition}[Invariant]
  Formula $\inv$ is called an \emph{invariant} of the system \m{\hevolvein{\D{x}=\theta}{\ivr}} if the formula \m{\inv \limply \dbox{\hevolvein{\D{x}=\theta}{\ivr}}{\inv}}  is valid.
\end{definition}
Invariance is defined in terms of validity of differential dynamic logic formulas, which is a semantic concept and neither decidable nor semidecidable \cite[Theorem 2]{DBLP:journals/jar/Platzer08}.
Furthermore, by our relative completeness proof \cite[Theorem 3]{DBLP:journals/jar/Platzer08}, hybrid systems verification, which is not a semidecidable problem \cite{DBLP:conf/lics/Henzinger96}, reduces completely to proving properties of differential equations.
Semantic validity defines the reference what properties are actually invariants, but for verification purposes we need a computable approximation that produces actual evidence in the form of a proof.

One simple but computable proof rule is \emph{differential weakening}:

\centerline{
\begin{calculus}
  \cinferenceRule[diffweak|$DW$]{differential weakening}
  {\linferenceRule[sequent]
    {\lsequent{\ivr}{\inv}}
    {\lsequent{\inv}{\dbox{\hevolvein{\D{x}=\theta}{\ivr}}{\inv}}}
  }{}
\end{calculus}
}%

\noindent
This rule is obviously sound, because the system \m{\hevolvein{\D{x}=\theta}{\ivr}}, by definition, can never leave $\ivr$, hence, if $\ivr$ implies $\inv$, then $\inv$ is an invariant, no matter what \m{\hevolve{\D{x}=\theta}} does.
Unfortunately, this simple proof rule cannot prove very interesting properties, because it only works when $\ivr$ is very informative.
It can, however, be useful in combination with stronger proof rules (e.g., the differential cuts that we discuss later).

\paragraph{Differential Invariants}
As a proof rule for fundamental invariance properties of differential equations, we have identified the following rule, called differential induction \cite{DBLP:journals/logcom/Platzer10,DBLP:conf/cav/PlatzerC08,DBLP:journals/fmsd/PlatzerC09}.
All premier proof principles for discrete loops are based on some form of induction.
Differential induction defines induction for differential equations.
It resembles induction for discrete loops but works for differential equations instead and uses a \emph{differential formula} (\m{\subst[\D{\inv}]{\D{x}}{\theta}}, which we develop below) for the induction step.

\centerline{
\begin{calculus}
  \cinferenceRule[diffind|$DI$]{differential invariant}
  {\linferenceRule[sequent]
    {\lsequent{\ivr}{\subst[\D{\inv}]{\D{x}}{\theta}}}
    {\lsequent{\inv}{\dbox{\hevolvein{\D{x}=\theta}{\ivr}}{\inv}}}
  }{}
\end{calculus}
}%
\noindent
This \emph{differential induction} rule is a natural induction principle for differential equations.
The difference compared to ordinary induction for discrete loops is that the evolution domain region~$\ivr$ is assumed in the premise (because the continuous evolution is not allowed to leave its evolution domain region) and that the induction step uses the differential formula \m{\subst[\D{\inv}]{\D{x}}{\theta}} corresponding to the differential equation \m{\hevolve{\D{x}=\theta}} and formula $\inv$ instead of a statement that the loop body preserves the invariant.
Intuitively, the \emph{differential formula} \m{\subst[\D{\inv}]{\D{x}}{\theta}} captures the infinitesimal change of formula $\inv$ over time along \m{\hevolve{\D{x}=\theta}}, and expresses the fact that $\inv$ is only getting more true when following the differential equation \m{\hevolve{\D{x}=\theta}}.
The semantics of differential equations is defined in a mathematically precise but computationally intractable way using analytic differentiation and limit processes at infinitely many points in time.
The key point about differential invariants is that they replace this precise but computationally intractable semantics with a computationally effective, algebraic and syntactic total derivative \m{\D{\inv}} along with simple substitution of differential equations.
Still. the valuation of the resulting computable formula \m{\subst[\D{\inv}]{\D{x}}{\theta}} along differential equations coincides with analytic differentiation.
It is defined as follows.

{\newcommand{\der}[1]{\DD{(#1)}}%
\begin{definition}%
  The operator~\m{\DD{}} that is defined as follows on terms is called \emph{syntactic (total) derivation}\index{derivation!syntactic}:
  \begin{subequations}
  \begin{align}
    \der{r} & = 0
      \hspace{2.1cm}\text{for numbers}~r\in\rationals
    \label{eq:Dconstant}\\
    \der{x} & =  \D{x}
      \hspace{2cm}\text{for variable}~x\label{eq:Dpolynomial}\\
    \der{a+b} & = \der{a} + \der{b}
    \label{eq:Dadditive}\\
    \der{a\cdot b} & = \der{a}\cdot b + a\cdot\der{b}
    \label{eq:DLeibniz}
  \end{align}
  \label{eq:Dterm}
  \end{subequations}
  \index{differential!symbol}%
  We extend it to (quantifier-free) first-order real-arithmetic formulas $F$ as follows:
  \begin{subequations}
  \begin{align}
    \der{F\land G} &\,\mequiv\, \der{F} \land \der{G}\\
    \der{F\lor G} &\,\mequiv\, \der{F} \land \der{G}
    \\
    \der{a\geq b} &\,\mequiv\, \der{a} \geq \der{b}
    \hspace{1cm}\text{accordingly for \m{<,>,\leq,=}}
  \end{align}
  \label{eq:Dformula}
  \end{subequations}
  We abbreviate \m{\der{\inv}} by \m{\D{\inv}} and define \m{\subst[\D{\inv}]{\D{x}}{\theta}} to be the result of substituting $\theta$ for $\D{x}$ in \m{\D{\inv}}.
\end{definition}
The conditions~\rref{eq:Dterm} define a derivation operator on terms that~\rref{eq:Dformula} lifts conjunctively to logical formulas.
It is important for the soundness of \irref{diffind} to define \m{\D{(F\lor G)}} as \m{\D{F} \land \D{G}}, because both subformulas need to satisfy the induction step, it is not enough if $F$ satisfies the induction step and $G$ holds initially; see \cite{DBLP:journals/logcom/Platzer10} for details. 
We assume that formulas use dualities like \m{\lnot(a\geq b) \mequiv a<b} to avoid negations; see \cite{DBLP:journals/logcom/Platzer10} for a discussion of this and the $\neq$ operator (which could be defined by \m{\der{a\neq b} \,\mequiv\, \der{a}=\der{b}} if needed) and division (which is easy to add).
For a discussion why this definition of differential invariants gives a sound approach and many other attempts would be unsound, we refer to previous work \cite{DBLP:journals/logcom/Platzer10,Platzer10}.
In the interest of a self-contained presentation, we give a soundness proof for differential invariants in \rref{app:sound}.
}

\begin{wrapfigure}[10]{r}{3.6cm}
  \vspace{-\baselineskip}  %
  \includegraphics[width=3.6cm]{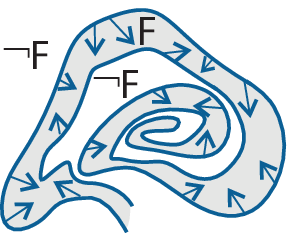} 
  \caption{Differential invariant~$F$}
  \label{fig:diffind}
\end{wrapfigure}
The basic idea behind rule \irref{diffind} is that the premise of \irref{diffind} shows that the total derivative~\m{\D{\inv}} holds within evolution domain~$\ivr$ when substituting the differential equations \m{\D{x}=\theta} into~$\D{\inv}$.
If~$\inv$ holds initially (antecedent of conclusion), then~$\inv$ itself stays true (succedent of conclusion).
Intuitively, the premise gives a condition showing that, within~$\ivr$, the total derivative~$\D{\inv}$ along the differential constraints is pointing inwards or transversally to~$\inv$ but never outwards to~$\lnot \inv$; see \rref{fig:diffind}.
Hence, if we start in~$\inv$ and, as indicated by~$\D{\inv}$, the local dynamics never points outside~$\inv$, then the system always stays in~$\inv$ when following the dynamics.
Observe that, unlike \m{\D{\inv}}, the premise of \irref{diffind} is a well-formed formula, because all differential expressions are replaced by non-differential terms when forming~\m{\subst[\D{\inv}]{\D{x}}{\theta}}.
Recall that we assume for simplicity that the (vectorial) differential equation \m{\hevolve{\D{x}=\theta}} mentions all variables of $\inv$.
It is possible to give a meaning also to the differential formula \m{\D{\inv}} itself in differential states \cite{DBLP:journals/logcom/Platzer10}, but this is not necessary for the questions we address in this paper.
Crucial for soundness, however, is the result that the valuation of syntactic derivatives along differential equations coincides with analytic differentiation; see \rref{app:sound} for a proof.

{\newcommand{\crf}{c}%
\newcommand{\der}[2][]{\subst[#2']{\D{x}}{\theta}}%
\newcommand{\If}{\DALint[flow=\varphi]}%
\begin{lemma}[Derivation lemma] \label{lem:derivationLemma}
  \newcommand{\Iff}{\iconcat[state=\varphi(t)]{\I}}%
  \newcommand{\Ifz}{\iconcat[state=\varphi(\zeta)]{\I}}%
  Let~$\hevolvein{\D{x}=\theta}{\ivr}$ be a differential equation with evolution domain constraint~$\ivr$ and
  let~$\iget[flow]{\If}:[0,r]\to(V\to\reals)$ be a corresponding solution of duration~$r>0$.
  Then for all terms~$\crf$ and all~$\zeta\in[0,r]$:
  \begin{displaymath}
    \D[t]{\,{\ivaluation{\Iff}{\crf}}}(\zeta) = \ivaluation{\Ifz}{\der{\crf}}
    \enspace.
  \end{displaymath}
  In particular,~\m{\ivaluation{\Iff}{\crf}} is continuously differentiable.
\end{lemma}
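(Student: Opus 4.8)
The plan is to prove the identity by induction on the structure of the term $c$, and to strengthen the statement so that the induction hypothesis asserts simultaneously that the real function $t\mapsto\ivaluation{\iconcat[state=\varphi(t)]{\I}}{c}$ is continuously differentiable on $[0,r]$ and that its time derivative at every $\zeta$ equals $\ivaluation{\iconcat[state=\varphi(\zeta)]{\I}}{\subst[\D{c}]{\D{x}}{\theta}}$. Threading differentiability through the induction is essential, because the sum and product rules of analysis used in the inductive steps presuppose that the valuations of the subterms are themselves differentiable in $t$; for this reason I would establish the $C^1$ claim alongside the derivative identity rather than as a separate afterthought.

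For the base cases, if $c$ is a rational constant then $t\mapsto\ivaluation{\iconcat[state=\varphi(t)]{\I}}{c}$ is constant, so both sides are $0$, in agreement with the defining clause for the derivation of numbers. If $c$ is a variable $x_i$, then the function is $t\mapsto\varphi(t)(x_i)$, whose derivative at $\zeta$ is the value of $\theta_i$ at $\varphi(\zeta)$ exactly by condition~(1) of the semantics of $\hevolvein{\D{x}=\theta}{\ivr}$; since $\subst[\D{x_i}]{\D{x}}{\theta}$ is precisely $\theta_i$, the two sides coincide. This variable case is the only place where the hypothesis that $\varphi$ is a solution of $\D{x}=\theta$ enters the argument.

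For the inductive steps I would apply linearity of differentiation when $c=a+b$ and the Leibniz product rule when $c=a\cdot b$, rewrite the derivatives of the factors using the induction hypothesis, and then recognize the resulting expression as the valuation at $\varphi(\zeta)$ of $\subst[\D{c}]{\D{x}}{\theta}$ --- which matches because the derivation operator was defined to send $a+b$ to the sum and $a\cdot b$ to the Leibniz combination of the derivations of the subterms, and substitution of $\theta$ for $\D{x}$ commutes with $+$ and $\cdot$. The continuous differentiability conclusion then propagates through both inductive cases, since sums and products of continuously differentiable functions are again continuously differentiable; the base case for a variable is $C^1$ because $\theta_i$ is a polynomial evaluated along the continuous solution $\varphi$.

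The only genuinely delicate point is the variable base case, where one must check that the purely syntactic object $\subst[\D{x_i}]{\D{x}}{\theta}$ coincides with the term $\theta_i$ that the semantics of the differential equation prescribes as the instantaneous rate of change of $x_i$ along $\varphi$; all remaining clauses are a transparent transfer of the elementary rules of calculus across the recursive definition of the derivation operator. I therefore expect no substantial obstacle beyond keeping this correspondence --- and the simultaneous differentiability bookkeeping --- explicit throughout the induction.
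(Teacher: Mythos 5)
Your proposal is correct and follows essentially the same route as the paper's own proof: structural induction on the term, with the variable case discharged by the semantics of \m{\hevolve{\D{x}=\theta}} (where \m{\Dp[x_j]{x_j}=1} and the cross partials vanish) and the sum and product cases discharged by linearity and the Leibniz rule transferred across the derivation operator. Your explicit threading of the \m{C^1} claim through the induction is a sensible bookkeeping refinement of what the paper states more tersely, but it is not a different argument.
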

}

For the purposes of this paper, it is useful to note that~\m{\D{\inv}} can be written as the conjunction of total derivations of all atomic formulas in~$\inv$,
and \m{\subst[\D{\inv}]{\D{x}}{\theta}} is the result of substituting the (vectorial) differential equation \m{\hevolve{\D{x}=\theta}} into~\m{\D{\inv}}:
  \begin{align*}
    \D{\inv}
    &~\mequiv~
    \landfold_{(b\sim c) ~\text{in}~ \inv} \left(
      \left(\sum_{i=1}^{n} \Dp[x_i]{b} {\D{x_i}}\right)
      \sim
      \left(\sum_{i=1}^{n} \Dp[x_i]{c} {\D{x_i}}\right)
    \right)
    \\
    \subst[\D{\inv}]{\D{x}}{\theta}
    &~\mequiv~
    \landfold_{(b\sim c) ~\text{in}~ \inv} \left(
      \left(\sum_{i=1}^{n} \Dp[x_i]{b} {\theta_i}\right)
      \sim
      \left(\sum_{i=1}^{n} \Dp[x_i]{c} {\theta_i}\right)
    \right)
  \end{align*}
These conjunctions are over all atomic subformulas $b\sim c$ of $F$ for any \m{{\sim}\in\{=,\geq,>,\leq,<\}}.

\begin{definition}[Differential invariant]
  The (quantifier-free) formula $\inv$ of first-order real arithmetic is a \emph{differential invariant} of the system \m{\hevolvein{\D{x}=\theta}{\ivr}} if rule \irref{diffind} proves \m{\inv \limply \dbox{\hevolvein{\D{x}=\theta}{\ivr}}{\inv}}, i.e., if the formula \m{\lsequent{\ivr}{\subst[\D{\inv}]{\D{x}}{\theta}}} is provable.
\end{definition}
Since proof rule \irref{diffind} is sound, i.e., every provable formula is valid, every differential invariant is an invariant.
Since the first-order real arithmetic formula \m{\ivr\limply\subst[\inv]{\D{x}}{\theta}} is defined by a simple differential-algebraic computation, which can be performed symbolically, it is decidable whether $\inv$ is a differential invariant of a system \m{\hevolvein{\D{x}=\theta}{\ivr}} using quantifier elimination \cite{tarski_decisionalgebra51}.

The big advantage of rule \irref{diffind} is that it can be used to prove properties of differential equations without having to know their solution (solutions may fall outside decidable classes of arithmetic, may not be computable, or may not even exist in closed form).
A differential invariant $\inv$ is an implicit proof certificate for the validity of \m{\inv \limply \dbox{\hevolvein{\D{x}=\theta}{\ivr}}{\inv}}, because it establishes the same truth by a formal proof but does not need an explicit closed-form solution.
Furthermore, because differential equations are simpler than their solutions (which is part of the representational power of differential equations) and differential invariants are defined by differentiation (unlike solutions which are ultimately defined by integration), the differential induction rule \irref{diffind} is computationally attractive.
\begin{example} \label{ex:rotate}
The rotational dynamics \m{\hevolve{\D{x}=y\syssep\D{y}=-x}} is complicated in that the solution involves trigonometric functions, which are generally outside decidable classes of arithmetic.
Yet, we can easily prove interesting properties about it using \irref{diffind} and decidable polynomial arithmetic.
For instance, we can prove the simple property that \m{x^2+y^2\geq p^2} is a differential invariant of the dynamics using the following formal proof:
\renewcommand{\arraystretch}{1.3}%
\begin{sequentdeduction}[array]
 \linfer[diffind]
 {\linfer
   {\linfer[qear]
     {\lclose}
     {\lsequent{}{2xy+2y(-x)\geq0}}
   }
   {\lsequent{}{\subst[(2x\D{x}+2y\D{y}\geq0)]{\D{x}}{y}\subst[\,]{\D{y}}{-x}}}
 }
 {\lsequent{x^2+y^2\geq p^2}{\dbox{\hevolve{\D{x}=y\syssep\D{y}=-x}}{x^2+y^2\geq p^2}}}
\end{sequentdeduction}
Differential invariant proofs of more involved properties of rotational and curved flight dynamics can be found in previous work \cite{DBLP:journals/logcom/Platzer10,DBLP:conf/fm/PlatzerC09,Platzer10}.
\end{example}

\begin{example} \label{ex:damposc}
  Consider the dynamics \m{\hevolve{\D{x}=y\syssep\D{y} = -\omega^2 x - 2d\omega y}} of the damped oscillator with the undamped angular frequency $\omega$ and the damping ratio $d$.
  General symbolic solutions of symbolic initial-value problems for this differential equation can become surprisingly difficult.
  Mathematica, for instance, produces a 6 line equation of exponentials just for one solution.
  A differential invariant proof, instead, is very simple:
  \renewcommand{\arraystretch}{1.4}%
  \begin{sequentdeduction}[array]
  \linfer[diffind]
  {\linfer
    {\linfer[qear]
      {\lclose}
      {\lsequent{\omega\geq0 \land d\geq0} {2\omega^2xy-2\omega^2xy - 4d\omega y^2 \leq0}}
    }
    {\lsequent{\omega\geq0 \land d\geq0} {\subst[(2\omega^2x\D{x} +2y\D{y} \leq0)]{\D{x}}{y}\subst[\,]{\D{y}}{-\omega^2x-2d\omega y}}}
  }
  {\lsequent{\omega^2x^2+y^2\leq c^2} {\dbox{\hevolvein{\D{x}=y\syssep\D{y} = -\omega^2 x - 2d\omega y}{(\omega\geq0 \land d\geq0)}}{\,\omega^2x^2+y^2\leq c^2}}}
  \end{sequentdeduction}
  Observe that rule \irref{diffind} directly makes the evolution domain constraint \m{\omega\geq0 \land d\geq0} available as an assumption in the premise, because the continuous evolution is never allowed to leave it.
  This is useful if we have a strong evolution domain constraint or can make it strong during the proof, which we consider in \rref{sec:diffcut}.
\end{example}
These are simple examples illustrating the power of differential invariants.
Differential invariants make it possible to come up with very simple proofs even for tricky dynamics.
Logical proofs with differential invariants have been the key enabling technique for the successful verification of case studies in air traffic, railway, automotive, and electrical circuit domains.
Yet, if the original formula is not a differential invariant, one has to find the right differential invariant $\inv$ like in \rref{eq:odeb-indirect}, and, in particular, the search space for automatic procedures needs to include differential invariants of the right form.
In this paper, we consider theoretical questions of how to trade-off deductive power with the size of the search space.
We will answer the question which restrictions of differential invariants reduce the deductive power and which do not.

Because the premise of \irref{diffind} is in the (decidable) first-order theory of real arithmetic, it is decidable whether a given  formula $\inv$ is a differential invariant of a given system \m{\hevolvein{\D{x}=\theta}{\ivr}}.
For example, we can easily decide that \m{x^2+y^2\geq p^2} is a differential invariant of the dynamics in \rref{ex:rotate} and that \m{\omega^2x^2+y^2\leq c^2} is a differential invariant of the dynamics in \rref{ex:damposc}, just by deciding the resulting arithmetic \cite{tarski_decisionalgebra51} in the proofs of those examples.

Similarly, when the user specifies a formula $\inv$ with extra parameters $a_1,\dots,a_n$, it is obviously decidable whether there is a choice for those parameters that makes $\inv$ a differential invariant of a given system \m{\hevolvein{\D{x}=\theta}{\ivr}}.
All we need to do to see why this is decidable, is to write appropriate quantifiers in front of the formulas; see \cite{DBLP:conf/cav/PlatzerC08,DBLP:journals/fmsd/PlatzerC09,Platzer10} for formal details.
This is a simple approach, but deceptively simple.
If we choose the wrong template, it still will not work.
Furthermore, the approach has a high computational complexity so that the choice of appropriate templates is crucial.
For instance, a degree 2 template with a single polynomial for \rref{ex:damposc} will result in a formula with at least 23 quantifiers.
A single polynomial degree 4 template, which is the degree of the differential invariant in \rref{ex:damposc}, will result in at least 128 quantifiers.
Quantifier elimination has doubly exponential lower bounds \cite{DBLP:journals/jsc/DavenportH88} and practical quantifier elimination implementations are doubly exponential in the number of variables.
We, thus, need to understand the structure of the search space well to choose the right differential invariants or templates and avoid practically infinite computations that try to solve problems with the wrong templates.

\paragraph{Differential Cuts}
In the case of loops, invariants can be assumed to hold before the loop body in the induction step.
It thus looks tempting to suspect that rule \irref{diffind} could be improved by assuming the differential invariant $\inv$ in the antecedent of the premise:

\centerline{
\begin{calculus}
  \cinferenceRule[diffindunsound|$DI_{??}$]{unsound}
  {\linferenceRule[sequent]
    {\lsequent{\ivr\land\inv}{\subst[\D{\inv}]{\D{x}}{\theta}}}
    {\lsequent{\inv}{\dbox{\hevolvein{\D{x}=\theta}{\ivr}}{\inv}}}
  }{}
\end{calculus}
\quad\text{sound?}
}%
\noindent
After all, we really only care about staying safe when we are still safe.
But implicit properties of differential equations are a subtle business.
Assuming $\inv$ like in rule \irref{diffindunsound} would, in fact, be unsound, as the following simple counterexample shows, which ``proves'' an invalid property using \irref{diffindunsound}:
{\renewcommand{\arraystretch}{1.3}%
\begin{equation}
\begin{minipage}{0.8\columnwidth}
\begin{sequentdeduction}[array]
 \linfer[diffindunsound]
 {\linfer
   {\linfer
     {\lclose[\ast~\text{(unsound)}]}
     {\lsequent{}{-(x-y)^2\geq0\limply-2(x-y)(1-y)\geq0}}
   }
   {\lsequent{}{-(x-y)^2\geq0\limply\subst[(-2(x-y)(\D{x}-\D{y})\geq0)]{\D{x}}{1}\subst[\,]{\D{y}}{y}}}
 }
 {\lsequent{-(x-y)^2\geq0}{\dbox{\hevolve{\D{x}=1\syssep\D{y}=y}}{(-(x-y)^2\geq0)}}}
\end{sequentdeduction}
\end{minipage}
\label{eq:diffind-restrict}
\end{equation}
Especially, it would also be unsound to restrict the premise of \irref{diffind} to the boundary $\partial\inv$ of $\inv$.
The problem causing this unsoundness is circular reasoning and the fact that derivatives are only defined in domains with non-empty interior.
In the beginning, we only know invariant $F$ to hold at a single point (antecedent of conclusion of \irref{diffind}). Now if we assume $F$ to hold in some domain for the induction step (premise of \irref{diffind}), then, initially, we actually only know that $F$ holds in a region with an empty interior.
This is not sufficient to conclude anything based on derivatives, because these are not defined unless there is a neighborhood around the point (with non-empty interior).
Thus, the reasoning in \rref{eq:diffind-restrict} assumes more than it has proven already, which explains the unsoundness, and the unsoundness of the rule \irref{diffindunsound}.
For an analysis under which circumstances extra assumption $\inv$ could be assumed in the premise without losing soundness, we refer to previous work \cite{DBLP:journals/logcom/Platzer10}.
}%

Instead, we have come up with a complementary proof rule for \emph{differential cuts} \cite{DBLP:journals/logcom/Platzer10,DBLP:conf/cav/PlatzerC08,DBLP:journals/fmsd/PlatzerC09} that can be used to strengthen assumptions in a sound way:

\centerline{
\begin{calculus}
  \cinferenceRule[diffcut|$DC$]{differential cut}
  {\linferenceRule[sequent]
    {\lsequent{\inv}{\dbox{\hevolvein{\D{x}=\theta}{\ivr}}{C}}
    &&\lsequent{\inv}{\dbox{\hevolvein{\D{x}=\theta}{(\ivr\land C)}}{\inv}}}
    {\lsequent{\inv}{\dbox{\hevolvein{\D{x}=\theta}{\ivr}}{\inv}}}
  }{}
\end{calculus}
}%
\noindent
The differential cut rule works like a cut, but for differential equations.
In the right premise, rule \irref{diffcut} restricts the system evolution to the subdomain \m{\ivr \land C} of $\ivr$, which changes the system dynamics but is a pseudo-restriction, because the left premise proves that $C$ is an invariant anyhow (e.g. using rule \irref{diffind}).
Note that rule \irref{diffcut} is special in that it changes the dynamics of the system (it adds a constraint to the system evolution domain region), but it is still sound, because this change does not reduce the reachable set.
The benefit of rule \irref{diffcut} is that $C$ will (soundly) be available as an extra assumption for all subsequent \irref{diffind} uses on the right premise (see, e.g., the use of the evolution domain constraint in \rref{ex:damposc}).
In particular, the differential cut rule \irref{diffcut} can be used to strengthen the right premise with more and more auxiliary differential invariants $C$ that will be available as extra assumptions on the right premise, once they have been proven to be differential invariants in the left premise.

Using this differential cut process repeatedly has turned out to be extremely useful in practice and even simplifies the invariant search, because it leads to several simpler properties to find and prove instead of a single complex property \cite{DBLP:conf/cav/PlatzerC08,DBLP:journals/fmsd/PlatzerC09,Platzer10}.
But is it necessary in theory or just convenient in practice?
Should we be searching for proofs without differential cuts or should we always conduct proof search including differential cuts?
One central question that we answer in this paper is whether there is a differential cut elimination theorem showing that \irref{diffcut} is admissible, or whether differential cuts are fundamental, because the addition of rule \irref{diffcut} extends the deductive power.

\paragraph{Prelude}
As a prelude to all subsequent (meta-)proofs, we ignore constant polynomials in differential invariants, because they do not contribute to the proof.
For example, \m{5\geq0} and \m{0=0} are trivially true (do not contribute) and \m{0\geq1} and \m{2=0} are trivially false (not implied by any satisfiable precondition).
We, thus, do not need to consider them for provability purposes, because they do not constitute useful differential invariants.
That is, whenever there is a proof using those trivial differential invariants, there also is a shorter proof not using them.
Likewise, we do not need to consider polynomial conditions like \m{x^2+1\geq0} that are trivially true or trivially false \m{-x^2>0}.

Furthermore, the subsequent proofs will go at an increasing pace.
The first proofs will show elementary steps in detail, while subsequent proofs will proceed with a quicker pace and use the same elementary decompositions as previous proofs.
One of the tricky parts in the proofs is coming up with the right counterexample to an inclusion or proving that there is none.
The other tricky part is to show deductive power separation properties, i.e., that a valid formula cannot be proven using a given subset of the proof rules, which is a proof about infinitely many formal proofs.

This is similar to the fact that, in algebra, it is easier to prove that two structures are isomorphic than to prove that they are not.
That they are isomorphic can be proven by constructing an isomorphism and proving that it satisfies all required properties.
But proving that they are non-isomorphic requires a proof that \emph{every} function between the two structures violates at least one of the properties of an isomorphism.
Those proofs work by identifying a characteristic that is preserved by isomorphisms (e.g., dimension of vector spaces) but that the two structures under consideration do not agree on.
We identify corresponding characteristics for the separation properties of deductive power.

\section{Equivalences of Differential Invariants}

First, we study whether there are equivalence transformations that preserve differential invariance.
Every equivalence transformation that we have for differential invariant properties helps us with structuring the proof search space and also helps simplifying meta-proofs.
\begin{lemma}[Differential invariants and propositional logic] \label{lem:diffind-prop-equiv}
  Differential invariants are invariant under propositional equivalences.
  That is, if \m{F \lbisubjunct G} is an instance of a propositional tautology then $F$ is a differential invariant of \m{\hevolvein{\D{x}=\theta}{\ivr}} if and only if $G$ is.
\end{lemma}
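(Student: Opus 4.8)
The plan is to push differential invariance of $F$ through to a statement that mentions only the atomic comparisons occurring in $F$, so that the Boolean reshuffling relating $F$ and $G$ becomes irrelevant. By definition $F$ is a differential invariant of $\hevolvein{\D{x}=\theta}{\ivr}$ exactly when the first-order arithmetic sequent $\lsequent{\ivr}{\subst[\D{F}]{\D{x}}{\theta}}$ is valid. The structural fact I would lean on is the one recorded just before the definition: since the syntactic derivation $\DD{}$ sends \emph{both} $\land$ and $\lor$ to $\land$ (clause $\der{F\lor G}\mequiv\der{F}\land\der{G}$), the formula $\subst[\D{F}]{\D{x}}{\theta}$ is, up to the order and multiplicity of its conjuncts, nothing but the conjunction of the atomic differential conditions of all atomic subformulas $b\sim c$ occurring in $F$. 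In other words, $\subst[\D{F}]{\D{x}}{\theta}$ forgets the propositional shape of $F$ entirely and depends only on the \emph{set} of atoms of $F$.

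Next I would use that a conjunction is $\ivr$-valid iff each conjunct is: $\lsequent{\ivr}{\subst[\D{F}]{\D{x}}{\theta}}$ is valid iff, for every atomic subformula $b\sim c$ of $F$, the corresponding single differential condition is implied by $\ivr$. This recasts ``$F$ is a differential invariant'' as the purely atom-indexed assertion that every atomic comparison appearing in $F$ has its substituted derivative entailed by $\ivr$, and the identical reduction applies to $G$.

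It then remains to compare the two atom-indexed assertions. For the propositional equivalences that actually matter for restructuring the search space — commutativity, associativity, idempotence, and the two distributivity laws — $F$ and $G$ contain literally the same atomic subformulas, so $\subst[\D{F}]{\D{x}}{\theta}$ and $\subst[\D{G}]{\D{x}}{\theta}$ agree after reordering and deleting repeated conjuncts, and the biconditional follows at once. I expect the genuine obstacle to sit precisely in this atom-matching step for tautologies that carry an \emph{inessential} atom on one side only, e.g. an instance of absorption $p\land(p\lor q)\lbisubjunct p$: the side containing the extra atom $q$ picks up an additional differential conjunct, which can make it fail to be a differential invariant even when the other side succeeds. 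Some care is therefore needed to justify that the two atom sets may be taken equal — either by restricting attention to atom-preserving instances (which is what the intended simplifications are), or by first normalising away atoms that the tautology exhibits as redundant. Establishing that the relevant equivalences do preserve the atom set, and hence preserve the conjunction $\subst[\D{F}]{\D{x}}{\theta}$, is where I would concentrate the argument; everything else is bookkeeping on conjunctions licensed directly by the conjunctive derivation of both connectives.
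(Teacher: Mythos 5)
You take essentially the same route as the paper's own proof: both arguments exploit the fact that the syntactic derivation sends $\land$ and $\lor$ alike to $\land$, so that \m{\subst[\D{F}]{\D{x}}{\theta}} is just the conjunction of the substituted derivatives of the atomic subformulas of $F$, and provability of \m{\lsequent{\ivr}{\subst[\D{F}]{\D{x}}{\theta}}} therefore depends only on the set of atoms of $F$. Where you diverge is instructive: the paper finishes in one line by asserting that ``the set of literals of $G$ is identical to the set of literals of $F$, because the literals do not change by using propositional tautologies,'' and that line is exactly the step you flag as delicate. Your worry is well founded. For an absorption instance, take $F$ to be $(x\geq0)\land((x\geq0)\lor(-x\geq0))$ and $G$ to be $x\geq0$, with dynamics \m{\D{x}=1} and trivial evolution domain $\ltrue$: then $F\lbisubjunct G$ instantiates the propositional tautology $P\land(P\lor Q)\lbisubjunct P$, and $G$ is a differential invariant (its condition $1\geq0$ is valid), but $F$ is not, since its differential condition contains the conjunct $-1\geq0$. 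So the lemma is sound only under the atom-preserving reading you propose, which is also how the paper actually uses it: conversion to negation normal form and to disjunctive or conjunctive normal form (de Morgan, double negation, distributivity, commutativity, associativity, idempotence) all preserve the literal set, and those are the equivalences invoked downstream. In short, your proposal is the paper's proof with its hidden hypothesis made explicit; the gap you identify is not a defect of your argument relative to the paper's, but an imprecision in the lemma as stated that the paper's own one-line justification glosses over.
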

\begin{proof}
  Let $F$ be a differential invariant of a differential equation system \m{\hevolvein{\D{x}=\theta}{\ivr}} and let $G$ be a formula such that \m{F \lbisubjunct G} is an instance of a propositional tautology.
  Then $G$ is a differential invariant of \m{\hevolvein{\D{x}=\theta}{\ivr}}, because of the following formal proof:
\renewcommand{\arraystretch}{1.4}%
\begin{sequentdeduction}[array]
  \linfer%
    {\linfer[diffind]
      {\linfer%
        {\lclose}
        {\lsequent{\ivr}{\subst[\D{G}]{\D{x}}{\theta}}}
      }
      {\lsequent{G}{\dbox{\hevolvein{\D{x}=\theta}{\ivr}}{G}}}
    }
    {\lsequent{F}{\dbox{\hevolvein{\D{x}=\theta}{\ivr}}{F}}}
\end{sequentdeduction}
The bottom proof step is easy to see using \rref{eq:odeb-indirect}, because precondition $F$ implies the new precondition $G$ and postcondition $F$ is implied by the new postcondition $G$ propositionally.
Subgoal \m{\lsequent{\ivr}{\subst[\D{G}]{\D{x}}{\theta}}} is provable, because \m{\lsequent{\ivr}{\subst[\D{F}]{\D{x}}{\theta}}} is provable and $\D{G}$ is defined as a conjunction over all literals of $G$.
The set of literals of $G$ is identical to the set of literals of $F$, because the literals do not change by using propositional tautologies.
Furthermore, we assumed a propositionally complete base calculus (e.g., \rref{app:proof-rules}).
\qedhere
\end{proof}
In subsequent proofs, we can use propositional equivalence transformations by \rref{lem:diffind-prop-equiv}.
In the following, we will also implicitly use equivalence reasoning for pre- and postconditions as we have done in \rref{lem:diffind-prop-equiv}.
Because of \rref{lem:diffind-prop-equiv}, we can, without loss of generality, work with arbitrary propositional normal forms for proof search.

Not all logical equivalence transformations carry over to differential invariants.
Differential invariance is not necessarily preserved under real arithmetic equivalence transformations.

\begin{lemma}[Differential invariants and arithmetic] \label{lem:diffind-nequiv}
  Differential invariants are \emph{not} invariant under equivalences of real arithmetic.
  That is, if \m{F \lbisubjunct G} is an instance of a first-order real arithmetic tautology then $F$ may be a differential invariant of \m{\hevolvein{\D{x}=\theta}{\ivr}} yet $G$ may not.
\end{lemma}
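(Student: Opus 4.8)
The statement is a non-preservation claim, so the plan is to exhibit a single explicit counterexample: one differential equation together with two formulas $F$ and $G$ that are equivalent over the reals yet differ in differential invariance. The natural source of such a gap is that the syntactic derivation operator $\D{}$ acts on the \emph{syntactic form} of a formula --- distributing conjunctively over its atoms and applying the Leibniz and sum rules to the polynomials inside each atom --- rather than on the semantic region the formula denotes. Hence two formulas with the same real-arithmetic solution set can produce genuinely different differential forms $\subst[\D{F}]{\D{x}}{\theta}$ and $\subst[\D{G}]{\D{x}}{\theta}$, only one of which need be provable in real arithmetic. Identifying this distinction is the main conceptual point; once the right pair $F,G$ is chosen, the rest is immediate arithmetic.

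For the concrete witness I would reuse the rotational dynamics \m{\hevolve{\D{x}=y\syssep\D{y}=-x}} of \rref{ex:rotate} (with trivial evolution domain $\ivr$), and take \m{F \,\mequiv\, x^2+y^2 = 0} together with \m{G \,\mequiv\, x = 0 \land y = 0}. First I would observe that \m{F \lbisubjunct G} is a first-order real-arithmetic tautology, since both characterize exactly the origin. Next I would show that $F$ is a differential invariant: computing $\subst[\D{F}]{\D{x}}{\theta}$ gives \m{2xy + 2y(-x) = 0}, which simplifies to the valid formula \m{0=0}, so the premise of \irref{diffind} closes by \irref{qear}, exactly as in \rref{ex:rotate}.

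The decisive step is to show that $G$ is \emph{not} a differential invariant. By the conjunctive definition of $\D{}$ on formulas, \m{\D{G}} is \m{\D{x} = 0 \land \D{y} = 0}, and substituting the differential equation yields the premise \m{y = 0 \land -x = 0}. This formula is not valid --- it fails, e.g., at the point \m{x=y=1} --- so \irref{diffind} cannot close, and by the definition of differential invariant $G$ is not one. This establishes the claim. I do not expect a genuine obstacle here, only the need to choose the counterexample carefully: the point worth stressing is that $F$ and $G$ denote the \emph{same} invariant set and that \m{G \limply \dbox{\hevolve{\D{x}=y\syssep\D{y}=-x}}{G}} is in fact valid, so the failure is purely one of \irref{diffind}-provability. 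Semantic equivalence of the formulas does not transfer to the syntactic induction principle, because $\D{}$ sees the polynomial presentation rather than the region it defines.
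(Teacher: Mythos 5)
Your proof is correct and takes essentially the same approach as the paper's: exhibit a single counterexample consisting of one differential equation and two formulas equivalent in first-order real arithmetic, then check by direct computation that one satisfies the \irref{diffind} premise while the other does not. The paper's witness is \m{\hevolve{\D{x}=-x}} with \m{x^2\leq 5^2} versus \m{-5\leq x\land x\leq 5}; yours (rotational dynamics, \m{x^2+y^2=0} versus \m{x=0\land y=0}) works just as well, and your closing remark that the failure is one of \irref{diffind}-provability rather than of semantic invariance applies equally to the paper's pair.
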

\proof
\renewcommand{\arraystretch}{1.2}%
\def\tmpa{5}%
There are two formulas that are equivalent over first-order real arithmetic but, for the same differential equation, one of them is a differential invariant, the other one is not (because their differential structures differ).
Since $\tmpa\geq0$, the formula \m{x^2\leq\tmpa^2} is equivalent to \m{-\tmpa\leq x\land x\leq \tmpa} in first-order real arithmetic.
Nevertheless, \m{x^2\leq\tmpa^2} is a differential invariant of \m{\hevolve{\D{x}=-x}} by the following formal proof:
\begin{sequentdeduction}[array]
  \linfer[diffind]
    {\linfer%
      {\linfer[qear]
        {\lclose}
        {\lsequent{}{-2x^2\leq0}}
      }
      {\lsequent{}{\subst[(2x\D{x}\leq0)]{\D{x}}{-x}}}
    }
    {\lsequent{x^2\leq\tmpa^2}{\dbox{\hevolve{\D{x}=-x}}{x^2\leq\tmpa^2}}}
\end{sequentdeduction}
but \m{-\tmpa\leq x \land x\leq\tmpa} is not a differential invariant of \m{\hevolve{\D{x}=-x}}:
\begin{sequentdeduction}[array]
  \linfer[diffind]
    {\linfer%
      {\linfer%
        {\lclose[\text{not valid}]}
        {\lsequent{}{0\leq-x\land-x\leq0}}
      }
      {\lsequent{}{\subst[(0\leq\D{x}\land\D{x}\leq0)]{\D{x}}{-x}}}
    }
    {\lsequent{-\tmpa\leq x\land
        x\leq\tmpa}{\dbox{\hevolve{\D{x}=-x}}{(-\tmpa\leq x \land
          x\leq\tmpa)}}}\rlap{\hbox to 103 pt{\hfill\qEd}}
\end{sequentdeduction}\smallskip

\noindent When we want to prove the property in the proof of \rref{lem:diffind-nequiv}, we need to use the principle \rref{eq:odeb-indirect} with the differential invariant \m{\inv\mequiv x^2\leq5^2} and cannot use \m{-5\leq x \land x\leq5}.

By \rref{lem:diffind-nequiv}, we cannot just use arbitrary equivalences when investigating differential invariance, but have to be more careful.
Not just the \emph{elementary real arithmetical equivalence} of having the same set of satisfying assignments matters, but also the differential structures need to be compatible.
Some equivalence transformations that preserve the solutions still destroy the differential structure.
It is the equivalence of \emph{real differential structures} that matters.
Recall that differential structures are defined locally in terms of the behavior in neighborhoods of a point, not the point itself.

\rref{lem:diffind-nequiv} illustrates a notable point about differential equations.
Many different formulas characterize the same set of satisfying assignments.
But not all of them have the same differential structure.
Quadratic polynomials have inherently different differential structure than linear polynomials even when they have the same set of solutions over the reals.
The differential structure is a more fine-grained information.
This is similar to the fact that two elementary equivalent models of first-order logic can still be non-isomorphic.
Both the set of satisfying assignments and the differential structure matter for differential invariance.
In particular, there are many formulas with the same solutions but different differential structures.
The formulas \m{x^2\geq0} and 
\m{x^6+x^4-16x^3+97x^2-252x+262\geq0} %
have the same solutions (all of $\reals$), but very different differential structure; see \rref{fig:diffstructure}.
The first two rows in \rref{fig:diffstructure} correspond to the polynomials from the latter two cases.
The third row is a structurally different degree 6 polynomial with again the same set of solutions ($\reals$) but a rather different differential structure.
The differential structure also depends on what value $\D{x}$ assumes according to the differential equation.
\begin{figure}[tb]
  \centering
  \includegraphics[height=4cm]{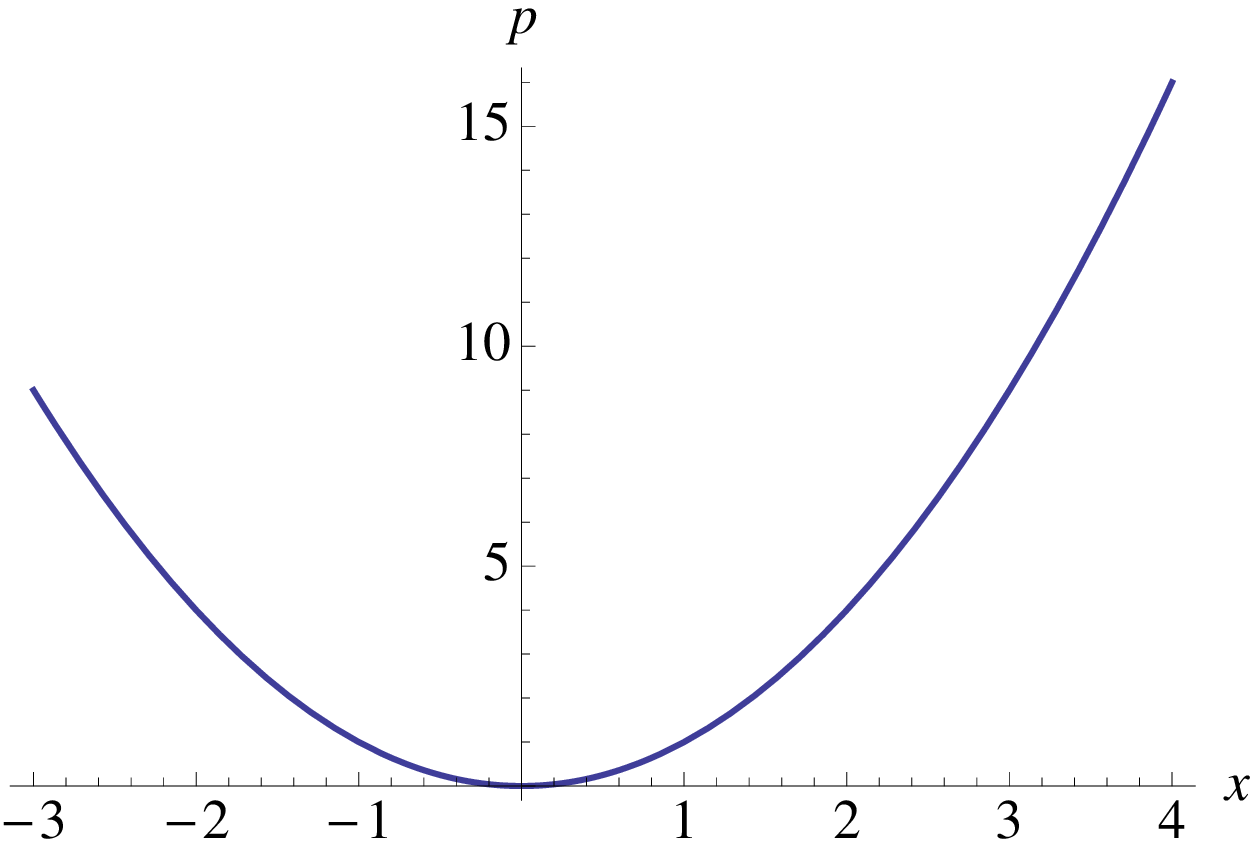}
  \qquad
  \includegraphics[height=4cm]{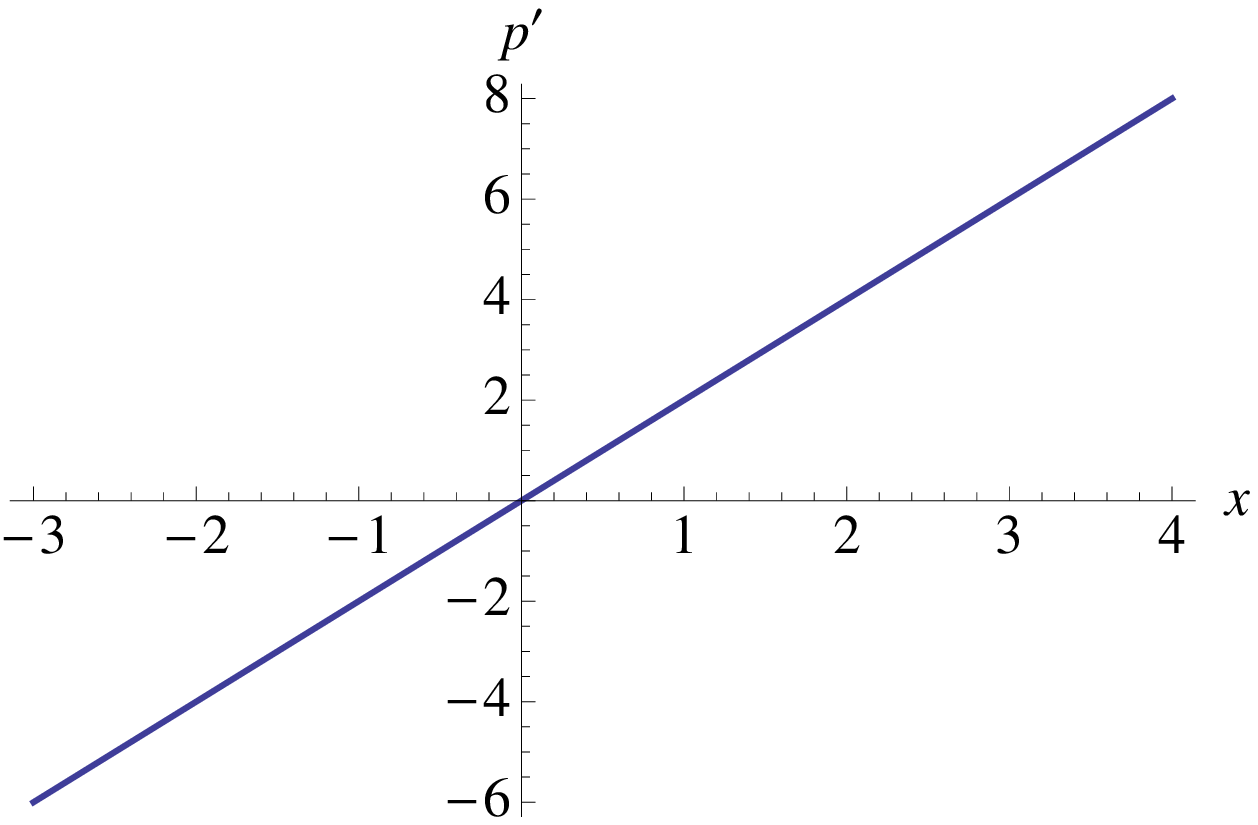}

  \bigskip

  \includegraphics[height=4cm]{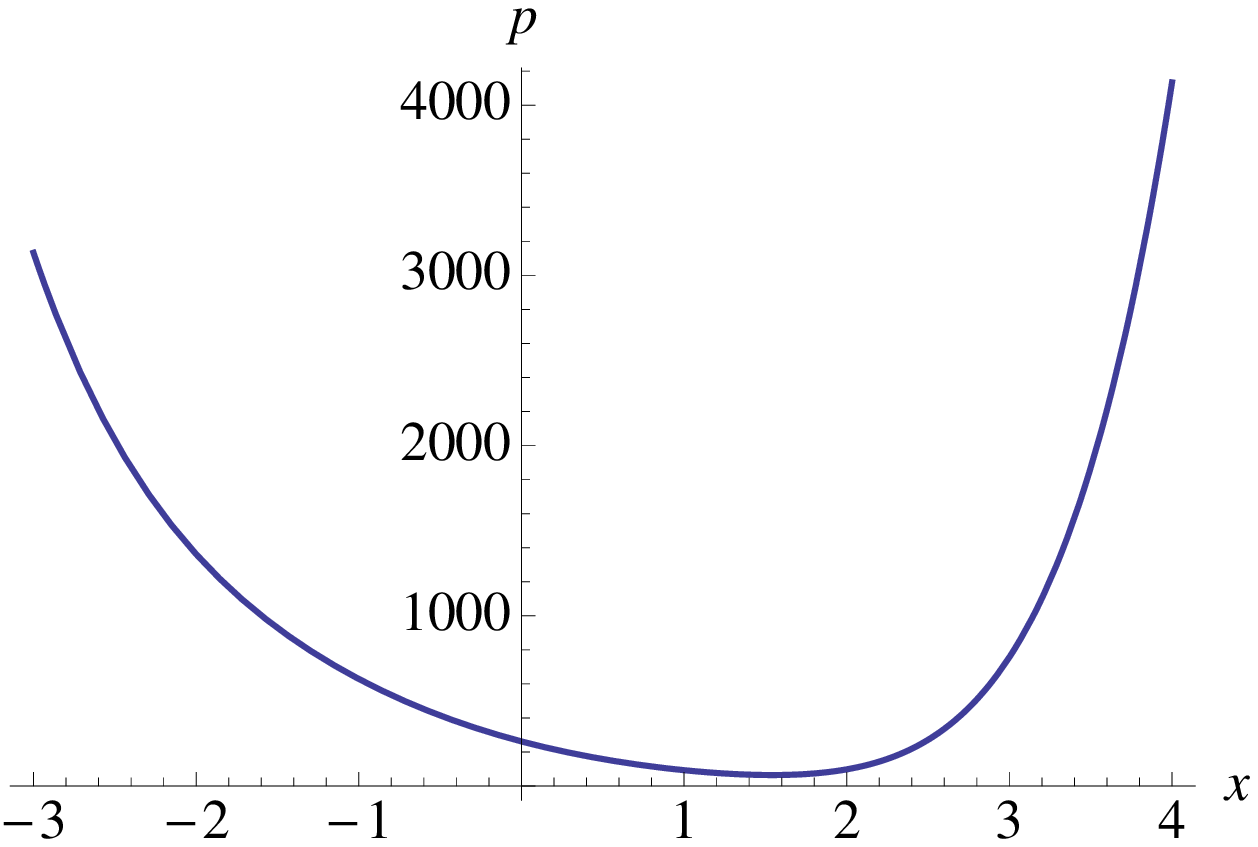}
  \qquad
  \includegraphics[height=4cm]{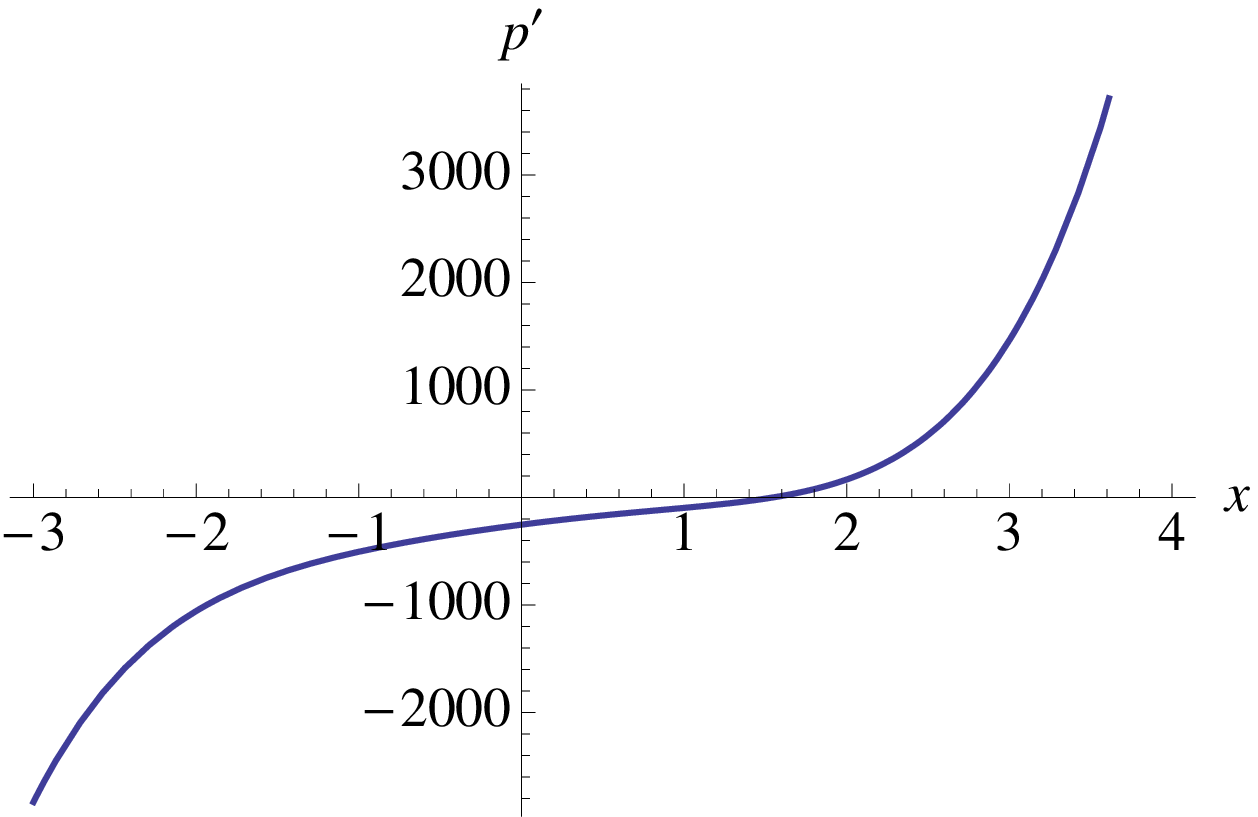}

  \bigskip

  \includegraphics[height=4cm]{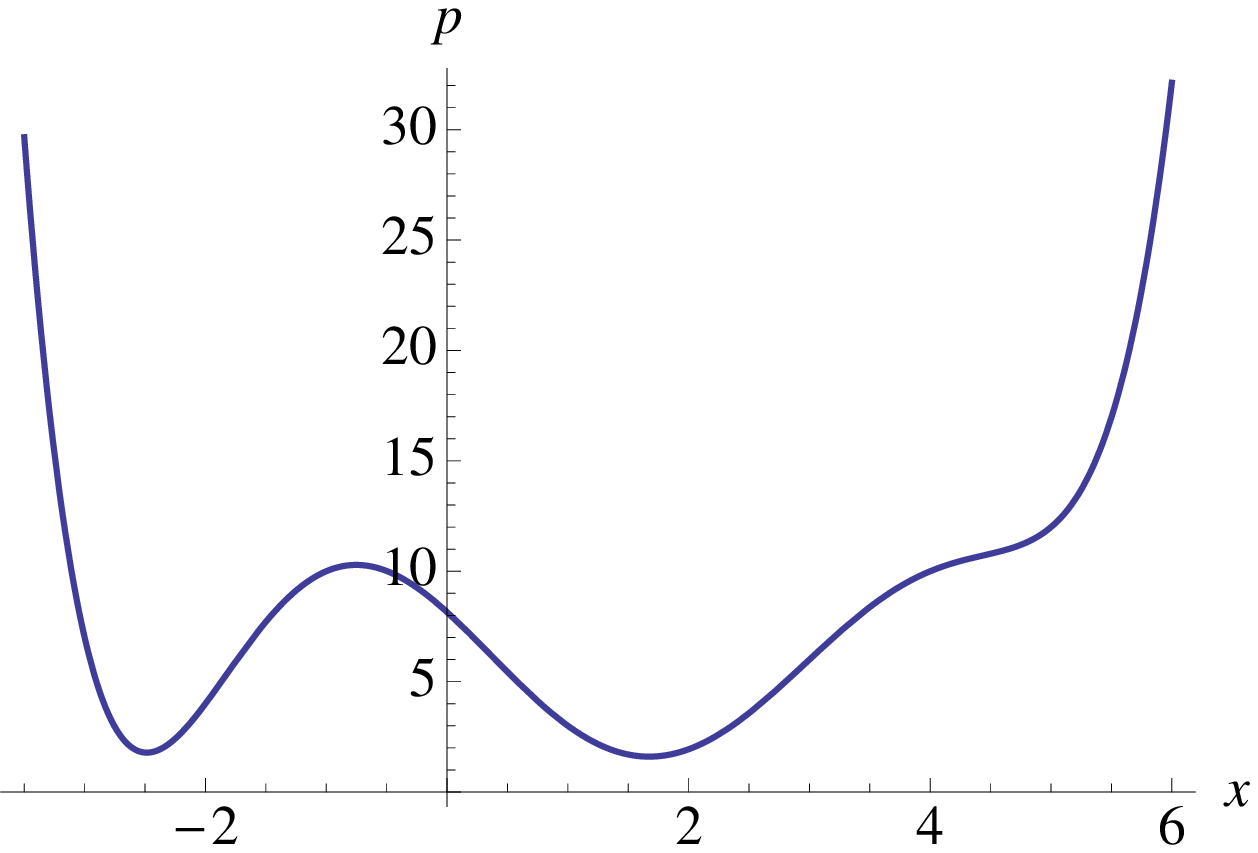}
  \qquad
  \includegraphics[height=4cm]{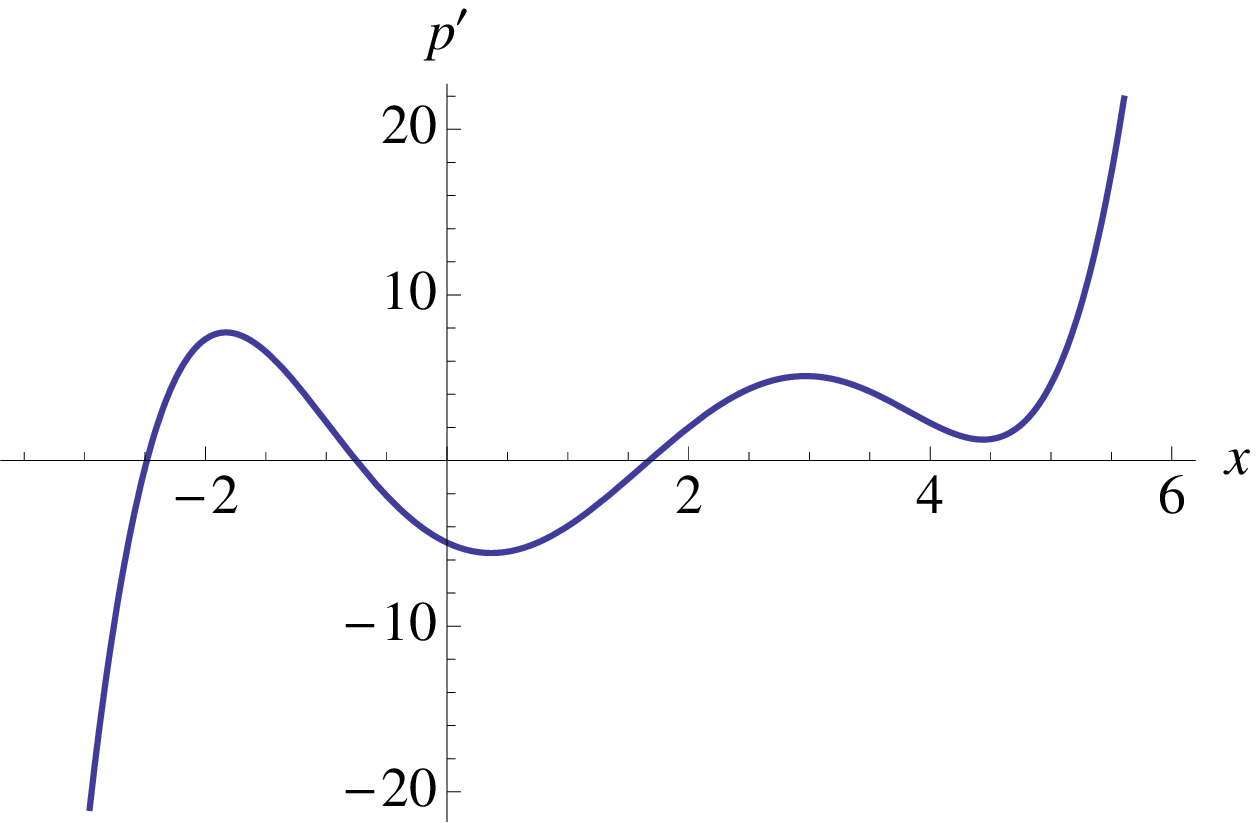}
  \caption{Equivalent solutions ($p\geq0$ on the left) with different differential structure ($\D{p}$ plotted on the right)}
  \label{fig:diffstructure}
\end{figure}
\rref{fig:diffstructure} illustrates that $\D{p}$ alone can already have a very different characteristic even if the respective sets of satisfying assignments of \m{p\geq0} are identical.

We can, however, always normalize all atomic subformulas to have right-hand side 0, that is, of the form \m{p=0, p\geq0}, or \m{p>0}.
For instance, \m{p\leq q} is a differential invariant if and only if \m{q-p\geq0} is, because \m{p\leq q} is equivalent (in first-order real arithmetic) to \m{q-p\geq0} and, moreover, for any variable $x$ and term $\theta$, \m{\subst[(\D{p}\leq\D{q})]{\D{x}}{\theta}} is equivalent to \m{\subst[(\D{q}-\D{p}\geq0)]{\D{x}}{\theta}} in first-order real arithmetic.

\section{Relations of Differential Invariant Classes} \label{sec:chart}

We study the relations of classes of differential invariants in terms of their relative deductive power.
As a basis, we consider a propositional sequent calculus with logical cuts (which simplify glueing derivations together) and real-closed field arithmetic (we denote all uses by proof rule \irref{qear}); see \rref{app:proof-rules}.
By $\DI$ we denote the proof calculus that, in addition, has general differential invariants (rule \irref{diffind} with arbitrary quantifier-free first-order formula $\inv$) but no differential cuts (rule \irref{diffcut}).
For a set \m{\Omega \subseteq \{\geq,>,=,\land,\lor\}} of operators, we denote by \m{\DI[\Omega]} the proof calculus where the differential invariant $\inv$ in rule \irref{diffind} is further restricted to the set of formulas that uses only the operators in $\Omega$.
For example, \m{\DIp[=]} is the proof calculus that allows only and/or-combinations of equations to be used as differential invariants.
Likewise, \m{\DI[\geq]} is the proof calculus that only allows atomic weak inequalities \m{p\geq q} to be used as differential invariants.

We consider several classes of differential invariants and study their relations.
If $\mathcal{A}$ and $\mathcal{B}$ are two classes of differential invariants, we write \m{\mathcal{A} \leq \mathcal{B}} if all properties provable using differential invariants from $\mathcal{A}$ are also provable using differential invariants from $\mathcal{B}$.
We write \m{\mathcal{A} \not\leq \mathcal{B}} otherwise, i.e., when there is a valid property that can only be proven using differential invariants of \m{\mathcal{A}\setminus\mathcal{B}}.
We write \m{\mathcal{A} \mequiv \mathcal{B}} if \m{\mathcal{A} \leq \mathcal{B}} and \m{\mathcal{B} \leq \mathcal{A}}.
We write \m{\mathcal{A} < \mathcal{B}} if \m{\mathcal{A} \leq \mathcal{B}} and \m{\mathcal{B} \not\leq \mathcal{A}}.
Classes $\mathcal{A}$ and $\mathcal{B}$ are incomparable if \m{\mathcal{A} \not\leq \mathcal{B}} and \m{\mathcal{B} \not\leq \mathcal{A}}.

In this section, we analyze the proof calculus \DI with differential invariants but no differential cuts (\irref{diffcut}) or auxiliary variables, which we discuss in Sections~\ref{sec:diffcut} and~\ref{sec:diffaux}, respectively.
Our findings about classes of differential invariants are summarized in \rref{fig:diffind-chart} on p.\,\pageref{fig:diffind-chart}.
We prove these relations in the remainder of this section.
These results hold in two ways.
The relations summarized in \rref{fig:diffind-chart} hold for single uses of differential invariants in any proof calculus and they also hold for arbitrary proofs with arbitrarily many uses of differential invariants in the proof calculus with the rules \irref{diffind}, \irref{diffweak}, \irref{genb} and the propositional and real arithmetic rules that we recall in \rref{app:proof-rules}.

First we recall a simple result from previous work showing that propositional operators do not change the deductive power of differential invariants in the purely equational case.
We have proven the following result in previous work; see \cite[Proposition~1]{DBLP:journals/logcom/Platzer10}.
We repeat a variation of the proof here, because it is instructive to understand what needs to be proved about the algebraic and differential structure of differential invariants.

\begin{proposition}[Equational deductive power \cite{DBLP:journals/logcom/Platzer10}] \label{prop:EDP}
  The deductive power of differential induction with atomic equations is identical to the deductive power of differential induction with propositional combinations of polynomial equations:
  That is, each formula is provable with propositional combinations of equations as differential invariants iff it is provable with only atomic equations as differential invariants:
  \[ \DI[=] \mequiv \DIp[=] \]
\end{proposition}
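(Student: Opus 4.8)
The plan is to prove both inclusions. The inclusion $\DI[=] \leq \DIp[=]$ is trivial, since every atomic equation is already a propositional combination of equations (the degenerate one), so any proof in $\DI[=]$ is literally a proof in $\DIp[=]$. The substance is the reverse inclusion $\DIp[=] \leq \DI[=]$: I must show that whenever a formula is provable using a propositional combination of equations as a differential invariant, it is also provable using a \emph{single} atomic equation. The key idea is a purely algebraic normalization. Given a propositional combination $F$ of equations $p_i = 0$, I want to collapse it to one equation $q = 0$ that has (i) the same set of satisfying assignments and (ii) a compatible differential structure, so that the \irref{diffind} premise for $q=0$ is provable exactly when it is for $F$.

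The central construction is the sum-of-squares encoding. First I would use \rref{lem:diffind-prop-equiv} to put $F$ into a disjunctive normal form over equations without changing differential-invariance status, and normalize each atom to the form $p_i = 0$ (right-hand side $0$), as the excerpt already permits. The crucial observation is that over the reals a conjunction $p_1 = 0 \land \dots \land p_k = 0$ is equivalent to the single equation $p_1^2 + \dots + p_k^2 = 0$, and a disjunction $p_1 = 0 \lor \dots \lor p_k = 0$ is equivalent to the single equation $p_1 \cdot p_2 \cdots p_k = 0$. Iterating these two encodings over the propositional structure of $F$ produces one polynomial $q$ with $\{q = 0\} = \{F\}$ as sets of satisfying assignments. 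The real work is then to check that differential invariance transfers: I must verify that $\subst[\D{(q=0)}]{\D{x}}{\theta}$, i.e. $\subst[\D{q}]{\D{x}}{\theta} = 0$, is provable in real arithmetic from $\ivr$ precisely when the conjunctive formula $\subst[\D{F}]{\D{x}}{\theta}$ is. For the sum-of-squares step, the Leibniz rule \rref{eq:DLeibniz} gives $\D{(p_i^2)} = 2 p_i \D{p_i}$, so $\D{(\sum p_i^2)} = \sum 2 p_i \D{p_i}$, and on the invariant set where all $p_i = 0$ this derivative vanishes; conversely the premise for $F$ asserts exactly each $\D{p_i} = 0$ under $\ivr$, from which $\sum 2 p_i \D{p_i} = 0$ follows arithmetically. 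The product step is handled analogously through $\D{(\prod p_i)} = \sum_j (\prod_{i\neq j} p_i)\D{p_j}$.

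The main obstacle I anticipate is the bookkeeping to show that the \irref{diffind} premise for the single equation $q=0$ is \emph{provable} (not merely valid), and in particular that it is provable from $\ivr$ alone rather than from $\ivr \land (q=0)$ --- one must be careful here precisely because \rref{eq:diffind-restrict} shows that assuming the invariant in the premise is unsound, so I cannot legitimately use $q=0$ as a hypothesis while differentiating. The resolution is that the differential premise $\subst[\D{q}]{\D{x}}{\theta}=0$ must be an \irref{qear}-consequence of $\ivr$ together with the differential premises for the $p_i$, using the algebraic identities above, and I would verify each direction of the implication between $\subst[\D{F}]{\D{x}}{\theta}$ and $\subst[\D{q}]{\D{x}}{\theta}=0$ as a first-order real-arithmetic tautology. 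A secondary subtlety, worth remarking on in the statement's spirit, is that this construction genuinely increases polynomial degree (squaring and multiplying atoms), which is consistent with the excerpt's comment that equational differential invariants need no Boolean operators but pay for it in degree. Assembling these equivalences into a single \irref{diffind} derivation for $q=0$, glued to the original proof by \rref{lem:diffind-prop-equiv} and the equivalence reasoning for pre- and postconditions, completes the reverse inclusion and hence the claimed equivalence $\DI[=] \mequiv \DIp[=]$.
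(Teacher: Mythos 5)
Your construction is essentially the paper's own proof: the nontrivial inclusion \m{\DIp[=] \leq \DI[=]} is established there by exactly the same two encodings, disjunctions becoming products and conjunctions becoming sums of squares, with the Leibniz rule showing that the differential premise of the single equation follows in real arithmetic from the (conjunctive) differential premise of the propositional combination; your DNF preprocessing and the $k$-ary versions of the encodings are inessential variants of the paper's binary inductive reduction.

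One point needs correction, though it does not damage the result. You claim that \m{\lsequent{\ivr}{\subst[(\D{q}=0)]{\D{x}}{\theta}}} is provable ``precisely when'' \m{\lsequent{\ivr}{\subst[\D{F}]{\D{x}}{\theta}}} is, and you plan to ``verify each direction of the implication'' as a first-order real-arithmetic tautology. Only the direction you actually need holds: \m{\subst[\D{F}]{\D{x}}{\theta}} implies \m{\subst[(\D{q}=0)]{\D{x}}{\theta}} pointwise, because every summand \m{2p_i\,\subst[\D{p_i}]{\D{x}}{\theta}} (respectively every term of the product rule) vanishes as soon as each \m{\subst[\D{p_i}]{\D{x}}{\theta}} does. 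The converse is false in general: for \m{F \mequiv (x=0 \land y=0)}, \m{q = x^2+y^2}, and the rotational dynamics \m{\hevolve{\D{x}=y\syssep\D{y}=-x}} of \rref{ex:rotate}, the premise for \m{q=0} is \m{2xy+2y(-x)=0}, which is valid, whereas the premise for \m{F} is \m{y=0\land -x=0}, which is not. (So the single equation can be a strictly stronger differential invariant than the conjunction it encodes.) Since the inclusion \m{\DIp[=]\leq\DI[=]} only requires the first direction --- a provable premise for \m{F} yields a provable premise for \m{q=0}, and the pre-/postcondition glue uses only the real-arithmetic equivalence of \m{F} and \m{q=0} --- your proof goes through once the biconditional claim is dropped; the paper accordingly states only the one implication.
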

\proof
  Let \m{\D{x}=\theta} be the (vectorial) differential equation to consider.
  We show that every differential invariant that is a propositional combination~$\inv$ of polynomial equations is expressible as a single atomic polynomial equation (the converse inclusion is obvious).
  We can assume~$\inv$ to be in negation normal form by \rref{lem:diffind-prop-equiv} (recall that negations are resolved and $\neq$ does not appear).
  Then we reduce~$\inv$ inductively to a single equation using the following transformations:
  \begin{iteMize}{$\bullet$}
   \item If~$\inv$ is of the form~\m{p_1=p_2\lor q_1=q_2},
    then~$\inv$ is equivalent to the single equation\\
    \m{(p_1-p_2)(q_1-q_2)=0}.
    Furthermore, \m{\subst[\D{\inv}]{\D{x}}{\theta} \mequiv \subst[(\D{p_1}=\D{p_2} \land \D{q_1}=\D{q_2})]{\D{x}}{\theta}} directly implies
    \[\subst[\big(\D{((p_1-p_2)(q_1-q_2))}=0\big)]{\D{x}}{\theta} \mequiv \subst[\big((\D{p_1}-\D{p_2})(q_1-q_2) + (p_1-p_2)(\D{q_1}-\D{q_2})=0\big)]{\D{x}}{\theta}\]
   \item If~$\inv$ is of the form~\m{p_1=p_2\land q_1=q_2},
    then~$\inv$ is equivalent to the single equation \\
    \m{(p_1-p_2)^2+(q_1-q_2)^2=0}.
    Furthermore, \m{\subst[\D{\inv}]{\D{x}}{\theta} \mequiv \subst[\big(\D{p_1}=\D{p_2} \land \D{q_1}=\D{q_2}\big)]{\D{x}}{\theta}} implies
    \[\subst[\big(\D{\left((p_1-p_2)^2+(q_1-q_2)^2\right)} {=} 0\big)]{\D{x}}{\theta} \mequiv
    \subst[\big(2(p_1-p_2)(\D{p_1}-\D{p_2}) + 2(q_1-q_2)(\D{q_1}-\D{q_2})=0\big)]{\D{x}}{\theta}\eqno{\qEd}
    \]\medskip
  \end{iteMize}

\noindent Note that the polynomial degree increases quadratically by the reduction in \rref{prop:EDP}, but, as a trade-off, the propositional structure simplifies.
Consequently, differential invariant search for the equational case can either exploit propositional structure with lower degree polynomials or suppress the propositional structure at the expense of higher degrees.
Focusing exclusively on differential invariants with equations, however, reduces the deductive power.
For instance, the approach by Sankaranarayanan et al. \cite{DBLP:journals/fmsd/SankaranarayananSM08} uses only equations and does not support inequalities.

\begin{proposition}[Equational incompleteness] \label{prop:eq}
  The deductive power of differential induction with equational formulas is strictly less than the deductive power of general differential induction, because some inequalities cannot be proven with equations.
  \begin{align*}
   \DI[=] \mequiv \DIp[=] &< \DI \\
   \DI[\geq]  &\not\leq \DI[=] \mequiv \DIp[=] \\
   \DI[>]  &\not\leq \DI[=] \mequiv \DIp[=]
  \end{align*}
\end{proposition}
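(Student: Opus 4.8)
The three inclusions \m{\DI[=]\leq\DI}, \m{\DI[\geq]\leq\DI} and \m{\DI[>]\leq\DI} are immediate, since equations and (weak or strict) inequalities are special quantifier-free first-order formulas, and \m{\DI[=]\mequiv\DIp[=]} is \rref{prop:EDP}. The real content is the two separations \m{\DI[\geq]\not\leq\DIp[=]} and \m{\DI[>]\not\leq\DIp[=]}: I need a valid property that rule \irref{diffind} proves with an inequality but that no proof using only equational differential invariants can establish. Exhibiting the positive direction is easy; the hard part is the negative direction, which is a claim about infinitely many candidate proofs, so I will isolate a structural characteristic that every \m{\DIp[=]} proof preserves but that the chosen property violates.

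For the separating witnesses I take the trivial dynamics \m{\hevolve{\D{x}=1}} together with the property \m{x\geq0\limply\dbox{\hevolve{\D{x}=1}}{x\geq0}} for the weak case and \m{x>0\limply\dbox{\hevolve{\D{x}=1}}{x>0}} for the strict case. Each is proved by a single application of \irref{diffind} with the postcondition itself as differential invariant: the induction premise after substituting \m{\D{x}=1} is \m{1\geq0} (respectively \m{1>0}), which closes by \irref{qear}. Hence the first property lies in \m{\DI[\geq]} and the second in \m{\DI[>]}.

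The characteristic I propose is invariance of \m{\DIp[=]}-provability under time reversal, i.e.\ under the transformation that replaces every differential equation \m{\D{x}=\theta} occurring anywhere in a formula or proof by \m{\D{x}=-\theta}. I would prove, by induction on the proof tree, that if a formula is provable in \m{\DIp[=]} then its time-reversed variant is too. The only rule that inspects \m{\theta} is \irref{diffind}, and for an equational (or, by \rref{prop:EDP}, propositionally-combined-equational) invariant \m{p=0} its induction premise is equivalent to \m{\sum_{i}\frac{\partial p}{\partial x_i}\theta_i=0}; replacing \m{\theta} by \m{-\theta} merely negates the left-hand side, and \m{-q=0} is equivalent to \m{q=0} in real arithmetic, so the premise is preserved. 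The rules \irref{diffweak} and \irref{genb} have premises that do not mention \m{\theta} at all, and the propositional, cut and \irref{qear} rules act on the modal formulas atomically, so applying the reversal uniformly throughout leaves the tree a valid \m{\DIp[=]} proof of the reversed conclusion. This is precisely the step where equations are special: for an inequality \m{p\geq0} the premise \m{\sum_{i}\frac{\partial p}{\partial x_i}\theta_i\geq0} is \emph{not} preserved when \m{\theta} flips sign.

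Finally I combine the two ingredients. Time-reversing the witnesses yields \m{x\geq0\limply\dbox{\hevolve{\D{x}=-1}}{x\geq0}} and \m{x>0\limply\dbox{\hevolve{\D{x}=-1}}{x>0}}, both of which are \emph{invalid}: from any admissible initial value the solution \m{x(t)=x_0-t} eventually becomes negative and leaves the region. Since \irref{diffind}, \irref{diffweak} and \irref{genb} are sound, a \m{\DIp[=]} proof proves only valid formulas; so if either witness were provable in \m{\DIp[=]}, its reversal would be provable and hence valid, a contradiction. Therefore neither witness is provable in \m{\DIp[=]}, giving \m{\DI[\geq]\not\leq\DIp[=]} and \m{\DI[>]\not\leq\DIp[=]}, and together with \m{\DI[\geq]\leq\DI} and \rref{prop:EDP} also \m{\DI[=]\mequiv\DIp[=]<\DI}. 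The main obstacle is making the time-reversal lemma airtight: one must check that reversal is well defined as a uniform operation on whole proofs and that no rule other than \irref{diffind} constrains \m{\theta}, and confirm the sign-symmetry of the equational premise, which is the crux distinguishing equations from inequalities.
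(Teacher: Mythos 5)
Your proof is correct, and it takes a genuinely different route from the paper's. The paper's separation is semantic and measure-theoretic: a nontrivial equational invariant \m{p=0} carves out a set of Lebesgue measure zero (a univariate polynomial vanishing on all of \m{x\geq0} is the zero polynomial), so it cannot coincide with the positive-measure region \m{x\geq0}; \rref{prop:EDP} reduces Boolean combinations of equations to single equations, and \rref{lem:DIbase} is what extends the argument from a single use of \irref{diffind} to arbitrary \DIp[=] proofs, since a finite union of measure-zero sets still has measure zero. Your characteristic is instead proof-theoretic: time-reversal symmetry. Since \m{\subst[\D{F}]{\D{x}}{\theta}} is linear in \m{\theta} and \m{-q=0} is equivalent to \m{q=0}, every rule of \DIp[=] is stable under \m{\theta\mapsto-\theta}, so provability is preserved under reversal, while soundness together with the invalidity of \m{x\geq0 \limply \dbox{\hevolve{\D{x}=-1}}{x\geq0}} gives the contradiction. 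Your route dispenses with both \rref{prop:EDP} (each equational atom's induction premise flips sign, so Boolean combinations of equations are handled directly) and \rref{lem:DIbase} (arbitrary proofs with arbitrarily many uses of \irref{diffind}, cuts, \irref{genb}, and \irref{diffweak} are handled by a transparent induction on the proof tree), which is a real simplification. The price is the bookkeeping you flag yourself: at \irref{diffind} steps the reversed premise is only equivalent in real arithmetic, not syntactically identical, to the original first-order premise, so the subproof must be re-closed via \irref{qear}; and the propositional, \irref{cut}, and \irref{qear} steps must treat modalities as atoms, which is legitimate here because postconditions in this fragment are first-order, so reversal is a well-defined involution on formulas that preserves instance-hood of those rules. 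Both points check out. One comparative remark: the paper's measure and topology characteristics are reused for several of the other separations (\rref{prop:gt}, \rref{prop:atom-p}), whereas time reversal exploits a symmetry specific to equations---but for exactly that reason it isolates very cleanly why equations, unlike inequalities, cannot prove these properties.
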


\proof
\def\tmpa{a}%
Consider any term $\tmpa>0$ (e.g., $5$ or $x^2+1$ or $x^2+x^4+2$).
The following formula is provable by differential induction with the weak inequality $x\geq0$:
\begin{sequentdeduction}[array]
  \linfer[diffind]
    {\linfer[qear]
      {\lclose}
      {\lsequent{}{\tmpa\geq0}}
    }
    {\lsequent{x\geq0}{\dbox{\hevolve{\D{x}=\tmpa}}{x\geq0}}}
\end{sequentdeduction}
It is not provable with an equational differential invariant.
An invariant of the form \m{p=0} has (Lebesgue) measure zero (except when $p$ is the 0 polynomial, where \m{p=0} is trivially equivalent to $\ltrue$ and then useless for a proof, because it provides no interesting information) and, thus, cannot describe the region \m{x\geq0} of non-zero (Lebesgue) measure, in which the system starts (precondition) and stays (postcondition).
More formally, any univariate polynomial $p$ that is zero on \m{x\geq0} is the zero polynomial and, thus, \m{p=0} cannot be equivalent to the half space \m{x\geq0}.
By the equational deductive power theorem~\ref{prop:EDP}, the above formula then is not provable with any Boolean combination of equations as differential invariant either.

We extend this argument to arbitrary proofs with an arbitrary number of uses of \irref{diffind}.
By \rref{lem:DIbase}, every \DI[=] proof of the above property uses a non-zero finite number of (nontrivial) equational differential invariants $p_i=0$, such that 
\m{\entails x\geq0 \limply \lorfold_i p_i=0}.
This is a contradiction, because finite unions of sets of measure zero have measure zero, yet \m{x\geq0} has non-zero measure.
Recall that the zero polynomial has non-zero measure but does not help (the second condition \m{0=0\limply x\geq0} of \rref{lem:DIbase} is not valid).

Similarly, the following formula is provable by differential induction with a strict inequality \m{x>0}, but, for the same reason of different measures (respectively infinitely many zeros), not provable by an invariant of the form \m{p=0} (or, using \rref{lem:DIbase}, not provable using any finite number of equational differential invariants):
\begin{sequentdeduction}[array]
  \linfer[diffind]
    {\linfer[qear]
      {\lclose}
      {\lsequent{}{\tmpa>0}}
    }
    {\lsequent{x>0}{\dbox{\hevolve{\D{x}=\tmpa}}{x>0}}}\rlap{\hbox to
      159 pt{\hfill\qEd}}
\end{sequentdeduction}\medskip

\noindent It might be tempting to think that at least equational postconditions (like those considered in \cite{DBLP:journals/fmsd/SankaranarayananSM08}) only need equational differential invariants for proving them.
But that is not the case either.
We show that there are even purely equational invariants that are only provable using inequalities, but not when using only equations as differential invariants.
\begin{theorem}[No equational closure] \label{thm:diffind-eq-leq}
  Some equational invariants of differential equations are only provable using an inequality as a differential invariant, but not using equational propositional logic for differential invariants.
  This equational invariant is not even provable using equational propositional logic and differential cuts.
\end{theorem}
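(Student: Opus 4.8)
The plan is to exhibit a single differential equation together with an equational postcondition whose invariance rests on a \emph{one-sided} (inequality) derivative condition that no equation can meet, and then to show equations cannot certify it even with differential cuts. I would use the contracting dynamics $\hevolve{\D{x}=-x}$ and the equational invariant $x^2=0$ (equivalently $x=0$). For provability with an inequality, I would prove the property $\m{\lsequent{x^2=0}{\dbox{\hevolve{\D{x}=-x}}{x^2=0}}}$ in $\DI$ by taking the weak inequality $x^2\leq0$ (that is $-x^2\geq0$ in normal form) as the differential invariant and gluing with \rref{eq:odeb-indirect}: precondition $x^2=0$ implies $x^2\leq0$, postcondition $x^2=0$ is implied by $x^2\leq0$, and $x^2\leq0$ is a differential invariant because
\begin{sequentdeduction}[array]
  \linfer[diffind]
    {\linfer
      {\linfer[qear]
        {\lclose}
        {\lsequent{}{-2x^2\leq0}}
      }
      {\lsequent{}{\subst[(2x\D{x}\leq0)]{\D{x}}{-x}}}
    }
    {\lsequent{x^2\leq0}{\dbox{\hevolve{\D{x}=-x}}{x^2\leq0}}}
\end{sequentdeduction}
The conceptual point I would stress is the asymmetry of the derivative condition: the identical computation on the equation $x^2=0$ yields the premise $-2x^2=0$, which is \emph{not} valid, so the equation fails precisely where the inequality succeeds.

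For non-provability with equations, the differential-algebraic core is that \emph{no nontrivial equation is a differential invariant of} $\hevolve{\D{x}=-x}$: if $p=0$ were one, then $\m{\subst[\D{p}]{\D{x}}{-x}} = -x\,p'$ would have to vanish identically, forcing $p'\equiv0$, so $p$ is constant and $p=0$ is either trivially true or has no solutions, hence useless. Given this, I would apply \rref{lem:DIbase} (using \rref{prop:EDP} to reduce propositional combinations of equations to single equations): any $\DIp[=]$ proof of the property would supply finitely many nontrivial equational differential invariants $p_i=0$ with $\m{\entails x^2=0\limply\lorfold_i p_i=0}$, contradicting the fact that every equational differential invariant of this system is constant. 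Note that, unlike \rref{prop:eq}, the measure-zero argument is unavailable here because the postcondition variety $x^2=0$ itself has measure zero; the obstruction is instead the total absence of nontrivial equational differential invariants.

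For robustness under differential cuts, I would rule out \irref{diffcut} by induction on the nesting depth of cuts. The evolution domain starts equivalent to $\ltrue$, and on any domain equivalent to $\ltrue$ the previous step shows all equational differential invariants are trivial; consequently the left premise $\m{\lsequent{x^2=0}{\dbox{\hevolve{\D{x}=-x}}{C}}}$ of a cut with an equational combination $C$ is provable only when $C$ is valid (again apply \rref{lem:DIbase} to this left premise: a non-valid $C$ would demand a nontrivial equational invariant). A valid cut formula leaves the domain equivalent to $\ltrue$, so no cut ever strengthens the dynamics, and the proof collapses to the already-excluded cut-free case, establishing non-provability in $\DISp[=]$ as well.

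The step I expect to be the main obstacle is exactly this differential-cut argument: one must exclude the possibility that a nested chain of cuts introduces, through their left premises, a domain constraint under which some genuinely nontrivial equation becomes a differential invariant. The induction on cut depth anchored at the base domain $\ltrue$ — so that the trivial-invariants fact applies at every stage — is what closes this gap, whereas the algebraic kernel ($-x\,p'\equiv0$ forces $p$ to be constant) is elementary and the bookkeeping for arbitrarily many invariants is delegated to \rref{lem:DIbase}.
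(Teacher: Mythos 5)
Your proposal is correct and follows essentially the same route as the paper's proof: the same dynamics $\hevolve{\D{x}=-x}$ with an equational postcondition ($x^2=0$ in place of the paper's equivalent $x=0$), the same inequality witness ($x^2\leq0$, i.e., $-x^2\geq0$), the same differential-algebraic kernel that $\subst[\D{p}]{\D{x}}{-x}=-x\,p'$ vanishing identically forces $p$ to be constant so that all equational differential invariants of this system are trivial, and the same reduction via \rref{prop:EDP} and \rref{lem:DIbase} to arbitrary proofs. Your explicit induction on differential-cut nesting depth is merely a more formally structured rendering of the paper's observation that any cut formula $C$ must itself be proved invariant by \irref{diffind} for the same dynamics, hence must be trivial and cannot strengthen the evolution domain.
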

\begin{proof}
The formula \m{x=0 \limply \dbox{\hevolve{\D{x}=-x}}{x=0}} is provable using \m{-x^2\geq0} as a differential invariant by the following simple formal proof:
\begin{sequentdeduction}[array]
  \linfer[diffind]
    {\linfer%
      {\linfer[qear]
        {\lclose}
        {\lsequent{}{2x^2\geq0}}
      }
      {\lsequent{}{\subst[(-2x\D{x}\geq0)]{\D{x}}{-x}}}
    }
    {\lsequent{-x^2\geq0}{\dbox{\hevolve{\D{x}=-x}}{(-x^2\geq0)}}}
\end{sequentdeduction}
We need to show that this formula cannot be proven using equations as differential invariants.
Suppose there was a differential invariant of the form \m{p=0} for a univariate polynomial $p$ of the form \m{\sum_{i=0}^n a_i x^i} in the only occurring variable $x$.
Then
\begin{enumerate}[(1)]
\item \label{case:diffind-eq-leq-base} \m{\entails p=0 \lbisubjunct x=0}, and
\item \label{case:diffind-eq-leq-diffind} \m{\entails \subst[\D{p}]{\D{x}}{-x} =0}, where
 \[\subst[\D{p}]{\D{x}}{-x} = \subst[\left(\sum_{i=1}^n i a_i x^{i-1} \D{x}\right)]{\D{x}}{-x} = - \sum_{i=1}^n i a_i x^i\]
\end{enumerate}
From condition~\ref{case:diffind-eq-leq-diffind}, we obtain that \m{a_1=a_2=\dots=a_0=0} by comparing coefficients.
Consequently, $p$ must be the constant polynomial $a_0$, not involving $x$.
Thus, the formula \m{p=0} is either trivially equivalent to $\ltrue$ (then it does not contribute to the proof) or equivalent to $\lfalse$ (then it is no consequence of the precondition).
Thus the only equational invariants of \m{x=0 \limply \dbox{\hevolve{\D{x}=-x}}{x=0}} are trivial (equivalent to $\ltrue$ or to $\lfalse$).
Consequently, that formula cannot be provable by an equational invariant, nor by a propositional combination of equations (because of \rref{prop:EDP}).

By \rref{lem:DIbase}, this result extends to arbitrary proofs in \DI[=] with an arbitrary number of uses of \irref{diffind}, because each equational differential invariant for the above dynamics is trivial.

This result still holds in the presence of differential cuts.
Differential cuts could use any formula $C$, but they only help closing the proof if the left premise of \irref{diffcut} can be proved using \irref{diffind}. Note that \irref{diffcut} could be used repeatedly, but does not on its own prove properties without \irref{diffind}, because the left premise of \irref{diffcut} needs to prove invariance of $C$ for the same dynamics.
We have shown above that the only equational differential invariants that \irref{diffcut} could successfully strengthen the formula with are trivial. They do not contain $x$, are equivalent to $\ltrue$ (and then do not contribute to the proof), or equivalent to $\lfalse$ (and then are not implied by the precondition).
\qedhere
\end{proof}

We show that, conversely, focusing on strict inequalities also reduces the deductive power, because equations are obviously missing and there is at least one proof where this matters.
That is, strict barrier certificates do not prove (nontrivial) closed invariants.
\begin{proposition}[Strict barrier incompleteness] \label{prop:gt}
  The deductive power of differential induction with strict barrier certificates (formulas of the form \m{p>0}) is strictly less than the deductive power of general differential induction.
  \begin{align*}
    \DI[>] &< \DI \\
    \DI[=] &\not\leq \DI[>]
  \end{align*}
\end{proposition}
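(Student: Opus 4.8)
Since every strict-barrier differential invariant is a fortiori an arbitrary quantifier-free formula, the inclusions $\DI[>] \leq \DI$ and $\DI[=] \leq \DI$ are immediate, so the whole proposition reduces to a single separation. The plan is to establish $\DI[=] \not\leq \DI[>]$ by exhibiting one valid invariance property that is provable with an equational differential invariant but not with any strict inequality: since $\DI[=]\leq\DI$, such a witness simultaneously gives $\DI\not\leq\DI[>]$, and together with $\DI[>]\leq\DI$ this yields the strict inclusion $\DI[>]<\DI$. The guiding intuition is that a strict inequality $p>0$ always carves out an \emph{open} region, whereas the defining set of an equation is \emph{closed with empty interior}; I would therefore pick a nontrivial closed invariant.

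Concretely, I would take the rotational dynamics $\hevolve{\D{x}=y\syssep\D{y}=-x}$ together with the equational invariant $x^2+y^2=1$ (any fixed nonzero radius serves equally well). This property is provable in $\DI[=]$ by one application of \irref{diffind}: the differential formula $\subst[(2x\D{x}+2y\D{y}=0)]{\D{x}}{y}\subst[\,]{\D{y}}{-x}$ simplifies to $2xy+2y(-x)=0$, that is, to $0=0$, which is discharged by \irref{qear}. Hence $x^2+y^2=1 \limply \dbox{\hevolve{\D{x}=y\syssep\D{y}=-x}}{x^2+y^2=1}$ lies in the deductive power of $\DI[=]$, and of $\DI$.

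It remains to show this property is not provable in $\DI[>]$, and here I would argue by contradiction through \rref{lem:DIbase}. Assuming a $\DI[>]$ proof, the lemma extracts finitely many strict barriers $p_1>0,\dots,p_k>0$, each a differential invariant of the dynamics, satisfying $\entails (x^2+y^2=1)\limply\lorfold_i (p_i>0)$ and $\entails (p_i>0)\limply(x^2+y^2=1)$ for every $i$. The latter conditions force each open set $\{p_i>0\}$ to sit inside the unit circle; but a nonempty open subset of $\reals^2$ has nonempty interior, whereas the circle has empty interior, so every $\{p_i>0\}$ must be empty. Then $\lorfold_i (p_i>0)$ is unsatisfiable and cannot cover the nonempty circle, contradicting the first condition. (For a single use of \irref{diffind} this is just the observation that no open $\{p>0\}$ equals the circle; \rref{lem:DIbase} is what lifts this to arbitrary proofs with repeated \irref{diffind}, \irref{diffweak}, \irref{genb}, and cuts.) I expect the main obstacle to be precisely this appeal to \rref{lem:DIbase}: one must verify that it reduces the infinitary claim ``no $\DI[>]$ proof exists'' to the finitary covering and containment conditions stated above, after which the purely topological step --- an open region contained in an empty-interior set is empty --- closes the argument and explains why strict barrier certificates cannot certify nontrivial closed invariants.
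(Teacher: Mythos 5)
Your proposal is correct and takes essentially the same route as the paper: the same witness (a circle equation under the rotational dynamics \m{\hevolve{\D{x}=y\syssep\D{y}=-x}}, provable in \DI[=] by one \irref{diffind} step), the same reduction of the whole proposition to the single separation \m{\DI[=]\not\leq\DI[>]}, and the same appeal to \rref{lem:DIbase} followed by a topological open-versus-closed obstruction. The only cosmetic difference is the final topological step --- the paper concludes that the circle would be both open and closed, which is impossible in $\reals^n$ except for $\emptyset$ and $\reals^n$, whereas you note that each open set $\{p_i>0\}$ contained in the empty-interior circle must be empty --- and both closings are equally valid.
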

\begin{proof}
The following formula is provable by equational differential induction:
\begin{sequentdeduction}[array]
  \linfer[diffind]
    {\linfer[qear]
      {\lclose}
      {\lsequent{}{2xy+2y(-x)=0}}
    }
    {\lsequent{x^2+y^2=c^2}{\dbox{\hevolve{\D{x}=y\syssep\D{y}=-x}}{x^2+y^2=c^2}}}
\end{sequentdeduction}
But it is not provable with a differential invariant of the form $p>0$.
An invariant of the form \m{p>0} describes an open set and, thus, cannot be equivalent to the (nontrivial) closed domain where \m{x^2+y^2=c^2}.
The only sets that are both open and closed in $\reals^n$ are $\emptyset$ and $\reals^n$.

We extend this argument to arbitrary proofs with an arbitrary number of uses of \irref{diffind}.
By \rref{lem:DIbase}, every \DI[>] proof of the above property uses a non-zero finite number of differential invariants $p_i>0$, such that 
\m{\entails x^2+y^2=c^2 \limply \lorfold_i p_i>0} and \m{\entails p_i>0 \limply x^2+y^2=c^2} for each $i$.
This is a contradiction, because the latter implies that each open \m{p_i>0} is contained in the closed \m{x^2+y^2=c^2}, hence their finite union, which is open, is contained in the closed \m{x^2+y^2=c^2}.
Yet, by the former property, the finite union of the open \m{p_i>0}, which is open, also contains the closed \m{x^2+y^2=c^2}.
This makes \m{x^2+y^2=c^2} both open and closed.
The only sets, however, that are both open and closed in $\reals^n$ are $\emptyset$ and $\reals^n$.
\cexcex{
\begin{sequentdeduction}[array]
  \linfer[closedexpand]
    {\linfer[qear]
      {\linfer[diffind]
        {\linfer[qear]
          {\lclose}
          {\lsequent{}{0\leq \subst[2x\D{x}]{\D{x}}{y}+\subst[2y\D{y}]{\D{y}}{-x}\leq0}}
        }
        {\lsequent{(c-\varepsilon)^2<x^2+y^2<(c+\varepsilon)^2 \land 0<\varepsilon<c}{\dbox{\hevolve{\D{x}=y\syssep\D{y}=-x}}{((c-\varepsilon)^2<x^2+y^2<(c+\varepsilon)^2)}}}
      }
      {\lsequent{x^2+y^2=c^2 \land c>0}{\lforall{0{<}\varepsilon{<}c}{\dbox{\hevolve{\D{x}=y\syssep\D{y}=-x}}{((c-\varepsilon)^2<x^2+y^2<(c+\varepsilon)^2)}}}}
    }
    {\lsequent{x^2+y^2=c^2\land c>0}{\dbox{\hevolve{\D{x}=y\syssep\D{y}=-x}}{x^2+y^2=c^2}}}
\end{sequentdeduction}
}%
\qedhere
\end{proof}

Weak inequalities, however, do subsume the deductive power of equational differential invariants.
This is obvious on the algebraic level but we will see that it also does carry over to the differential structure.

\begin{proposition}[Equational definability] \label{prop:eq-geq}
  The deductive power of differential induction with equations is subsumed by the deductive power of differential induction with weak inequalities:
\[ \DIp[=] \leq \DI[\geq] \]
\end{proposition}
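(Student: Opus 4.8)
The plan is to reduce to the atomic case and then square. By the equational deductive power result \rref{prop:EDP}, every differential invariant that is a propositional combination of equations can be replaced by a single atomic equation \m{p=0}, so it suffices to show that any property provable with \m{p=0} as a differential invariant is also provable in \DI[\geq]. The candidate weak inequality that I would use in place of \m{p=0} is \m{-p^2\geq0}. It is equivalent to \m{p=0} in first-order real arithmetic (since \m{p^2\geq0} always, \m{-p^2\geq0} holds iff \m{p=0}), so the algebraic side is immediate; the only real content, as \rref{lem:diffind-nequiv} warns, is that the \emph{differential} structure must carry over as well, and here the specific squared form is exactly what makes this work.

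The key step is the differential computation. Using the Leibniz rule, \m{\D{(-p^2)} = -2p\,\D{p}}, so the premise of \irref{diffind} for the weak inequality \m{-p^2\geq0} is \m{\lsequent{\ivr}{\subst[(-2p\,\D{p}\geq0)]{\D{x}}{\theta}}}, that is \m{\lsequent{\ivr}{-2p\cdot\subst[\D{p}]{\D{x}}{\theta}\geq0}}. By hypothesis \m{p=0} is a differential invariant, which means exactly that \m{\lsequent{\ivr}{\subst[\D{p}]{\D{x}}{\theta}=0}} is provable. Hence under \m{\ivr} the factor \m{\subst[\D{p}]{\D{x}}{\theta}} is provably zero, so \m{-2p\cdot\subst[\D{p}]{\D{x}}{\theta}} is provably zero and in particular \m{\geq0}; this discharges the premise by \irref{qear}. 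Therefore \m{-p^2\geq0} is a differential invariant of \m{\hevolvein{\D{x}=\theta}{\ivr}} whenever \m{p=0} is.

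It then remains to glue this into a full proof. Since \m{p=0\lbisubjunct -p^2\geq0} is a first-order real-arithmetic tautology, I would use the equivalence reasoning for pre- and postconditions of \rref{eq:odeb-indirect} (exactly as in the proof of \rref{lem:diffind-prop-equiv}) to derive \m{\lsequent{p=0}{\dbox{\hevolvein{\D{x}=\theta}{\ivr}}{p=0}}} from \m{\lsequent{-p^2\geq0}{\dbox{\hevolvein{\D{x}=\theta}{\ivr}}{-p^2\geq0}}}. For the extension to arbitrary proofs with arbitrarily many uses of \irref{diffind}, I would apply this replacement locally at each step: every use of a propositional combination of equations as a differential invariant is first collapsed to a single equation \m{p=0} by \rref{prop:EDP} and then replaced by the atomic weak inequality \m{-p^2\geq0}, the surrounding derivation being unchanged because each such step still proves the same conclusion up to the real-arithmetic equivalence just used.

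The one point deserving care, and the reason the statement is not completely trivial despite the obvious algebraic inclusion, is precisely the preservation of the differential structure. \rref{lem:diffind-nequiv} shows that an arbitrary real-arithmetic-equivalent reformulation of a differential invariant need \emph{not} itself be a differential invariant, so I cannot simply appeal to \m{p=0\lbisubjunct -p^2\geq0}. What makes the squared form \m{-p^2} work is that its derivation carries the factor \m{-2p} multiplying \m{\D{p}}: this factor lets the provably vanishing \m{\subst[\D{p}]{\D{x}}{\theta}} annihilate the whole expression, turning the equational induction premise \m{\subst[\D{p}]{\D{x}}{\theta}=0} into the inequational premise \m{-2p\cdot\subst[\D{p}]{\D{x}}{\theta}\geq0} for free. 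Verifying this compatibility of differential structures is the crux; everything else is bookkeeping with \rref{prop:EDP} and \rref{eq:odeb-indirect}.
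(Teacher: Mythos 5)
Your proposal is correct and follows essentially the same route as the paper: reduce to the atomic case via \rref{prop:EDP}, replace $p=0$ by the weak inequality $-p^2\geq0$, and observe that the equational induction premise $\subst[(\D{p}=0)]{\D{x}}{\theta}$ implies the inequational premise $\subst[(-2p\D{p}\geq0)]{\D{x}}{\theta}$ in first-order real arithmetic, so the induction step carries over. Your additional remarks on gluing via \rref{eq:odeb-indirect} and on applying the replacement locally at each \irref{diffind} use just make explicit what the paper leaves implicit.
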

\begin{proof}
\renewcommand{\arraystretch}{1.3}%
By \rref{prop:EDP}, we only need to show that \m{\DI[=] \leq \DI[\geq]}.
Let \m{p=0} be an equational differential invariant of a differential equation \m{\hevolvein{\D{x}=\theta}{\ivr}}.
Then we can prove the following:
\begin{sequentdeduction}[array]
  \linfer[diffind]
    {\linfer%
      {\lclose}
      {\lsequent{\ivr}{\subst[(\D{p}=0)]{\D{x}}{\theta}}}
    }
    {\lsequent{p=0}{\dbox{\hevolvein{\D{x}=\theta}{\ivr}}{p=0}}}
\end{sequentdeduction}
Then, the inequality \m{-p^2\geq0}, which is equivalent to \m{p=0} in real arithmetic, also is a differential invariant of the same dynamics by the following formal proof:
\begin{sequentdeduction}[array]
  \linfer[diffind]
    {\linfer%
      {\lclose}
      {\lsequent{\ivr}{\subst[(-2p\D{p}\geq0)]{\D{x}}{\theta}}}
    }
    {\lsequent{-p^2\geq0}{\dbox{\hevolvein{\D{x}=\theta}{\ivr}}{(-p^2\geq0)}}}
\end{sequentdeduction}
The subgoal for the differential induction step is provable:
if we can prove that $\ivr$ implies \m{\subst[(\D{p}=0)]{\D{x}}{\theta}}, then we can also prove that $\ivr$ implies \m{\subst[(-2p\D{p}\geq0)]{\D{x}}{\theta}}, because \m{\subst[(\D{p}=0)]{\D{x}}{\theta}} implies \m{\subst[(-2p\D{p}\geq0)]{\D{x}}{\theta}} in first-order real arithmetic.
\qedhere
\end{proof}
Note that the local state-based view of differential invariants is crucial to make the last proof work.
By \rref{prop:eq-geq}, differential invariant search with weak inequalities can suppress equations.
Note, however, that the polynomial degree increases quadratically with the reduction in \rref{prop:eq-geq}.
In particular, the polynomial degree increases quartically when using the reductions in \rref{prop:EDP} and \rref{prop:eq-geq} one after another to turn propositional equational formulas into single inequalities.
This quartic increase of the polynomial degree is likely a too serious computational burden for practical purposes even if it is a valid reduction in theory.

When using propositional connectives and inequalities, the reduction is less counterproductive for the polynomial degree.
The following result is an immediate corollary to \rref{prop:eq-geq} but of independent interest. We give a direct proof that shows a more natural reduction that does not increase the polynomial degree.
\begin{corollary}[Atomic equational definability]%
 \label{cor:eq-pgeq}
  The deductive power of differential induction with atomic equations is subsumed by the deductive power of differential induction with formulas with weak inequalities.
  \[ \DI[=] \leq \DIp[\geq] \]
\end{corollary}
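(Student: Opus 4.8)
The plan is to give a direct reduction from $\DI[=]$ into $\DIp[\geq]$ that, unlike the route through \rref{prop:eq-geq} (which replaces $p=0$ by $-p^2\geq0$ and thereby squares the degree), keeps the polynomial degree unchanged. The key observation is that an equation $p=0$ is equivalent in real arithmetic to the conjunction of two weak inequalities $p\geq0 \land -p\geq0$, and this equivalent formula mentions only the polynomials $p$ and $-p$, both of the same degree as $p$. Since $\DIp[\geq]$ permits propositional combinations of weak inequalities, this conjunction is an admissible differential invariant there, whereas it is not atomic, which is precisely why the target class must be $\DIp[\geq]$ rather than $\DI[\geq]$.

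First I would assume $p=0$ is an atomic equational differential invariant of \m{\hevolvein{\D{x}=\theta}{\ivr}}, so that by \irref{diffind} the arithmetic premise \m{\lsequent{\ivr}{\subst[(\D{p}=0)]{\D{x}}{\theta}}} is provable. I would then set \m{G \mequiv (p\geq0 \land -p\geq0)}. Since the derivation operator lifts conjunctively and sends constants to $0$, one computes \m{\subst[\D{G}]{\D{x}}{\theta} \mequiv \subst[(\D{p}\geq0 \land -\D{p}\geq0)]{\D{x}}{\theta}}. Because \m{\D{p}=0} implies both \m{\D{p}\geq0} and \m{-\D{p}\geq0} in first-order real arithmetic, the premise \m{\lsequent{\ivr}{\subst[\D{G}]{\D{x}}{\theta}}} follows from the already-provable premise \m{\lsequent{\ivr}{\subst[(\D{p}=0)]{\D{x}}{\theta}}} by \irref{qear}; hence \irref{diffind} proves \m{\lsequent{G}{\dbox{\hevolvein{\D{x}=\theta}{\ivr}}{G}}}, showing that $G$ is a differential invariant in $\DIp[\geq]$. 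Finally I would close the proof of the original property \m{p=0 \limply \dbox{\hevolvein{\D{x}=\theta}{\ivr}}{p=0}} using the indirect principle \rref{eq:odeb-indirect} with $G$ as the intermediate invariant, relying on the real-arithmetic equivalences \m{p=0 \limply G} and \m{G \limply p=0} to convert the pre- and postcondition (exactly as in the proof of \rref{lem:diffind-prop-equiv}).

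As for difficulty, there is essentially no hard step: the reduction is algebraically transparent and the local state-based view of differential invariants makes the implication between the two premises immediate. The only point deserving care is the one already noted, namely that the witness $G$ is genuinely a conjunction, so this construction establishes the \emph{corollary} $\DI[=]\leq\DIp[\geq]$ but not the stronger statement with atomic weak inequalities, which is what forces the degree-squaring detour of \rref{prop:eq-geq}. I would also, as in \rref{prop:eq}, exclude the trivial (zero-polynomial) equation so that $p=0$ carries genuine information, and observe that the translation acts locally on a single \irref{diffind} step and therefore extends uniformly to arbitrary proofs by replacing each equational differential invariant in a $\DI[=]$ proof by its corresponding $\DIp[\geq]$ witness.
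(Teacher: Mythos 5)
Your proposal is correct and follows essentially the same route as the paper: both replace the equational invariant $p=0$ by the degree-preserving conjunction \m{p\geq0\land -p\geq0}, derive its differential premise \m{\subst[(\D{p}\geq0\land-\D{p}\geq0)]{\D{x}}{\theta}} from the provable premise \m{\subst[(\D{p}=0)]{\D{x}}{\theta}} by real arithmetic, and then convert pre- and postconditions via the arithmetic equivalence. Your added remarks (why the target must be \DIp[\geq] rather than \DI[\geq], and that the reduction avoids the degree-squaring of \rref{prop:eq-geq}) match the paper's own commentary surrounding the corollary.
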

\proof
Consider an atomic equational differential invariant of a differential equation system \m{\hevolvein{\D{x}=\theta}{\ivr}}.
We can assume this atomic equational differential invariant to be of the form $p=0$.
If $p=0$ is a differential invariant, then we can show that the formula \m{p\geq0\land -p\geq0} also is a differential invariant by the following formal proof:
\renewcommand{\arraystretch}{1.3}%
\begin{sequentdeduction}[array]
  \linfer
    {\linfer[diffind]
      {\linfer
        {\linfer
          {\lclose}
          {\lsequent{\ivr}{\subst[(\D{p}=0)]{\D{x}}{\theta}}}
        }
        {\lsequent{\ivr}{\subst[(\D{p}\geq0\land-\D{p}\geq0)]{\D{x}}{\theta}}}
      }
      {\lsequent{p\geq0\land-p\geq0}{\dbox{\hevolvein{\D{x}=\theta}{\ivr}}{(p\geq0\land-p\geq0)}}}
    }
    {\lsequent{p=0}{\dbox{\hevolvein{\D{x}=\theta}{\ivr}}{p=0}}}
    \rlap{\hbox to 154 pt{\hfill\qEd}}
\end{sequentdeduction}\medskip

\noindent The same natural reduction works to show the inclusion \m{\DIp[=] \leq \DIp[\geq]} without a penalty for the polynomial degree.
Again, the local state-based view of differential invariants is helpful for this proof.

Next we see that, with the notable exception of pure equations (\rref{prop:EDP}), propositional operators (which have been considered in \cite{DBLP:journals/logcom/Platzer10,DBLP:conf/cav/PlatzerC08,DBLP:journals/fmsd/PlatzerC09} and for some cases also in \cite{DBLP:conf/cav/GulwaniT08} but not in \cite{DBLP:journals/fmsd/SankaranarayananSM08,DBLP:conf/hybrid/PrajnaJ04,DBLP:journals/tac/PrajnaJP07}) increase the deductive power.
\begin{theorem}[Atomic incompleteness] \label{thm:atomic}
  The deductive power of differential induction with propositional combinations of inequalities exceeds the deductive power of differential induction with atomic inequalities.
  \begin{align*} 
  \DI[\geq] &< \DIp[\geq] \\
  \DI[>] &< \DIp[>]
  \end{align*}
\end{theorem}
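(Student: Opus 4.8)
The plan is to establish both strict inclusions by combining the trivial inclusion with a single separating example built around a \emph{corner}. The inclusions $\DI[\geq]\leq\DIp[\geq]$ and $\DI[>]\leq\DIp[>]$ are immediate, since an atomic inequality is already a (degenerate) propositional combination, so every \DI[\geq] proof is verbatim a \DIp[\geq] proof, and likewise for $>$. The whole content therefore lies in the two separations $\DIp[\geq]\not\leq\DI[\geq]$ and $\DIp[>]\not\leq\DI[>]$, which I would prove with one construction. Consider the planar dynamics $\hevolve{\D{x}=1\syssep\D{y}=1}$ together with the property $(x\geq0\land y\geq0)\limply\dbox{\hevolve{\D{x}=1\syssep\D{y}=1}}{(x\geq0\land y\geq0)}$, and, for the strict case, the same with $>$ in place of $\geq$. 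Each is valid, and each is provable in $\DIp[\geq]$ (resp.\ $\DIp[>]$) because the conjunction $x\geq0\land y\geq0$ is itself a differential invariant: its differential induction step is $\subst[(\D{x}\geq0\land\D{y}\geq0)]{\D{x}}{1}\subst[\,]{\D{y}}{1}$, that is $1\geq0\land1\geq0$, which is valid; the strict version is analogous.

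To show that neither property is provable with atomic inequalities, I would argue about all \DI[\geq] proofs at once via \rref{lem:DIbase}, exactly as in the proof of \rref{prop:gt}. That lemma reduces any \DI[\geq] proof of this invariance property to a finite, nonempty family of atomic weak-inequality differential invariants $p_i\geq0$ with $\entails(x\geq0\land y\geq0)\limply\lorfold_i p_i\geq0$ and $\entails p_i\geq0\limply(x\geq0\land y\geq0)$ for each $i$; set-theoretically, each $\{p_i\geq0\}$ lies inside the closed quadrant $A=\{x\geq0\land y\geq0\}$ while their union covers $A$. The key is to derive a contradiction from the non-smooth corner of $A$ at the origin. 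Notably, the forward-invariance of the $p_i\geq0$ will not even be needed, only these two containments, so the obstruction is purely algebraic-geometric.

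The heart of the argument is a local analysis at the corner using initial forms. For each $i$, if $p_i(0,0)>0$ then a full neighbourhood of the origin lies in $\{p_i\geq0\}\subseteq A$, which is impossible since $A$ has empty interior at its corner; hence $p_i(0,0)\leq0$. For the contributing pieces, those with $p_i(0,0)=0$, let $f_i$ be the initial form (lowest-degree nonzero homogeneous part) of $p_i$ at the origin. A nonzero homogeneous bivariate form has a positivity cone $\{f_i>0\}$ that is centrally symmetric when its degree is even and of angular measure exactly $\pi$ when odd; in neither case can a nonempty such cone be contained in the quarter-plane cone of angular measure $\pi/2$ forced by $\{p_i\geq0\}\subseteq A$. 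Consequently every contributing $f_i$ satisfies $f_i\leq0$, with zero set a finite union of rays. Since there are finitely many pieces, I can choose a direction $v$ strictly inside the open first quadrant avoiding all the (finitely many) root rays of all the $f_i$. Along that ray $p_i(rv)<0$ for all $i$ and all small $r>0$ (contributing pieces because $f_i(v)<0$, the others because $p_i(0,0)<0$), so the near-corner point $rv\in A$ is covered by no $\{p_i\geq0\}$, contradicting $A\subseteq\bigcup_i\{p_i\geq0\}$. The strict case is identical: the origin is then not in the open quadrant, $p_i(0,0)\leq0$ again, and the same direction-avoidance yields $rv\in\{x>0\land y>0\}$ with $p_i(rv)<0$ for every $i$.

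I expect the main obstacle to be making this initial-form/corner analysis fully rigorous and uniform: one must justify that the local sign of $p_i$ along a direction $v$ with $f_i(v)<0$ is negative for small radii (a standard consequence of $p_i(rv)=r^{d_i}f_i(v)+o(r^{d_i})$), observe that an odd-degree initial form cannot be $\leq0$ without being identically zero (so only even forms occur among contributing pieces), and coordinate the finitely many root rays of all pieces simultaneously when selecting $v$. Everything else, namely the $\DIp$ proofs, the reduction by \rref{lem:DIbase}, and the trivial inclusions, is routine; the corner obstruction is precisely the indirect characteristic that separates a single basic closed invariant from a genuine conjunction.
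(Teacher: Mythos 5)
Your proposal is correct, and it reaches the separation by a genuinely different route than the paper. The paper uses the dynamics \m{\hevolve{\D{x}=a\syssep\D{y}=y^2}} (any term \m{a\geq0}) and then gives two distinct arguments: for \DI[\geq] it observes that some \m{p_i\geq0} must contain infinitely many boundary points \m{(x,0)}, where continuity forces \m{p_i(x,0)=0}, so the univariate restriction vanishes identically and \m{p_i\geq0} spills over to \m{x<0}, violating the containment condition; for \DI[>], where this boundary argument fails, it runs a substantially more involved analysis along lines \m{(x,\lambda x)}, invoking cylindrical algebraic decomposition and a parity-of-degree argument on the univariate restrictions. You instead pick the simpler dynamics \m{\hevolve{\D{x}=1\syssep\D{y}=1}} and give a single, uniform corner argument: every piece satisfies \m{p_i(0,0)\leq0}, the initial form of each contributing piece must have its positivity cone inside the quadrant cone, which by central symmetry (even degree) or the antipodal sign-flip (odd degree) forces the initial form to be nonpositive everywhere, and then a generic direction avoiding the finitely many root rays produces an uncovered point of the quadrant arbitrarily close to the corner. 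Both your argument and the paper's use only the two containment conditions supplied by \rref{lem:DIbase} and never the induction-step condition, so the choice of dynamics is immaterial beyond making the conjunction a \DIp differential invariant (which holds for both choices); your approach buys uniformity over the weak and strict cases and avoids CAD entirely, at the price of the initial-form bookkeeping, while the paper's weak-inequality argument is shorter in that one case. Two cosmetic points you should patch: in the strict case a piece with \m{p_i} identically zero has no initial form and satisfies only \m{p_i(rv)=0} rather than \m{p_i(rv)<0}, but then \m{\{p_i>0\}} is empty, so it is a trivially false invariant excluded by the paper's Prelude convention (and the needed conclusion, that \m{rv} is uncovered, still holds); and for odd-degree initial forms you do not need the angular-measure count, since a nonzero odd form that is nowhere positive would also be nowhere negative by antipodality, hence identically zero.
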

\begin{proof}
Consider any term $a\geq0$ (e.g., 1 or $x^2+1$ or $x^2+x^4+1$ or $(x-y)^2+2$).
Then the formula \m{x\geq0\land y\geq0\limply\dbox{\hevolve{\D{x}=a\syssep\D{y}=y^2}}{(x\geq0\land y\geq0)}} is provable using a conjunction in the differential invariant:
\renewcommand{\arraystretch}{1.5}%
\begin{sequentdeduction}[array]
  \linfer[diffind]
    {\linfer%
      {\linfer[qear]
        {\lclose}
        {\lsequent{}{a\geq0 \land y^2\geq0}}
      }
      {\lsequent{}{\subst[(\D{x}\geq0\land\D{y}\geq0)]{\D{x}}{a}\subst[\,]{\D{y}}{y^2}}}
    }
    {\lsequent{x\geq0\land y\geq0}{\dbox{\hevolve{\D{x}=a\syssep\D{y}=y^2}}{(x\geq0\land y\geq0)}}}
\end{sequentdeduction}  
By a sign argument similar to that in the proof of \cite[Theorem 2]{DBLP:journals/logcom/Platzer10} no atomic formula is equivalent to \m{x\geq0\land y\geq0}.
Thus, the above property cannot be proven using a single differential induction.
The proof for a postcondition \m{x>0\land y>0} is similar.

This argument extends to arbitrary \DI[\geq] (or \DI[>]) proofs with an arbitrary number of uses of \irref{diffind}.
By \rref{lem:DIbase}, every \DI[\geq] proof of the above property uses a non-zero finite number of differential invariants $p_i(x,y)\geq0$, such that 
\m{\entails x\geq0\land y\geq0 \limply \lorfold_i p_i(x,y)\geq0} and \m{\entails p_i(x,y)\geq0 \limply x\geq0\land y\geq0} for each $i$.
This sign condition easily leads to a contradiction for \DI[\geq], because, for continuity reasons, some $p_i$ can be shown to be zero at infinitely many points $(x,0)$, thus, is zero for all $(x,0)$, contradicting the second condition.
This proof, however, would not work for \DI[>].
  Instead, consider points of the form \m{(x,\lambda x)} for $\lambda,x\in\reals$.
  By the second condition, \m{p_i(x,\lambda x)\leq0} for each $i$ and all \m{x,\lambda\in\reals} with \m{\lambda<0} or \m{x<0}.
  By the first condition, for each \m{\lambda,x>0}, there is an $i$ such that \m{p_i(x,\lambda x)\geq0}.
  By the second condition, this \m{p_i(x,\lambda x)} cannot be a (non-zero) constant polynomial in $x$.
  Since there are only finitely many $i$ to choose from, the same $i$ has to be chosen for infinitely many \m{\lambda,x>0}.
  By cylindrical algebraic decomposition \cite{DBLP:conf/automata/Collins75}, $\reals^2$ partitions into finitely many cells such that each \m{p_i} has constant signs on these cells; see \cite{DBLP:conf/automata/Collins75,BochnakCR98} for details.
  Thus, there are infinitely many $\lambda>0$ for which the same $p_i$ is chosen for all sufficiently large $x>0$.
  Since there are only finitely many cells, there even is some open interval $(s_1,t_1)$ such that the same $p_i$ can be chosen for all \m{\lambda\in(s_1,t_1)} and all sufficiently large $x>0$.
  Write \m{p_i(x,y) = \sum_{j,k=1}^n a_{j,k}x^jy^k}.
  Then,
  \[
  p_i(x,\lambda x) = \sum_{j,k=1}^n a_{j,k}\lambda^k x^{j+k} = \sum_{l=0}^{2n} \underbrace{\sum_{k=0}^l a_{(l-k),k}\lambda^k}_{c_l(\lambda)} x^l
  \]
  Let $d$ be \m{\max_{\lambda\in\reals}\deg p_i(x,\lambda x)}, i.e., the maximum degree of the polynomial \m{p_i(x,\lambda x)}, as a polynomial in $x$.
  By the above sign conditions, $d$ is odd for the infinitely many $\lambda\in(s_1,t_1)$, yet $d$ is even for infinitely many $\lambda<0$ in another open interval $(s_2,t_2)$.
  Consider any of the infinitely many $\lambda$ for which \m{\deg p_i(x,\lambda x)<d}, i.e., \m{c_d(\lambda)=0}.
  Then, the univariate polynomial \m{c_d(\lambda)} has infinitely many zeros, hence, must be the zero polynomial.
  Since $d$ was the degree of \m{p_i(x,\lambda x)}, this means that \m{p_i(x,\lambda x)} is the zero polynomial, which we know for infinitely many $\lambda$ in either the open interval $(s_1,t_1)$ or the interval $(s_2,t_2)$.
  This gives a set of non-zero measure on which $p_i$ is zero, implying that the polynomial $p_i(x,y)$ is the zero polynomial, hence trivial, which is a contradiction.
  \qedhere
\end{proof}
\noindent
Note that the formula in the proof of \rref{thm:atomic} is provable, e.g., using differential cuts (\irref{diffcut}) with two atomic differential induction steps, one for \m{x\geq0} and one for \m{y\geq0}.
Yet, a similar argument can be made to show that the deductive power of differential induction with atomic formulas (even when using differential cuts) is strictly less than the deductive power of general differential induction; see previous work \cite[Theorem~2]{DBLP:journals/logcom/Platzer10}.

Next, we show that differential induction with strict inequalities is incomparable with differential induction with weak inequalities.
In particular, strict and weak barrier certificates are incomparable \cite{DBLP:conf/hybrid/PrajnaJ04,DBLP:journals/tac/PrajnaJP07}.
\begin{proposition}[Elementary incomparability] \label{prop:atom-p}
  The deductive power of differential induction with strict inequalities is incomparable to the deductive power of differential induction with weak inequalities.
  \begin{align*}
   \DI[>] &\not\leq \DIp[\geq] \quad\text{even}~ \DI[>] \not\leq \DIp[\geq,=] \\
   \DI[\geq] &\not\leq \DIp[>] \\
   \DI[=] &\not\leq \DIp[>]
  \end{align*}
\end{proposition}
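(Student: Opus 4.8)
The plan is to prove all three non-inclusions by the same template used in \rref{prop:gt}: for each separation I would exhibit a concrete invariance property that the left-hand class proves with a single differential invariant, and then rule out \emph{every} right-hand proof by a topological separating characteristic. The two characteristics are that formulas built from strict inequalities by $\land,\lor$ describe open sets, whereas formulas built from weak inequalities and equations by $\land,\lor$ describe closed sets (finite unions and intersections of open, resp.\ closed, sets being open, resp.\ closed). In each case I would first record the one-line \irref{diffind} proof witnessing membership in the left class, then invoke \rref{lem:DIbase} to reduce an arbitrary right-hand proof to a nonempty finite family of nontrivial differential invariants $F_i$ with $\entails A \limply \lorfold_i F_i$ and $\entails F_i \limply A$, and finally derive a contradiction from the clash of topological types.

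For $\DI[>] \not\leq \DIp[\geq,=]$ I would take the property $x>0 \limply \dbox{\hevolve{\D{x}=1}}{x>0}$, which \irref{diffind} proves from the valid premise $1>0$ using the strict invariant $x>0$. Were it provable in $\DIp[\geq,=]$, then by \rref{lem:DIbase} the region $\lorfold_i F_i$ would be a finite union of closed sets, hence closed, and the two entailments would squeeze it to equal the open half-line $\{x>0\}$; since $\{x>0\}$ is neither $\emptyset$ nor $\reals$ it cannot be clopen, a contradiction. The case $\DI[\geq] \not\leq \DIp[>]$ is the mirror image: $x\geq0 \limply \dbox{\hevolve{\D{x}=1}}{x\geq0}$ lies in $\DI[\geq]$ (premise $1\geq0$), but a $\DIp[>]$ proof would force the open set $\lorfold_i F_i$ to equal the closed half-line $\{x\geq0\}$; concretely, the boundary point $0$ lies in $\lorfold_i F_i$, so an open neighbourhood of $0$, and thus some $x<0$, would belong to it, violating $\entails F_i \limply x\geq0$.

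For $\DI[=] \not\leq \DIp[>]$ I would reuse the rotational example of \rref{prop:gt}: the formula $x^2+y^2=c^2 \limply \dbox{\hevolve{\D{x}=y\syssep\D{y}=-x}}{x^2+y^2=c^2}$ is provable in $\DI[=]$ with the single equation $x^2+y^2=c^2$, the premise reducing to $2xy+2y(-x)=0$. For a $\DIp[>]$ proof, \rref{lem:DIbase} supplies open sets $F_i$ with $\entails F_i \limply x^2+y^2=c^2$, so each $F_i$ is an open subset of the circle; but for $c\neq0$ the circle has empty interior in $\reals^2$, whence every $F_i$ is empty and $\lorfold_i F_i=\emptyset$ cannot contain the nonempty circle, contradicting $\entails x^2+y^2=c^2 \limply \lorfold_i F_i$.

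The conceptual content is uniform and light, so I do not expect any single argument to be hard. The real care is in the bookkeeping: I expect the main obstacle to be verifying that the relevant topological type (open, closed, or empty interior) is genuinely preserved under the \emph{propositional} closure of each target class and under the finite disjunction produced by \rref{lem:DIbase}, and that the nontriviality clause of \rref{lem:DIbase} legitimately excludes the degenerate $\emptyset$, $\reals$, and $c=0$ cases that would otherwise break the clopen dichotomy.
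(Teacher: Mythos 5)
Your proposal is correct and follows essentially the same route as the paper: the same witness properties (the paper uses a general term $a>0$ where you take $a=1$, and the same rotational example for $\DI[=]\not\leq\DIp[>]$), the same open-versus-closed separating characteristic, and the same reduction of arbitrary proofs to a finite disjunction of differential invariants via \rref{lem:DIbase}. Your minor variations (the boundary-point argument at $0$ and the empty-interior argument for the circle, in place of the paper's clopen dichotomy) are equivalent instantiations of the same topological contradiction.
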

\begin{proof}
\def\tmpa{a}%
Consider any term $\tmpa>0$ (e.g., $5$ or $x^2+1$ or $x^2+x^4+5$).
The following formula is provable with an atomic differential invariant with a strict inequality:
  \begin{sequentdeduction}[array]
    \linfer[diffind]
    {\linfer[qear]
      {\lclose}
      {\lsequent{}{\tmpa>0}}
    }
    {\lsequent{x>0}{\dbox{\hevolve{\D{x}=\tmpa}}{x>0}}}
  \end{sequentdeduction}
But it is not provable with any conjunctive/disjunctive combination of weak inequalities $p_i\geq0$.
The reason is that the formula \m{x>0} describes a nontrivial open set, which cannot be equivalent to a Boolean formula that is a combination of conjunctions, disjunctions and weak inequalities $p_i\geq0$, because finite unions and intersections of closed sets are closed.
Similarly, the above formula is not provable in \m{\DIp[\geq,=]}, which describe closed regions.

We extend this argument to arbitrary proofs with an arbitrary number of uses of \irref{diffind}.
By \rref{lem:DIbase}, every \DIp[=] or \DIp[\geq,=] proof of the above property uses a non-zero finite number of differential invariants $F_i$ containing only polynomials of the form $p(x)\geq0$ or $p(x)=0$, such that 
\m{\entails O \limply \lorfold_i F_i} and \m{\entails F_i \limply O} for each $i$, where $O$ is the open $x>0$ and each $F_i$ is closed.
This is a contradiction, because the latter implies that each closed \m{F_i} is contained in the open $O$, hence their finite union, which is closed, is contained in the open $O$.
Yet, by the former property, the finite union of the closed $F_i$, which is closed, also contains the open $O$.
This makes $O$ both open and closed, which is impossible in $\reals^n$ except for $\emptyset$ and $\reals^n$.

Conversely, the following formula is provable with an atomic differential invariant with a weak inequality:
  \begin{sequentdeduction}[array]
    \linfer[diffind]
    {\linfer[qear]
      {\lclose}
      {\lsequent{}{\tmpa\geq0}}
    }
    {\lsequent{x\geq0}{\dbox{\hevolve{\D{x}=\tmpa}}{x\geq0}}}
  \end{sequentdeduction}
But it is not provable with any conjunctive/disjunctive combination of strict inequalities $p_i>0$.
The reason is that the formula \m{x\geq0} describes a nontrivial closed set, which cannot be equivalent to a Boolean formula that is a combination of conjunctions, disjunctions and strict inequalities $p_i>0$, because unions and finite intersections of open sets are open.

We extend this argument to arbitrary proofs with an arbitrary number of uses of \irref{diffind}.
By \rref{lem:DIbase}, every \DIp[>] proof of the above property uses a non-zero finite number of differential invariants $F_i$ containing only polynomials of the form $p(x)>0$, such that 
\m{\entails C \limply \lorfold_i F_i} and \m{\entails F_i \limply C} for each $i$, where $C$ is the closed $x\geq0$ and each $F_i$ is open.
This is a contradiction, because the latter implies that each open \m{F_i} is contained in the closed $C$, hence their finite union, which is open, is contained in the closed $C$.
Yet, by the former property, the finite union of the open $F_i$, which is open, also contains the closed $C$.
This makes $C$ both open and closed, which is impossible in $\reals^n$ except for $\emptyset$ and $\reals^n$.

\cexcex{
\begin{sequentdeduction}[array]
  \linfer[closedexpand]
    {\linfer[qear]
      {\linfer[diffind]
        {\linfer[qear]
          {\lclose}
          {\lsequent{}{\tmpa\geq0}}
        }
        {\lsequent{x>-\varepsilon \land \varepsilon>0}{\dbox{\hevolve{\D{x}=\tmpa}}{x>-\varepsilon}}}
      }
      {\lsequent{x\geq0}{\lforall{\varepsilon{>}0}{\dbox{\hevolve{\D{x}=\tmpa}}{x>-\varepsilon}}}}
    }
    {\lsequent{x\geq0}{\dbox{\hevolve{\D{x}=\tmpa}}{x\geq0}}}
\end{sequentdeduction}
}%

Similarly, it is easy to see that \m{\DI[=] \not\leq \DIp[>]}.
By the proof of \rref{prop:gt}, the formula \m{x^2+y^2=c^2\limply\dbox{\hevolve{\D{x}=y\syssep\D{y}=-x}}{x^2+y^2=c^2}} is provable in \m{\DI[=]}.
The formula \m{x^2+y^2=c^2} describes a nontrivial closed set, which, again, cannot be equivalent to any conjunctive/disjunctive combination of strict inequalities $p_i>0$, which would describe an open set.
By \rref{lem:DIbase}, this argument extends to arbitrary proofs in \DIp[>] with an arbitrary number of uses of \irref{diffind} using the same argument as above with $x^2+y^2=c^2$ for $C$.

\cexcex{
\begin{sequentdeduction}[array]
  \linfer[closedexpand]
    {\linfer[qear]
      {\linfer[diffind]
        {\linfer[qear]
          {\lclose}
          {\lsequent{}{0\leq \subst[2x\D{x}]{\D{x}}{y}+\subst[2y\D{y}]{\D{y}}{-x}\leq0}}
        }
        {\lsequent{(c-\varepsilon)^2<x^2+y^2<(c+\varepsilon)^2 \land 0<\varepsilon<c}{\dbox{\hevolve{\D{x}=y\syssep\D{y}=-x}}{((c-\varepsilon)^2<x^2+y^2<(c+\varepsilon)^2)}}}
      }
      {\lsequent{x^2+y^2=c^2 \land c>0}{\lforall{0{<}\varepsilon{<}c}{\dbox{\hevolve{\D{x}=y\syssep\D{y}=-x}}{((c-\varepsilon)^2<x^2+y^2<(c+\varepsilon)^2)}}}}
    }
    {\lsequent{x^2+y^2=c^2\land c>0}{\dbox{\hevolve{\D{x}=y\syssep\D{y}=-x}}{x^2+y^2=c^2}}}
\end{sequentdeduction}
}%
\qedhere
\end{proof}
\begin{corollary} \label{cor:p-incomp}
  We obtain simple consequences:
  \begin{align*}
   \DIp[\geq,=] &\not\leq \DIp[\geq,>,=] \\
   \DIp[=] &\not\leq \DIp[>] \\
   \DIp[>] &\not\leq \DIp[=]
  \end{align*}
\end{corollary}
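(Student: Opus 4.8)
The plan is to read off all three relations from \rref{prop:atom-p} by combining it with the obvious \emph{monotonicity} of relative deductive power under enlarging the invariant language. The monotonicity principle is this: whenever $\Omega \subseteq \Omega'$, every differential invariant built from the operators in $\Omega$ is already a formula over the operators in $\Omega'$, so any \irref{diffind} derivation in $\DI[\Omega]$ is \emph{verbatim} a derivation in $\DI[\Omega']$; hence $\DI[\Omega] \leq \DI[\Omega']$. This gives me, at no cost, the inclusions $\DI[>] \leq \DIp[>]$, $\DIp[>] \leq \DIp[\geq,>,=]$, $\DI[>] \leq \DIp[\geq,>,=]$, $\DI[=] \leq \DIp[=]$, and $\DIp[=] \leq \DIp[\geq,=]$.

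With monotonicity in hand, each non-inclusion is obtained by a one-line transitivity-and-contradiction argument that reduces it to one of the two atomic separations already discharged in \rref{prop:atom-p}, namely $\DI[>] \not\leq \DIp[\geq,=]$ and $\DI[=] \not\leq \DIp[>]$. For $\DIp[>] \not\leq \DIp[=]$ I would suppose, for contradiction, that $\DIp[>] \leq \DIp[=]$; chaining with $\DI[>] \leq \DIp[>]$ on the left and $\DIp[=] \leq \DIp[\geq,=]$ on the right collapses this to $\DI[>] \leq \DIp[\geq,=]$, contradicting \rref{prop:atom-p}. For $\DIp[=] \not\leq \DIp[>]$ the same scheme, using only $\DI[=] \leq \DIp[=]$, turns a hypothetical $\DIp[=] \leq \DIp[>]$ into $\DI[=] \leq \DIp[>]$, again contradicting \rref{prop:atom-p}; these two failures together render $\DIp[=]$ and $\DIp[>]$ incomparable. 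For the comparison between $\DIp[\geq,=]$ and the richer calculus $\DIp[\geq,>,=]$, the inclusion $\DIp[\geq,=] \leq \DIp[\geq,>,=]$ is immediate by monotonicity, while the separating witness is again the open property $x>0 \limply \dbox{\hevolve{\D{x}=a}}{x>0}$ of \rref{prop:atom-p} (for any term $a>0$): it lies in $\DI[>] \leq \DIp[\geq,>,=]$ yet, being a nontrivial open set, is not provable in $\DIp[\geq,=]$, so $\DIp[\geq,>,=] \not\leq \DIp[\geq,=]$ and hence $\DIp[\geq,=] < \DIp[\geq,>,=]$.

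I expect no real obstacle, because \rref{prop:atom-p} already supplies both the separating witnesses and, through \rref{lem:DIbase}, the passage from a single use of \irref{diffind} to arbitrary proofs; the corollary merely repackages that content. The only thing needing care is the bookkeeping of inclusion directions: in each contradiction the monotone inclusions must be oriented so that the two classes under comparison are exactly the endpoints of the chain and so that the chain terminates in a forbidden inclusion of \rref{prop:atom-p}. The very same scheme also yields the strict inclusions $\DIp[\geq,=] < \DI$ and $\DIp[=] < \DIp[>,=]$ recorded on the matching edges of \rref{fig:diffind-chart}.
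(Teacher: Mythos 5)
Your proposal is correct and takes essentially the same route as the paper's own proof: both derive all three separations from \rref{prop:atom-p} (whose proof, via \rref{lem:DIbase}, already covers arbitrary proofs rather than single uses of \irref{diffind}), and your explicit monotonicity-plus-transitivity packaging is just a cleaner bookkeeping of what the paper does by re-invoking the open/closed witnesses and \rref{prop:eq-geq}. Note that for the first relation you establish $\DIp[\geq,>,=] \not\leq \DIp[\geq,=]$ rather than the literal $\DIp[\geq,=] \not\leq \DIp[\geq,>,=]$ of the statement (which would contradict monotonicity); this is evidently a transposition typo in the corollary, and the paper's own proof establishes exactly the direction you do.
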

\begin{proof}
The property \m{\DIp[\geq,=] \not\leq \DIp[\geq,>,=]}  follows from the proof for \m{\DI[\geq] \not\leq \DIp[>]}, because conjunctive/disjunctive combinations of weak inequalities and equations are closed, but the region where \m{x>0} is open.

The separation of \m{\DIp[=]} and \m{\DIp[>]} is a consequence of the facts \m{\DI[=] \not\leq \DIp[>]} and \m{\DI[>] \not\leq \DIp[\geq]}, because \m{\DI[\geq] \geq \DIp[=]} by \rref{prop:eq-geq} and \m{\DIp[=]} describes closed sets yet \m{\DIp[>]} describes open sets.
Using \rref{lem:DIbase}, the extensions to arbitrary proofs with an arbitrary number of uses of \irref{diffind} are as in the proof of \rref{prop:atom-p}.
\qedhere
\end{proof}

Hence, strict inequalities are a necessary ingredient to retain full deductive power.
The operator basis \m{\{\geq,=,\land,\lor\}} is not sufficient.
What about weak inequalities? Do we need those?
The operator basis \m{\{>,\land,\lor\}} is not sufficient by \rref{prop:atom-p}, but what about \m{\{>,=,\land,\lor\}}?
Algebraically, this would be sufficient, because all semialgebraic sets can be defined with polynomials using the operators \m{\{>,=,\land,\lor\}}.
We show that, nevertheless, differential induction with weak inequalities is not subsumed by differential induction with all other operators.
Weak inequalities are thus an inherent ingredient.
In particular, the subsets of operators that have been considered in related work \cite{DBLP:journals/fmsd/SankaranarayananSM08,DBLP:conf/hybrid/PrajnaJ04,DBLP:journals/tac/PrajnaJP07} are not sufficient, since some formulas can only be proven using differential invariants from the full operator basis.

\begin{theorem}[Necessity of full operator basis] \label{thm:pgteq-di}
  The deductive power of differential induction with propositional combinations of strict inequalities and equations is strictly less than the deductive power of general differential induction.
  \begin{align*}
   \DIp[>,=] &< \DIp[\geq,>,=] \\
   \DI[\geq] &\not\leq \DIp[>,=]
  \end{align*}
\end{theorem}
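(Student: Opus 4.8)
The plan is to separate the classes with a one-dimensional example in which a weak inequality is provable, yet every admissible strict/equational invariant is forced by its \emph{differential} structure to be an open ray that opens away from the boundary point it would need to cover. Fix any term $a>0$ (for instance $1$ or $x^2+1$) and consider the invariance property \m{x\geq0\limply\dbox{\hevolve{\D{x}=a}}{x\geq0}}. This is proved in one step in $\DI[\geq]$: rule \irref{diffind} reduces it to the arithmetic premise $a\geq0$, which is valid. Since $\DI[\geq]\leq\DIp[\geq,>,=]$ trivially and $\DIp[>,=]\leq\DIp[\geq,>,=]$ by inclusion of operator sets, it suffices to prove $\DI[\geq]\not\leq\DIp[>,=]$; the strict inclusion $\DIp[>,=]<\DIp[\geq,>,=]$ then follows because this witness lives in the larger class but, as shown next, not in the smaller one.

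To show the property is unprovable in $\DIp[>,=]$, I would appeal to \rref{lem:DIbase}: any such proof would supply finitely many differential invariants $F_i\in\DIp[>,=]$ of \m{\hevolve{\D{x}=a}} with \m{\entails x\geq0\limply\lorfold_i F_i} and \m{\entails F_i\limply x\geq0} for each $i$. The core step is to analyze one $F_i$. Since \m{\D{F_i}} is the conjunction of the derivations of all atoms of $F_i$, and the evolution domain is trivial, the \irref{diffind} premise forces each atom's derivation to be valid on its own. For an equation atom $p=0$, validity of \m{\subst[(\D{p}=0)]{\D{x}}{a}}, i.e.\ $p'(x)\,a=0$ for all $x$, forces $p'\equiv0$ because $a>0$; hence $p$ is constant and the atom is trivial (equivalent to $\ltrue$ or $\lfalse$). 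For a strict atom $p>0$, validity of its derivation gives $p'(x)\,a\geq0$ (the relation being $\geq$ or $>$ according to the convention chosen for \m{\D{(p>0)}}, which is immaterial here), so $p$ is non-decreasing and the set \m{p>0} is an open upward ray $(c,\infty)$ (or $\emptyset$, or $\reals$).

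Discarding the trivial equation atoms --- noting that a trivially-true equation occurring as a top-level disjunct would force $F_i\equiv\reals$, which is incompatible with \m{\entails F_i\limply x\geq0} --- leaves $F_i$ as a $\{\land,\lor\}$-combination of open upward rays. The upward rays $(c,\infty)$ are totally ordered by inclusion, so this family, together with $\emptyset$ and $\reals$, is closed under finite unions and intersections; consequently each $F_i$ describes $\emptyset$, $\reals$, or some $(c_i,\infty)$. The containment \m{\entails F_i\limply x\geq0} rules out $\reals$ and forces $c_i\geq0$, so in every case $0\notin F_i$. Therefore $0\notin\lorfold_i F_i$, contradicting \m{\entails x\geq0\limply\lorfold_i F_i}, since $0$ satisfies $x\geq0$. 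Hence no $\DIp[>,=]$ proof exists, establishing $\DI[\geq]\not\leq\DIp[>,=]$ and with it both displayed relations.

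The main obstacle, and the point that distinguishes this separation from the earlier ones, is that a purely topological open/closed argument no longer works: unlike $\DIp[>]$, the class $\DIp[>,=]$ can describe closed sets (indeed $x\geq0$ is equivalent to $x>0\lor x=0$), so there is no algebraic obstruction to expressing the region. The separation must instead be extracted from the differential structure, via the observation that for this dynamics equational atoms are frozen to constants and strict atoms are pinned to monotone polynomials; this is exactly what forces every candidate invariant contained in $x\geq0$ to be an open ray opening away from the boundary point $x=0$ it would have to cover. Verifying that this pinning is genuinely forced (and robust to the \m{\D{(p>0)}} convention) is the delicate part; once it is in place, the chain structure of upward rays makes the final contradiction immediate.
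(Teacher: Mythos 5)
Your proof is correct, and while it shares the paper's skeleton---the same witness \m{x\geq0\limply\dbox{\hevolve{\D{x}=a}}{x\geq0}} (the paper fixes $a=1$), the same appeal to \rref{lem:DIbase}, and the same key observation that along \m{\hevolve{\D{x}=a}} the condition \m{\entails\subst[(\D{q}=0)]{\D{x}}{a}} forces every equational atom to be a trivial constant---it assembles the final contradiction by a genuinely different mechanism. The paper argues topologically: since \m{x\geq0} is closed and nontrivial, at least one differential invariant $F_i$ must contain a \emph{nontrivial} equation atom (otherwise all $F_i$ are open and a clopen argument applies; a separate measure-zero argument forces a strict atom to exist as well), and the differential condition then kills exactly that atom. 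You instead push the differential analysis onto the strict atoms too: \m{\subst[\D{p}]{\D{x}}{a}\geq0} makes $p$ nondecreasing, so every atom defines $\emptyset$, $\reals$, or an upward ray $(c,\infty)$; this chain is closed under $\land,\lor$, so every admissible $F_i$ contained in \m{x\geq0} is an open ray with $c_i\geq0$, and the boundary point $x=0$ is never covered, contradicting \m{\entails x\geq0\limply\lorfold_i F_i}. Your route buys a complete structural characterization of the admissible \DIp[>,=] invariants for this dynamics, a single uniform argument whether one or many invariants are used (no clopen or measure-theoretic side arguments), and explicit robustness to the convention for \m{\D{(p>0)}}; the paper's route isolates a single culprit atom, and its topological step does not depend on the one-dimensional order structure of $\reals$ that your chain argument exploits. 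One cosmetic repair: your justification for discarding trivial equation atoms (treating only top-level disjuncts) is incomplete, since a trivially true equation may occur deeper inside $F_i$; but no discarding is needed at all---trivial atoms define $\emptyset$ or $\reals$, which already lie in your chain, so the lattice argument absorbs them as they stand.
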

\begin{proof}
The following simple formula is provable with a weak inequality as a differential invariant:
\begin{sequentdeduction}[array]
  \linfer[diffind]
    {\linfer[qear]
      {\lclose}
      {\lsequent{}{1\geq0}}
    }
    {\lsequent{x\geq0}{\dbox{\hevolve{\D{x}=1}}{x\geq0}}}
\end{sequentdeduction}
Suppose $F$ is a propositional formula of strict inequalities and equations that is a differential invariant proving the above formula.
Then $F$ is equivalent to $x\geq0$, which describes a closed region with a nonempty interior.
Consequently, $F$ must have an atom of the form \m{p>0} (otherwise the region has an empty interior or is trivially true and then useless) and an atom of the form \m{q=0} (otherwise the region is not closed).
We can assume $q$ to have a polynomial of degree $\geq1$ (otherwise the region is trivial or not closed if $F$ only has trivially true equations $0=0$ or trivially false equations like $5=0$).
A necessary condition for $F$ to be a differential invariant of \m{\hevolve{\D{x}=1}} thus is that
\begin{equation}
  \entails \subst[(\D{p}>0 \land \D{q}=0)]{\D{x}}{1}
  \label{eq:pgteq-di-diffind}
\end{equation}
because all atoms need to satisfy the differential invariance condition.
Now, $q$ is of the form \m{\sum_{i=0}^n a_i x^i} for some $n,a_0,\dots,a_n$.
Thus, \m{\D{q} = \sum_{i=1}^n i a_i x^{i-1}\D{x}}
and \m{\subst[\D{q}]{\D{x}}{1} = \sum_{i=1}^n i a_i x^{i-1}}.
Consequently, \rref{eq:pgteq-di-diffind} implies that
\[
\entails \sum_{i=1}^n i a_i x^{i-1} = 0
\]
If this formula is valid (true under all interpretations for~$x$), then we must have \m{n\leq1}. Otherwise if $x$ occurs ($n>1$), the above polynomial would not always evaluate to zero.
Consequently $q$ is of the form \m{a_0+a_1x}.
Hence, \m{\subst[(\D{q})]{\D{x}}{1} = a_1}.
Again the validity \rref{eq:pgteq-di-diffind} implies that $a_1$ must be zero.
This contradicts the fact that $q$ has degree $\geq1$.

We extend this argument to arbitrary proofs with an arbitrary number of uses of \irref{diffind}.
By \rref{lem:DIbase}, every \DIp[>,=] proof of the above property uses a non-zero finite number of (nontrivial) differential invariants $F_i$ with only polynomials of the form $p>0$ or $p=0$, such that 
\m{\entails x\geq0 \limply \lorfold_i F_i} and \m{\entails F_i \limply x\geq0} for each $i$.
At least one $F_i$ has to contain a polynomial $q=0$, otherwise each $F_i$ would be open, hence their union would be open, and, by the former condition, contain the closed $x\geq0$, which, by the latter condition, contains each open $F_i$, hence contains their open union.
The only sets that are both open and closed in $\reals^n$ are $\emptyset$ and $\reals^n$.
Our earlier argument about nonempty interiors does not immediately transfer to this situation, because the interior of a union can be bigger than the union of interiors.
Nevertheless, at least one $F_i$ has to contain a polynomial $p>0$, otherwise each $F_i$ would have measure zero (trivial $F_i$ are useless), yet the region \m{x\geq0} of non-zero measure cannot be contained in the region \m{\lorfold_i F_i}, which, as a finite union of sets with measure zero has measure zero.
A necessary condition for all $F_i$ to be differential invariants, thus, is that
\begin{equation*}
  \entails \subst[(\D{p}>0)]{\D{x}}{1}
  \quad\text{and}\quad
  \entails \subst[(\D{q}=0)]{\D{x}}{1}
\end{equation*}
which implies the condition \rref{eq:pgteq-di-diffind} that leads to a contradiction.
\cexcex{
\def\tmpa{1}
\begin{sequentdeduction}[array]
  \linfer[closedexpand]
    {\linfer[qear]
      {\linfer[diffind]
        {\linfer[qear]
          {\lclose}
          {\lsequent{}{\tmpa\geq0}}
        }
        {\lsequent{x>-\varepsilon \land \varepsilon>0}{\dbox{\hevolve{\D{x}=\tmpa}}{x>-\varepsilon}}}
      }
      {\lsequent{x\geq0}{\lforall{\varepsilon{>}0}{\dbox{\hevolve{\D{x}=\tmpa}}{x>-\varepsilon}}}}
    }
    {\lsequent{x\geq0}{\dbox{\hevolve{\D{x}=\tmpa}}{x\geq0}}}
\end{sequentdeduction}
}%
\qedhere
\end{proof}

This completes the study of the relations of classes of differential invariants that we summarize in \rref{fig:diffind-chart} on p.\,\pageref{fig:diffind-chart}.
The other relations are obvious transitive consequences of the ones summarized in \rref{fig:diffind-chart}.

\section{Auxiliary Differential Variable Power}%
\label{sec:diffaux}

After having studied the relationships of the classes of differential invariants, we now turn to extensions of differential induction.
First, we consider auxiliary differential variables, and show that some properties can only be proven after introducing auxiliary differential variables into the dynamics.
That is, the addition of auxiliary differential variables increase the deductive power.
Similar phenomena also hold for classical discrete systems.
Up to now, it was unknown whether auxiliary variables have an effect on provability for the continuous dynamics of differential equations.
We present the following new proof rule \emph{differential auxiliaries} (\irref{diffaux}) for introducing auxiliary differential variables:
\[
\begin{calculuscollections}{10cm}
\begin{calculus}
\cinferenceRule[diffaux|$DA$]{differential auxiliary variables}
{\linferenceRule[sequent]
  {\lsequent[s]{}{\phi\lbisubjunct\lexists{y}{\psi}}
  &\lsequent{\psi} {\dbox{\hevolvein{\D{x}=\theta\syssep\D{y}=\vartheta}{\ivr}}{\psi}}}
  {\lsequent{\phi} {\dbox{\hevolvein{\D{x}=\theta}{\ivr}}{\phi}}}
}{$y$ new and \m{\hevolve{\D{y}=\vartheta,y(0)=y_0}} has a solution $y:[0,\infty)\to\reals^n$ for each $y_0$}%
\end{calculus}
\end{calculuscollections}
\]
Rule \irref{diffaux} is applicable if $y$ is a new variable and the new differential equation \m{\hevolve{\D{y}=\vartheta}} has global solutions on $\ivr$ (e.g., because term $\vartheta$ satisfies a Lipschitz condition \cite[Proposition 10.VII]{Walter:ODE}, which is definable in first-order real arithmetic and thus decidable).
Without that condition, adding \m{\D{y}=\vartheta} could limit the duration of system evolutions incorrectly.
In fact, it would be sufficient for the domains of definition of the solutions of \m{\hevolve{\D{y}=\vartheta}} to be no shorter than those of $x$.
Soundness is easy to see, because precondition $\phi$ implies $\psi$ for some choice of $y$ (left premise).
Yet, for any $y$, $\psi$ is an invariant of the extended dynamics (right premise).
Thus, $\psi$ always holds after the evolution for some $y$ (its value can be different than in the initial state), which still implies $\phi$ (left premise).
Since $y$ is fresh and its differential equation does not limit the duration of solutions of $x$ on $\ivr$, this implies the conclusion.
Since $y$ is fresh, $y$ does not occur in $\ivr$, and, thus, its solution does not leave $\ivr$, which would incorrectly restrict the duration of the evolution as well.

Intuitively, rule \irref{diffaux} can help proving properties, because it may be easier to characterize how $x$ changes in relation to an auxiliary variable $y$ with a suitable differential equation (\m{\hevolve{\D{y}=\vartheta}}).
Let \DIS be the proof calculus with (unrestricted) differential induction (like \DI) plus differential cuts (rule \irref{diffcut}).
We contrast this proof calculus with the extension by \irref{diffaux}.

\begin{theorem}[Auxiliary differential variable power] \label{thm:diffaux-power}
  The deductive power of \DIS with auxiliary differential variables (rule \irref{diffaux}) exceeds the deductive power of \DIS without auxiliary differential variables.
\end{theorem}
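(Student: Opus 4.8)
The plan is to exhibit a single invariance property that $\DIS$ augmented with \irref{diffaux} proves but that $\DIS$ alone cannot, and then to separate the two. I propose the property $\lsequent{x>0}{\dbox{\hevolve{\D{x}=-x}}{x>0}}$, whose flow $x(t)=x_0e^{-t}$ keeps $x$ strictly positive yet drives it asymptotically to the equilibrium $x=0$. Intuitively this is hard without auxiliaries precisely because the boundary point $0$ is approached but never reached, and every candidate invariant region for the bare dynamics wants to include that equilibrium.

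For the positive direction I would introduce a fresh variable $y$ with the linear, hence globally solvable, dynamics $\D{y}=\frac{y}{2}$ and apply \irref{diffaux} with $\psi\mequiv xy^2=1$. The left premise $x>0\lbisubjunct\lexists{y}{xy^2=1}$ is a first-order real-arithmetic tautology (take $y=x^{-1/2}$ for one direction, and read off $x=1/y^2>0$ for the other). The right premise $\lsequent{xy^2=1}{\dbox{\hevolve{\D{x}=-x\syssep\D{y}=\frac{y}{2}}}{xy^2=1}}$ then follows from a single \irref{diffind} step, since substituting the dynamics into $\D{(xy^2)}$ yields $-xy^2+2xy\cdot\frac{y}{2}=0$. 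This gives a proof in $\DIS$ extended by \irref{diffaux}.

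The harder and main part is to show the property is unprovable in $\DIS$ without auxiliary variables. The key is a structural characterization of the differential invariants of $\D{x}=-x$: for an atom $p\bowtie0$ in the single variable $x$, the \irref{diffind} premise forces $-x\,p'(x)$ to carry the sign dictated by $\bowtie$ for all $x\in\reals$, which makes the polynomial $p$ attain its maximum at $x=0$ (and be constant in the equational case). From this I would argue that each atomic region, and hence — since formulas may be taken in negation normal form and $\D{}$ acts conjunctively — each propositional combination of such atoms, denotes a set that is empty, all of $\reals$, or a bounded set containing the origin $0$. Consequently no nontrivial differential invariant is contained in the open half-line $x>0$, because any nonempty non-full such region already contains the excluded point $0$.

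Finally I would dispose of differential cuts. Any $C$ admissible in \irref{diffcut} here must, by soundness of its left premise $\lsequent{x>0}{\dbox{\hevolve{\D{x}=-x}}{C}}$, hold along the entire forward orbit of $x>0$, i.e.\ satisfy $(0,\infty)\subseteq\{C\}$; but by the characterization above the only differential invariant whose region is unbounded is $\reals$ itself, so every usable cut is vacuous and cannot strengthen the domain. Invoking \rref{lem:DIbase} to reduce an arbitrary $\DIS$ proof to a finite family of differential invariants $F_i$ with $\entails x>0\limply\lorfold_i F_i$ and $\entails F_i\limply x>0$, the characterization forces each nonempty $F_i$ to contain $0$, contradicting $F_i\limply x>0$; hence no such family can cover $(0,\infty)$. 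The main obstacle is exactly this global characterization of the invariant sets of $\D{x}=-x$ together with the verification that it is stable under arbitrary Boolean structure and under repeated differential cuts.
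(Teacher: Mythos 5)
Your positive direction is exactly the paper's: the same witness formula $x>0 \limply \dbox{\hevolve{\D{x}=-x}}{x>0}$, the same auxiliary equation $\D{y}=\frac{y}{2}$, and the same invariant $xy^2=1$. Your separation argument, however, takes a genuinely different route. The paper's separating characteristic lives at infinity: any invariant entailing $x>0$ must contain an atomic inequality whose polynomial $p$ distinguishes $+\infty$ from $-\infty$, hence has odd degree; substituting $\D{x}\mapsto -x$ into $\D{p}$ again yields an odd-degree polynomial, so the induction step fails near one of the two infinities; differential cuts are then handled by locating the \emph{first} cut whose conjunction excludes $-\infty$ and deriving the contradiction on the previous domain, which still contains both infinities. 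Your characteristic is local and monotonic: validity of $\subst[(\D{p}\geq0)]{\D{x}}{-x}$ forces $p$ to attain its maximum at $x=0$ (and to be constant in the equational case), so every atom --- and, since $\D{}$ acts conjunctively and the class is closed under finite unions and intersections, every propositional differential invariant --- denotes $\emptyset$, $\reals$, or a bounded set containing $0$, none of which is, or is contained in, $(0,\infty)$; cuts are then disposed of globally by showing every provable cut is semantically vacuous. Both rest on the same induction-step condition; yours buys a cleaner description of the actual invariant sets of $\D{x}=-x$, while the paper's infinity-based argument is the one that transfers, with only a sign refinement, to the open-differential-induction result \rref{thm:diffindop-aux-power}, where your characterization would break down (rule \irref{diffindop} only requires the induction step on the invariant region itself, which destroys the peak-at-$0$ property).

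One step needs repair. \rref{lem:DIbase} is proved only for cut-free \DI proofs with no evolution domain constraint --- its proof explicitly uses that \irref{diffcut} is unavailable --- so you cannot literally invoke it ``to reduce an arbitrary \DIS proof'' to a finite family of differential invariants. Your cut-disposal has the right idea but must be organized as an induction on the proof, e.g., on the number of \irref{diffcut} applications: the left premise of an innermost cut is cut-free over a domain that is, by induction hypothesis, semantically all of $\reals$, so a version of \rref{lem:DIbase} tolerating evolution domain constraints (which the paper only asserts in passing at the end of \rref{app:proof-rules}) yields invariants $F_j$ with $\entails x>0\limply\lorfold_j F_j$ and $\entails F_j\limply C$, each satisfying your characterization; since a finite union of bounded sets cannot cover $(0,\infty)$, some $F_j$ must denote $\reals$, hence $C$ is valid. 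This detour through the $F_j$ is necessary: the cut formula $C$ itself need not be a differential invariant, so your phrase ``the only differential invariant whose region is unbounded is $\reals$'' does not apply to $C$ directly. With that bookkeeping, every domain occurring in the proof stays semantically $\reals$, your characterization applies to every \irref{diffind} use, and the argument closes.
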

\begin{proof}
We show that the formula
\begin{equation}
  x>0 \limply \dbox{\hevolve{\D{x}=-x}}{x>0}
  \label{eq:diffaux-power}
\end{equation}
is provable in \DIS with auxiliary differential variables (rule \irref{diffaux}), but not provable without using auxiliary differential variables.
We first show that \rref{eq:diffaux-power} is provable with auxiliary differential variables (variables that are added and do not affect other formulas or dynamics) using rule \irref{diffaux} (and \irref{diffind}):
\renewcommand{\arraystretch}{1.6}%
\begin{sequentdeduction}[array]
\linfer[diffaux]
 {\linfer[qear]
   {\lclose}
   {\lsequent{}{x>0 \lbisubjunct \lexists{y}{xy^2=1}}}
!
 \linfer[diffind]
    {\linfer%
      {\linfer[qear]
        {\lclose}
        {\lsequent{}{-xy^2+2xy\frac{y}{2}=0}}
      }
      {\lsequent{}{\subst[(\D{x}y^2+x2y\D{y}=0)]{\D{x}}{-x}\subst[\,]{\D{y}}{\frac{y}{2}}}}
    }
    {\lsequent{xy^2=1}{\dbox{\hevolve{\D{x}=-x\syssep\D{y}=\frac{y}{2}}}{xy^2=1}}}
}
{\lsequent{x>0}{\dbox{\hevolve{\D{x}=-x}}{x>0}}}
\end{sequentdeduction}

In the remainder of the proof, we show that \rref{eq:diffaux-power} is not provable without auxiliary differential variables like $y$.
We suppose there was a proof without \irref{diffaux}, which we assume cannot be made shorter (in the number of proof steps and the size of the formulas involved).
Note that for any non-constant univariate polynomial $p$ in the variable~$x$, the limits at $\pm\infty$ exist and are $\pm\infty$, i.e.
\begin{equation}
\lim_{x\to-\infty} p(x) \in \{-\infty,\infty\}
~~\text{and}~~
\lim_{x\to\infty} p(x) \in \{-\infty,\infty\}
\label{eq:univariante-poly-limits}
\end{equation}
For constant polynomials, the limits at $\pm\infty$ exist, are finite, and identical.

Suppose \rref{eq:diffaux-power} were provable by a differential invariant of the form \m{p(x)>0} for a polynomial $p$ in the only occurring variable~$x$.
Then \m{\entails p(x)>0 \lbisubjunct x>0}.
Hence $p(x)$ is not a constant polynomial 
and \m{p(x)\leq0} holds when \m{x\leq0} and \m{p(x)\geq0} when \m{x\geq0} by continuity.
Thus, from \rref{eq:univariante-poly-limits} we conclude
\[
\lim_{x\to-\infty} p(x) = -\infty
\quad\text{and}\quad
\lim_{x\to\infty} p(x) = \infty
\]
In particular, $p(x)$ has the following property, which is equivalent to $p(x)$ having odd degree:
\begin{equation}
  \lim_{x\to-\infty} p(x) \neq \lim_{x\to\infty} p(x)
  \label{eq:diffaux-power-limineq}
\end{equation}
Consequently, the degree of $p$ is odd and the leading (highest-degree) term is of the form $c_{2n+1}x^{2n+1}$ for an $n\in\naturals$ and a number \m{c_{2n+1} \in \reals\setminus\{0\}}.
Since \m{p(x)>0} was assumed to be a differential invariant of \m{\hevolve{\D{x}=-x}}, the differential invariance condition
\m{\entails \subst[(\D{p}>0)]{\D{x}}{-x}} holds.
Abbreviate the polynomial \m{\subst[\D{p}]{\D{x}}{-x}} by $q(x)$.
The leading term of $\D{p}$ is \m{(2n+1)c_{2n+1}x^{2n}\D{x}}.
Consequently, the leading term of $q(x)$ is \m{-(2n+1)c_{2n+1}x^{2n+1}}, hence of odd degree.
Thus $q(x)$ also has the property \rref{eq:diffaux-power-limineq}, which contradicts the fact that the differential invariance condition \m{\subst[(\D{p}>0)]{\D{x}}{-x}}, i.e., \m{q(x)>0} needs to hold for all $x\in\reals$.

Our proof where we suppose that \rref{eq:diffaux-power} were provable by a differential invariant of the form \m{p(x)\geq0} for a polynomial $p$ in the only occurring variable~$x$, and show that this is impossible, is similar, because $p(x)$ then also enjoys property \rref{eq:diffaux-power-limineq}.
Again, a constant polynomial $p(x)$ does not satisfy the requirement \m{\entails p(x)\geq0 \lbisubjunct x>0}.
In fact, the latter equivalence is already impossible, because \m{p(x)\geq0} is closed but \m{x>0} open and nontrivial.

Suppose \rref{eq:diffaux-power} were provable by a differential invariant of the form \m{p(x)=0} for a polynomial $p$ in the only occurring variable~$x$.
Then \m{p(x)=0} must be a consequence of the precondition \m{x>0}.
Thus, the polynomial $p$ is zero at infinitely many points, which implies that this \emph{univariate} polynomial is the zero polynomial.
But \m{0=0} is trivially true and there would be a shorter proof without this useless invariant.
Consequently no single atomic formula can be a differential invariant proving \rref{eq:diffaux-power}.

The fact that \rref{eq:diffaux-power} is not provable in \DI with an arbitrary number of differential invariants that do not need to be single atomic formulas follows from \rref{lem:DIbase}, which implies that \m{x>0} is equivalent to a disjunction \m{\lorfold_i F_i} for a non-zero finite list of (nontrivial) differential invariants $F_i$, because at least one polynomial $p$ in one subformula of one differential invariant $F_i$ must be of the form \m{p(x)\geq0} or \m{p(x)>0} (otherwise the univariate polynomial equations in $F_i$ only have finitely many solutions or are trivial $\ltrue$) and still distinguish $\infty$ from $-\infty$, i.e., satisfy the condition~\rref{eq:diffaux-power-limineq} that leads to a contradiction for the dynamics.
The arguments in the remainder of this proof apply to the disjunctive case as well.

Without differential cuts and \irref{diffaux}, \rref{eq:diffaux-power} is not provable.
Next, suppose \rref{eq:diffaux-power} was provable by differential cuts subsequently with differential invariants \m{F_1,F_2,\dots,F_n}, where each $F_i$ is a logical formula in the only occurring variable~$x$.
Then
\begin{enumerate}[(1)]
\item \label{case:diffaux-power-pre}
  \m{\entails x>0 \limply F_i} for each $i$ (precondition implies each differential invariant), and
\item \label{case:diffaux-power-post}
  \m{\entails F_1 \land \dots \land F_n \limply x>0} (finally implies postcondition), and
\item the respective differential induction step conditions hold.
\end{enumerate}
We abbreviate the conjunction \m{F_1 \land \dots \land F_i} of the first $i$ invariants by $F_{\leq i}$.
Then conditions~\ref{case:diffaux-power-pre} and~\ref{case:diffaux-power-post} imply \m{\entails F_{\leq n} \lbisubjunct x>0}.

By condition~\ref{case:diffaux-power-post}, the region described by $F_{\leq n}$ does not include $-\infty$ (more precisely, this means \m{-\infty \neq \inf \{x : x \models F_{\leq n}\}}).
Hence, there is a smallest $i$ such that the region described by $F_{\leq i}$ does not include $-\infty$ but $F_{\leq i-1}$ still includes $-\infty$.

Then this $F_i$ must have an atomic subformula that distinguishes $\infty$ from $-\infty$ (otherwise $F_{\leq i}$ would have the same truth values for $\infty$ and $-\infty$, and $F_{\leq i}$ would still include $-\infty$, because, by condition~\ref{case:diffaux-power-pre}, all $F_i$ regions include $\infty$).
This atomic subformula has the form \m{p(x)>0} or \m{p(x)\geq0} or \m{p(x)=0} with a univariate polynomial $p(x)$. 
It is easy to see why all univariate polynomial equations \m{p(x)=0} evaluate to false at both $-\infty$ and $\infty$, because of property \rref{eq:univariante-poly-limits}.
Hence, the atomic subformula has the form \m{p(x)>0} or \m{p(x)\geq0} and the univariate polynomial $p(x)$ has to satisfy property \rref{eq:diffaux-power-limineq}, because \m{p(x)>0} or \m{p(x)\geq0} is assumed to distinguish $-\infty$ and $\infty$.
Since the previous domain $F_{\leq i-1}$ still includes $-\infty$ and $\infty$, the same argument as before leads to a contradiction.
In detail.
By property \rref{eq:diffaux-power-limineq}, $p(x)$ has an odd degree.
Since \m{p(x)\geq0} or \m{p(x)>0} was assumed to satisfy the differential invariance condition for \m{\hevolvein{\D{x}=-c}{F_{\leq i-1}}}, it at least satisfies \m{\subst[(\D{p}\geq0)]{\D{x}}{-x}} on the evolution domain $F_{\leq i-1}$.
Because $p(x)$ has odd degree, $\D{p}$ has even degree and the polynomial \m{\subst[\D{p}]{\D{x}}{-x}}, which we abbreviate by $q(x)$, again has odd degree.
Thus $q(x)$ has the property \rref{eq:diffaux-power-limineq}, which contradicts the fact that the differential invariance condition
\m{\subst[(\D{p}\geq0)]{\D{x}}{-x}}, i.e., \m{q(x)\geq0} needs to hold for all $x$ satisfying $F_{\leq i-1}$, hence, at least for $-\infty$ and $\infty$.
This argument applies generally, including to any proof with an arbitrary number of differential invariants and arbitrary number of differential cuts, because some differential invariant still needs to be the first to distinguish $\infty$ from $-\infty$ (otherwise \m{x>0} would not be implied), which entails the condition \rref{eq:diffaux-power-limineq} that leads to a contradiction for the dynamics in \rref{eq:diffaux-power}.
\qedhere
\end{proof}

Note that the same proof can also be used to show that \m{x>0 \limply \dbox{\hevolve{\D{x}=x}}{x>0}} cannot be proven by differential induction and differential cuts without auxiliary differential variables (similarly for other \m{\hevolve{\D{x}=ax}} with a number \m{a\in\reals\setminus\{0\}}).
It is not a barrier certificate \cite{DBLP:journals/tac/PrajnaJP07} either.
Further, the nontrivial open region \m{x>0} cannot be equivalent to the closed region of a barrier certificate \m{p\leq0}.
We decided not to use formula \m{x>0 \limply \dbox{\hevolve{\D{x}=x}}{x>0}} in the proof of \rref{thm:diffaux-power}, however, because it is still provable with what is called \emph{open} differential induction ($\DIop$), where it is sound to assume the differential invariant in the differential induction step if the differential invariant \m{\inv \mequiv x>0} is open \cite{DBLP:journals/logcom/Platzer10}:

\centerline{
\begin{minipage}{4.2cm}
  \begin{sequentdeduction}[array]
    \linfer[diffindop]
    {\linfer[qear]
      {\lclose}
      {\lsequent{x>0}{\subst[(\D{x}>0)]{\D{x}}{x}}}
    }
    {\lsequent{x>0}{\dbox{\hevolve{\D{x}=x}}{x>0}}}
  \end{sequentdeduction}
\end{minipage}
\qquad\qquad~\text{by}~\quad
\renewcommand{\linferenceRuleNameSeparation}{\hspace{2pt}}%
\begin{calculus}
  \cinferenceRule[diffindop|$DI^\circ$]{open differential invariant}
  {\linferenceRule[sequent]
    {\lsequent{\ivr\land\inv}{\subst[\D{\inv}]{\D{x}}{\theta}}}
    {\lsequent{\inv}{\dbox{\hevolvein{\D{x}=\theta}{\ivr}}{\inv}}}
  }{}
\end{calculus}
\quad\text{where~$\inv$ is open}
}%

\noindent
But as an additional result, we show that, because \rref{eq:diffaux-power} has a different sign in the differential equation, also open differential induction is still insufficient for proving \rref{eq:diffaux-power} without the help of auxiliary differential variables.
In particular, our approach can prove a property that related approaches \cite{DBLP:conf/hybrid/PrajnaJ04,DBLP:journals/tac/PrajnaJP07,DBLP:journals/fmsd/SankaranarayananSM08} cannot.

Let $\DISop$ be the calculus with open differential induction (\irref{diffindop}) and differential cuts (\irref{diffcut}).
\begin{theorem}[Open auxiliary differential variable power] \label{thm:diffindop-aux-power}
  The deductive power of \DISop with auxiliary differential variables (rule \irref{diffaux}) exceeds the deductive power of \DISop without auxiliary differential variables.
\end{theorem}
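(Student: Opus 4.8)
The plan is to prove both directions with the same witness formula \rref{eq:diffaux-power}, namely \m{x>0 \limply \dbox{\hevolve{\D{x}=-x}}{x>0}}, that was used for \rref{thm:diffaux-power}. The positive direction is immediate: the auxiliary-variable derivation given in the proof of \rref{thm:diffaux-power} (introduce the fresh $y$ with \m{\D{y}=\frac{y}{2}} and use the equational invariant \m{xy^2=1}) uses only \irref{diffaux} and an equational \irref{diffind} step, hence is a \DISop proof extended by \irref{diffaux}. So everything rests on showing that \rref{eq:diffaux-power} is \emph{not} provable in \DISop without \irref{diffaux}. The single new feature compared with \rref{thm:diffaux-power} is that \irref{diffindop} is now available, which weakens the induction obligation for an \emph{open} invariant $\inv$ by permitting $\inv$ to be assumed in the premise; I have to show this extra strength still does not suffice.

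\emph{The decisive point} is the sign of \m{\D{x}=-x}. Consider first a single open atomic invariant \m{p(x)>0} equivalent to \m{x>0}. Then $p$ is nonconstant, and by \rref{eq:univariante-poly-limits} together with \m{p(x)>0 \lbisubjunct x>0} it has odd degree with a \emph{positive} leading coefficient $c$. The \irref{diffindop} obligation is now merely \m{\lsequent{x>0}{\subst[(\D{p}>0)]{\D{x}}{-x}}}, i.e.\ \m{q(x)>0} for all \m{x>0}, where \m{q(x)=\subst[\D{p}]{\D{x}}{-x}=-x\,p'(x)} has leading term \m{-(\deg p)\,c\,x^{\deg p}} with \emph{negative} leading coefficient. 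Hence \m{q(x)\to-\infty} as \m{x\to\infty}, so \m{q(x)<0} for all large \m{x>0}, contradicting \m{q(x)>0} on the unbounded region \m{x>0}. This is precisely where the minus sign is essential: for \m{\D{x}=x} the factor would be \m{+x\,p'(x)} with matching signs and no contradiction, which is exactly why the displayed \irref{diffindop} derivation proves the \m{+x} variant. The closed atoms are excluded as before: \m{p\geq0} defines a closed set and so cannot equal the open \m{x>0} (and \irref{diffindop} does not even apply to it), while \m{p=0} would force $p$ to vanish at infinitely many points, hence to be trivial.

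\emph{The main obstacle} is upgrading this to arbitrary proofs that interleave differential cuts with \irref{diffindop}. I would replay the cut argument of \rref{thm:diffaux-power} up to the point where one fixes the least $i$ such that \m{F_{\leq i}} excludes \m{-\infty} while \m{F_{\leq i-1}} contains both \m{-\infty} and \m{\infty}; note \m{F_{\leq i}} still contains \m{\infty}, since \m{F_{\leq i}\supseteq F_{\leq n}} with \m{F_{\leq n}} equivalent to \m{x>0}. If the induction step for $F_i$ used ordinary \irref{diffind}, its obligation must hold on all of \m{F_{\leq i-1}\ni\pm\infty} and the odd-degree contradiction of \rref{thm:diffaux-power} applies verbatim. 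The genuinely new case is an \emph{open} $F_i$ justified by \irref{diffindop}, where the obligation \m{\subst[\D{F_i}]{\D{x}}{-x}} --- a conjunction over \emph{all} atoms of $F_i$ --- need only hold on \m{F_{\leq i}}, which contains \m{\infty} but not \m{-\infty}. Here an arbitrary distinguishing atom no longer works, since I can only test at \m{\infty}; I would instead use that (by \rref{lem:diffind-prop-equiv}) $F_i$ is in negation normal form, hence monotone in its atoms, and flips from false at \m{-\infty} to true at \m{\infty}, so some atom \m{p\sim0} with \m{\sim\in\{{>},{\geq}\}} (equations are false at both infinities) is itself false at \m{-\infty} and true at \m{\infty}. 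This forces $p$ to have odd degree and positive leading coefficient --- exactly the orientation handled above --- so its conjunct \m{q\sim0} must hold on \m{F_{\leq i}\ni\infty} while \m{q(x)\to-\infty}, a contradiction. Selecting this particular orientation-correct atom via monotonicity, which \rref{thm:diffaux-power} did not need because both infinities were available there, is the step I expect to require the most care; \rref{lem:DIbase} then packages the conclusion into a statement about arbitrary proofs with arbitrarily many \irref{diffind}/\irref{diffindop} steps and differential cuts.
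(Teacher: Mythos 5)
Your proof is correct and takes essentially the same route as the paper's: the same witness formula \rref{eq:diffaux-power}, the same positive direction via the \irref{diffaux} derivation from \rref{thm:diffaux-power}, and the same contradiction obtained by selecting an atom of $F_i$ that is true at $\infty$ and false at $-\infty$ (hence of odd degree with positive leading coefficient), whose induction-step conjunct \m{q(x)=\subst[\D{p}]{\D{x}}{-x}} has negative leading coefficient and so fails at $\infty$, which still lies in the restricted domain $F_{\leq i}$. Your explicit NNF-monotonicity argument for why an orientation-correct atom (false at $-\infty$, true at $\infty$) must exist is a sound elaboration of a step the paper states only tersely.
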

\begin{proof}
In the proof of \rref{thm:diffaux-power} we have shown a formal proof of \rref{eq:diffaux-power} that uses only auxiliary differential variables (\irref{diffaux}) and even only uses regular differential induction (\irref{diffind}) without differential cuts.

In order to see why \rref{eq:diffaux-power} cannot be proven with regular differential induction, open differential induction, and differential cuts without the help of auxiliary differential variables, we continue the proof of \rref{thm:diffaux-power}.
Again we consider the smallest $F_i$ and an atomic subformula \m{p(x)>0} (or \m{p(x)\geq0}) that distinguishes $-\infty$ and $\infty$ with a univariate polynomial $p(x)$.
The point $\infty$ is in $F_{\leq n}$, so there must be such an atomic subformula that is true at $\infty$ and false at $-\infty$.
Consequently, the leading coefficient of $p(x)$ is positive and $p(x)$ enjoys property \rref{eq:diffaux-power-limineq}.
In open differential induction, the differential invariant $F$ can be assumed in the differential induction step whenever the differential invariant $F$ is open.
Thus, the domain in which the differential induction step needs to hold is no longer $F_{\leq i-1}$ but now restricted to \m{F_{\leq i} \mequiv F_{\leq i-1} \land F_i}.
First note that $F_{\leq i-1}$ includes both $\infty$ and $-\infty$ but $F_i$ (and $F_{\leq i}$) only include $\infty$, not $-\infty$.
Then the rest of the proof of \rref{thm:diffaux-power} does not work, because it assumes both $\infty$ and $-\infty$ to matter in the differential invariance condition.

Yet the leading coefficient $c_{2n+1}$ of $p(x)$ is positive and, by \rref{eq:diffaux-power-limineq}, $p(x)$ is of odd degree.
Abbreviate \m{\subst[\D{p}]{\D{x}}{-x}} again by $q(x)$.
Then $q(x)$ is of odd degree and its leading coefficient is negative, because the leading term of $q(x)$ is \m{(2n+1)c_{2n+1}x^{2n}(-x)} and \m{-(2n+1)c_{2n+1}<0}.
But then for $x\to\infty$ (which is in the domain of $F_{\leq i}$), the differential invariant condition \m{q(x)>0} or \m{q(x)\geq0} evaluates to false, which is a contradiction.
\qedhere
\end{proof}

\section{Differential Cut Power}%
\label{sec:diffcut}

Differential cuts (rule \irref{diffcut} on p.\,\pageref{ir:diffcut}) can be used to first prove a lemma about (invariance of a formula along) a differential equation and then restrict the system dynamics to a corresponding subregion.
They are very useful in practice \cite{DBLP:conf/cav/PlatzerC08,DBLP:journals/fmsd/PlatzerC09,Platzer10} especially for finding proofs.
But in some cases, they are just a shortcut for a more difficult proof with a more difficult differential invariant.
This happens, for instance, in the class of air traffic control properties that we had originally conjectured to crucially require differential cuts four years ago \cite{DBLP:journals/logcom/Platzer10}.
Interestingly, no such single invariant was found by a template search with 252 unknowns \cite{DBLP:conf/hybrid/Sankaranarayanan10}.
But we have now found out that it still exists (omitted for space reasons).
Is this always the case?
Can all uses of differential cuts (\irref{diffcut}) be eliminated and turned into a proof of the same property without using \irref{diffcut}?
Is there a differential cut elimination theorem for differential cuts just like there is Gentzen's cut elimination theorem for standard cuts \cite{Gentzen35,Gentzen35I}?
Are all properties that are provable using \irref{diffcut} also provable without \irref{diffcut}?

As the major result of this work, we refute the differential cut elimination hypothesis.
Differential cuts (rule \irref{diffcut}) are \emph{not} just admissible proof rules that can be eliminated, but an inherent proof rule that adds to the deductive power of the proof system.
The addition of differential cuts to differential induction is a significant extension of the deductive power, because, when disallowing differential cuts (like all other approaches do), the deductive power of the proof system strictly decreases.

\begin{theorem}[Differential cut power] \label{thm:diffcut-power}
  The deductive power of differential induction with differential cuts (rule \irref{diffcut}) exceeds the deductive power without differential cuts.
  \[ \DIS > \DI \]
\end{theorem}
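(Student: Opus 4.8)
Since every \DI proof is already a \DIS proof, the inclusion $\DI \leq \DIS$ is immediate, and the whole content of the theorem is the strict part $\DIS \not\leq \DI$. The plan is therefore to exhibit a single valid invariance property that \DIS proves with one differential cut but that \DI cannot prove at all. I would take the property
\[
 x \geq 0 \land y \geq 0 \limply \dbox{\hevolve{\D{x}=y\syssep\D{y}=y^2}}{(x \geq 0 \land y \geq 0)},
\]
which is valid because $\D{y}=y^2\geq0$ keeps $y$ nonnegative and then $\D{x}=y\geq0$ keeps $x$ nonnegative. A one-cut proof is direct: apply \irref{diffcut} with $C\equiv y\geq0$; the left premise closes by \irref{diffind} since $\D{y}=y^2\geq0$ holds identically, and the right premise, whose evolution domain is now strengthened to $y\geq0$, closes by \irref{diffind} since the induction step for $x\geq0\land y\geq0$ reduces to $y\geq0\land y^2\geq0$, which holds under the domain assumption $y\geq0$. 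Crucially, without the cut the same conjunction fails the differential induction test, because its induction step $y\geq0\land y^2\geq0$ is not valid on the full plane (it fails for $y<0$); this already rules out a single \irref{diffind} step and motivates the full lower bound.

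The heart of the proof is showing unprovability in \DI even with arbitrarily many differential invariants. Following the template of the other separations, I would invoke \rref{lem:DIbase} to reduce any hypothetical \DI proof to a finite nonempty family of nontrivial differential invariants $F_i$ of $\hevolve{\D{x}=y\syssep\D{y}=y^2}$ satisfying $\entails (x\geq0\land y\geq0)\limply\lorfold_i F_i$ and $\entails F_i \limply (x\geq0\land y\geq0)$ for every $i$. The key structural observation concerns the line $\{y=0\}$: for this system, substituting $\D{x}=y,\D{y}=y^2$ into the total derivative of any polynomial $p$ gives $p_x\,y+p_y\,y^2 = y(p_x+y\,p_y)$ (with $p_x,p_y$ the partial derivatives), which vanishes identically on $\{y=0\}$. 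Since the evolution domain here is trivial, the differential induction condition for $F_i$ forces \emph{each} of its atoms $p\sim0$ to satisfy its derivative condition on all of $\reals^2$. For an atom $p\geq0$, the requirement $y(p_x+y\,p_y)\geq0$ everywhere forces $p_x(x,0)=0$ for all $x$ (compare the one-sided limits as $y\to0^\pm$), so $p(x,0)$ is constant in $x$; the $p=0$ case forces the same, and a strict atom either gives the same constraint or cannot be a nontrivial invariant at all. Hence every atom, and so every propositional combination $F_i$, has an $x$-independent truth value along the entire line $\{y=0\}$.

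The contradiction is then short. Because each $F_i$ is constant along $\{y=0\}$ while $\entails F_i\limply(x\geq0\land y\geq0)$ forbids it from holding at $(-1,0)$, each $F_i$ must be false on all of $\{y=0\}$; but then $\lorfold_i F_i$ is false at $(1,0)$, contradicting $\entails(x\geq0\land y\geq0)\limply\lorfold_i F_i$ since $(1,0)$ lies in the precondition region. This yields $\DIS\not\leq\DI$ and hence $\DI<\DIS$. I expect the main obstacle to be exactly the structural lemma about $\{y=0\}$: one must show that the \emph{globally} required differential induction condition — the very thing the differential cut would otherwise localize to the subregion $y\geq0$ — collapses every candidate invariant to be $x$-independent precisely on the boundary line the argument exploits, and one must make this robust across all atom types (discarding trivial constant polynomials via the Prelude, and accommodating the $\D{\cdot}$ convention for strict inequalities) and across arbitrary propositional structure rather than a single atom.
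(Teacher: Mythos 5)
Your proof is correct, and it reaches the separation $\DI < \DIS$ by a genuinely different route than the paper. The paper's witness is the equilibrium-free system $\hevolve{\D{x}=y\syssep\D{y}=1}$ with the same pre/postcondition $x\geq0\land y\geq0$ and the same one-cut proof via $C \mequiv y\geq0$; its unprovability argument is global and asymptotic: any admissible atom must have a leading monomial $a_{n,m}x^n y^m$ (in the lexicographic order $x\succ y$) of odd degree $n$ in $x$, and a sign analysis of the Lie derivative's leading part $na_{n,m}x^{n-1}y^{m+1}+ma_{n,m}x^{n}y^{m-1}$ for $y\gg0$, $y\ll0$ and $x\to\pm\infty$ yields contradictions in both cases $m=0$ and $m\neq0$. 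You instead pick $\hevolve{\D{x}=y\syssep\D{y}=y^2}$, whose vector field vanishes on the entire line $\{y=0\}$, and this degeneracy makes the hard direction local and elementary: every Lie derivative factors as $y(p_x+yp_y)$, so the everywhere-valid induction condition extracted by \rref{lem:DIbase} (the evolution domain being trivial) forces $p_x(x,0)\equiv 0$ by comparing signs as $y\to0^{\pm}$ (and under this paper's convention $\D{(p>0)} \mequiv \D{p}>0$, strict atoms are excluded outright, since their condition fails identically on the line); hence every candidate invariant is constant along $\{y=0\}$, and its falsity at $(-1,0)$ --- forced by the postcondition --- propagates to $(1,0)$, contradicting coverage of the precondition. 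Both arguments are sound; what they buy differs. The paper's example shows that differential cuts are indispensable already for the simplest non-degenerate dynamics (no equilibria, polynomial flows), at the price of the leading-monomial parity-and-sign analysis; your example trades non-degeneracy for brevity and locality, and as a bonus has non-polynomial, finite-time blow-up solutions, so it cannot be sidestepped by solving the ODE either --- indeed your argument is, verbatim up to the exponent, an unprovability proof for the system $\hevolve{\D{x}=y\syssep\D{y}=y^4}$ that the paper mentions right after its proof as a more robust variant but does not actually analyze.
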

The first key insight in the proof of \rref{thm:diffcut-power} is that, for sufficiently large, but fixed, \m{y\gg0} or sufficiently small, but fixed, \m{y\ll0}, the sign of a polynomial \m{p=\sum_{i,j} a_{i,j}x^iy^j} in the limit where either \m{x\to\infty} or \m{x\to-\infty} is determined entirely by the sign of the leading monomial \m{a_{n,m}x^ny^m} with respect to the lexicographical order induced by \m{x\succ y}.
That is, the biggest $n,m \in \naturals$ with \m{a_{n,m}\neq0} such that there is no \m{N>n} and no $j\in\naturals$ with \m{a_{N,j}\neq0} and there is no \m{M>m} with \m{a_{n,M}\neq0}.
The reason why the leading monomial \m{a_{n,m}x^ny^m} dominates is that, for \m{x\to\pm\infty}, the highest degree terms in variable $x$ dominate smaller degree monomials.
Furthermore, for sufficiently large \m{y\gg0} (and for sufficiently small \m{y\ll0}), the highest degree term in variable $y$ among those highest degree terms in $x$ dominates the impact of coefficients of smaller degree.
\begin{proof}[Proof of \rref{thm:diffcut-power}]
\renewcommand{\arraystretch}{1.2}%
Consider the formula
\begin{equation}
  x\geq0\land y\geq0 \limply \dbox{\hevolve{\D{x}=y\syssep\D{y}=1}}{(x\geq0\land y\geq0)}
  \label{eq:diffcut-power}
\end{equation}
First, we show that formula \rref{eq:diffcut-power} is provable easily with a differential cut; see \rref{fig:diffcut-power}.
\begin{figure*}[tb]
\advance\leftskip-1cm
\begin{minipage}{\textwidth}
\small
\begin{sequentdeduction}[array]
  \linfer[diffcut]
    {\linfer[diffind]
      {\linfer%
        {\linfer[qear]
          {\lclose}
          {\lsequent{}{1\geq0}}
        }
        {\lsequent{}{\subst[(\D{y}\geq0})]{\D{x}}{y}\subst[\,]{\D{y}}{1}}
      }
      {\lsequent{x\geq0\land y\geq0}{\dbox{\hevolve{\D{x}=y\syssep\D{y}=1}}{y\geq0}
      }}
    !
   \linfer[diffind]
     {\linfer%
       {\linfer[qear]
         {\lclose}
         {\lsequent{y\geq0}{y\geq0\land 1\geq0}}
       }
       {\lsequent{y\geq0}{\subst[(\D{x}\geq0\land \D{y}\geq0)]{\D{x}}{y}\subst[\,]{\D{y}}{1}}}
     }
    {\lsequent{x\geq0\land y\geq0}{\dbox{\hevolvein{\D{x}=y\syssep\D{y}=1}{y\geq0}}{(x\geq0\land y\geq0)}}}
    }
    {\lsequent{x\geq0\land y\geq0}{\dbox{\hevolve{\D{x}=y\syssep\D{y}=1}}{(x\geq0\land y\geq0)}}}
\end{sequentdeduction}
\end{minipage}
  \caption{Differential cut power: a proof of a simple property that requires differential cuts, not just differential invariants}
  \label{fig:diffcut-power}
\end{figure*}

Now, we need to show that \rref{eq:diffcut-power} is not provable without differential cuts.
Suppose \rref{eq:diffcut-power} was provable by a single differential induction step with a formula $F$ as differential invariant.
Then
\begin{enumerate}[(1)]
\item \label{case:diffcut-power-pre}
  \m{\entails x\geq0 \land y\geq0 \limply F} (precondition implies differential invariant), and
\item \label{case:diffcut-power-post}
  \m{\entails F \limply x\geq0 \land y\geq0} (differential invariant implies postcondition), and
\item \label{case:diffcut-power-step} 
  \m{\entails \subst[\D{F}]{\D{x}}{y}\subst[\,]{\D{y}}{1}} (differential induction step).
\end{enumerate}
By condition~\ref{case:diffcut-power-post}, there has to be a subformula of $F$ in which $x$ occurs (with non-zero coefficient).
This subformula is of the form \m{p\geq0} (or \m{p>0} or \m{p=0})
with a polynomial \m{p:=\sum_{i,j} a_{i,j} x^iy^j}.
By condition~\ref{case:diffcut-power-pre}, there even has to be such a formula of the form \m{p\geq0} or \m{p>0}, because the set described by \m{p=0} has (Lebesgue) measure zero (as $p$ is not the zero polynomial), yet the precondition has non-zero measure (otherwise, if $F$ only had equational subformulas, then the region described by $F$ would have measure zero, contradicting condition~\ref{case:diffcut-power-pre}, or would be trivial \m{0=0}, contradicting condition~\ref{case:diffcut-power-post}).

Consider the leading term \m{a_{n,m}x^ny^m} of $p$ with respect to the lexicographical order induced by \m{x\succ y}.
By condition~\ref{case:diffcut-power-post}, $F$ needs to have a subformula (\m{p\geq0} or \m{p>0}), in which the leading term \m{a_{n,m}x^ny^m} with respect to \m{x\succ y} has odd degree $n$ in $x$ (otherwise, if all leading terms had even degree in $x$, then, for sufficiently large \m{y\gg0}, the truth-values for \m{x\to-\infty} and for \m{x\to\infty} would be identical and, thus, $F$ cannot entail \m{x\geq0} as required by condition~\ref{case:diffcut-power-post}).
By condition~\ref{case:diffcut-power-step}, we know, in particular, that the following holds:
\begin{equation}
  \entails \subst[\D{p}]{\D{x}}{y}\subst[\,]{\D{y}}{1}\geq0
  \quad (\text{or}~ \entails \subst[\D{p}]{\D{x}}{y}\subst[\,]{\D{y}}{1}>0
  ~\text{respectively})
  \label{eq:diffcut-power-step-part} 
\end{equation}
Note that, when forming $\D{F}$ and transforming $p$ into \m{\subst[\D{p}]{\D{x}}{y}\subst[\,]{\D{y}}{1}}, the lexicographical monomial order induced by \m{x\succ y} strictly decreases.
The leading term (with respect to the lexicographical order induced by $x\succ y$) of \m{\subst[\D{p}]{\D{x}}{y}\subst[\,]{\D{y}}{1}} comes from the leading term \m{a_{n,m}x^ny^m} of $p$, and is identical to the leading term of
\begin{align*}
\ell ~:=~& \subst[(na_{n,m}x^{n-1}\D{x}y^m + ma_{n,m}x^ny^{m-1}\D{y})]{\D{x}}{y}\subst[\,]{\D{y}}{1}
\\=~& na_{n,m}x^{n-1}y^{m+1} + ma_{n,m}x^ny^{m-1}
\end{align*}

Now, for sufficiently large \m{y\gg0} or sufficiently small \m{y\ll0}, we see that, in the limit of \m{x\to\pm\infty}, the sign of \m{\subst[\D{p}]{\D{x}}{y}\subst[\,]{\D{y}}{1}} is identical to the sign of $\ell$, because \m{a_{n,m}x^ny^m} is the leading term for the lexicographical order with \m{x\succ y} and the forming of $\D{F}$ does not increase the degree of $x$.
There are two cases to consider:
\begin{iteMize}{$\bullet$}
\item Case $m=0$:
  Then \m{\ell = na_{n,0}x^{n-1}y}.
  Because \rref{eq:diffcut-power-step-part} holds (for all $x,y$), we have, in particular, that
  \begin{enumerate}[(1)]
  \item \m{\ell\geq0} for \m{y\gg0, x\to\pm\infty}. Hence, \m{n-1} is even and \m{a_{n,0}\geq0}.
  \item \m{\ell\geq0} for \m{y\ll0, x\to\pm\infty}. Hence, \m{n-1} is even and \m{a_{n,0}\leq0}.
  \end{enumerate}
  Together, these imply \m{a_{n,0}=0}, which contradicts the fact that \m{a_{n,m}\neq0}, because \m{a_{n,m}} is the leading term.
\item Case $m\neq0$:
  Because \rref{eq:diffcut-power-step-part} holds (for all $x,y$), we have, in particular, that
  \begin{enumerate}[(1)]
  \item \m{\ell\geq0} for \m{y\gg0, x\to\pm\infty}. Then $\ell$ is dominated by the right term \m{ma_{n,m}x^ny^{m-1}}, which has higher degree in $x$. Hence, \m{n} is even and \m{a_{n,m}\geq0}.
  But this contradicts the fact that $n$ is odd.
  \end{enumerate}
\end{iteMize}
In both cases, we have a contradiction, showing that \rref{eq:diffcut-power} is not provable with a differential invariant without differential cuts (\irref{diffcut}).

Finally, we extend the argument to arbitrary proofs with an arbitrary number of uses of \irref{diffind} but without differential cuts (\irref{diffcut}).
By \rref{lem:DIbase}, every \DI proof of \rref{eq:diffcut-power} uses a non-zero finite number of (nontrivial) differential invariants $F_i$, such that each $F_i$ satisfies conditions~\ref{case:diffcut-power-post} and~\ref{case:diffcut-power-step} and
\m{\entails x\geq0\land y\geq0 \limply \lorfold_i F_i}.
Since finite unions of sets of measure zero have measure zero, the latter implies that at least one of the $F_i$ has non-zero measure, which is the only consequence of condition~\ref{case:diffcut-power-pre} that we have used in the first part of the proof.
This formula $F_i$ of non-zero measure, thus, satisfies all the conditions about $F$ that lead to a contradiction (having non-zero measure and conditions~\ref{case:diffcut-power-post} and~\ref{case:diffcut-power-step}).
\qedhere
\end{proof}
For traceability purposes, we use a very simple dynamics in this proof. This particular example could, in fact, still easily be solved with polynomial solutions using auxiliary differential variables (\irref{diffaux}) instead.
Yet, a similar example with more involved dynamics is, e.g., the following, which does not even have a polynomial solution, but is still easily provable by the differential cut \m{y\geq0}:
\[
x\geq0\land y\geq0 \limply \dbox{\hevolve{\D{x}=y\syssep\D{y}=y^4}}{(x\geq0\land y\geq0)}
\]
\section{Related Work}

There are numerous approaches to verifying hybrid systems
\cite{DBLP:conf/lics/Henzinger96,DBLP:conf/hybrid/GreenstreetM99,DBLP:conf/hybrid/AsarinDG03,DBLP:journals/tac/GirardP07,DBLP:journals/sttt/Frehse08}.
We focus our discussion on approaches that are based on proof certificates or similar indirect witnesses for verification.

Approaches based on Lyapunov functions and tangent cones have been used in control, including positively invariant sets and viability theory; see \cite{DBLP:journals/automatica/Blanchini99} for an overview.
These approaches are very successful for linear systems.
Even though the overall theory is interesting, it is purely semantical and defined in terms of limit properties of general functions, which are not computable, even in rich computation frameworks \cite{DBLP:journals/mst/Collins07}.
Similarly, working with solutions of differential equations, which are defined in terms of limits of functions, leads to sound and mathematically well-defined but generally incomputable approaches (except for simple cases like nilpotent linear systems).

The whole point of our approach is that differential invariants are defined in terms of logic and differential algebra and allow us to replace incomputable semantic limit processes by decidable proof rules.
The simplicity of our differential invariants also makes them computationally attractive.
The purpose of this paper is to study the proof theory of differential equations and differential invariants, not just the semantics or mathematical limit processes, which would require arbitrary higher-order logic to reason about.

Differential invariants are related to several other interesting approaches using variations of Lie derivatives, including barrier certificates \cite{DBLP:conf/hybrid/PrajnaJ04,DBLP:journals/tac/PrajnaJP07}, equational templates \cite{DBLP:journals/fmsd/SankaranarayananSM08}, and a constraint-based template approach \cite{DBLP:conf/cav/GulwaniT08}.
Differential invariants provide a general invariance principle of general logical formulas for differential equations.
Other Lie-type approaches to proving invariance \cite{DBLP:conf/hybrid/PrajnaJ04,DBLP:journals/tac/PrajnaJP07,DBLP:journals/fmsd/SankaranarayananSM08,DBLP:conf/cav/GulwaniT08} are either included as a special case of differential invariants or have been suggested with extra assumptions as in the rule \irref{diffindunsound} discussed on p.~\pageref{ir:diffindunsound}, which we do not pursue.

Verification with barrier certificates \cite{DBLP:conf/hybrid/PrajnaJ04} fits to the general rule schema \irref{diffind} where $\inv$ has the special form \m{p\leq0} for a polynomial $p$.
Barrier certificates have also been proposed \cite{DBLP:conf/hybrid/PrajnaJ04} with an extra assumption \m{p=0} in the antecedent of the premise of \irref{diffind}, but that needs a modification in the induction step for soundness \cite{DBLP:journals/tac/PrajnaJP07}.
This modification does not work for more general logical formulas, e.g., with equations.
Even stronger extra assumptions than in \cite{DBLP:conf/hybrid/PrajnaJ04} have been proposed in \cite{DBLP:conf/cav/GulwaniT08}. Modifications of those rules for some special cases have been proposed later on \cite{DBLP:conf/fsttcs/TalyT09}, again based on semantic notions like the tangent cone \cite{DBLP:journals/automatica/Blanchini99}.
Equational templates \cite{DBLP:journals/fmsd/SankaranarayananSM08} are equational differential invariants of the form \m{p=0} for a polynomial $p$, yet with a slightly modified extra assumption.
They do not support inequalities.

Like our first work on differential invariants \cite{DBLP:journals/logcom/Platzer10} and our subsequent search procedure \cite{DBLP:conf/cav/PlatzerC08,DBLP:journals/fmsd/PlatzerC09}, other approaches \cite{DBLP:conf/hybrid/PrajnaJ04,DBLP:journals/tac/PrajnaJP07,DBLP:journals/fmsd/SankaranarayananSM08,DBLP:conf/cav/GulwaniT08} also need to assume that the user provides the right template (sometimes indirectly via a degree bound) to instantiate, but it is not clear how that has to be chosen.
In this paper, we answer the orthogonal question about provability trade-offs in classes of templates, which can better inform decisions about which template classes to consider, for our approach and for others \cite{DBLP:conf/hybrid/PrajnaJ04,DBLP:journals/tac/PrajnaJP07,DBLP:journals/fmsd/SankaranarayananSM08,DBLP:conf/cav/GulwaniT08}.

We consider differential invariants, since they are a sound verification technique, and their logical setting makes it possible to study relative deductive power.
Differential invariants also work for more general logical formulas, which leads to a more interesting and more comprehensive hierarchy of classes.
We have shown how the deductive power increases when considering more general logical formulas as differential invariants than, e.g., the single polynomials considered in other approaches \cite{DBLP:conf/hybrid/PrajnaJ04,DBLP:journals/tac/PrajnaJP07,DBLP:journals/fmsd/SankaranarayananSM08}.
Our results also show that differential cuts and auxiliary differential variables, both of which we have proven to be fundamental proof principles that increase the deductive power, would be interesting additions to other approaches \cite{DBLP:conf/hybrid/PrajnaJ04,DBLP:journals/tac/PrajnaJP07,DBLP:journals/fmsd/SankaranarayananSM08}.
\section{Conclusions and Future Work}

We have considered the differential invariance problem, which, by our relative completeness argument, is at the heart of hybrid systems verification.
To better understand structural properties of hybrid systems, we have identified and analyzed more than a dozen (16) relations between the deductive power of several (9) classes of differential invariants, including subclasses that correspond to related approaches.
Most crucially and surprisingly, we have refuted the differential cut elimination hypothesis and have shown that, unlike standard cuts, differential cuts increase the deductive power.
Our answer to the differential cut elimination hypothesis is the central result of this work.
We have also shown that auxiliary differential variables further increase the deductive power, even in the presence of arbitrary differential cuts.
These findings shed light on fundamental provability properties of hybrid systems and differential equations and are practically important for successful proof search.

Our results require a symbiosis of elements of logic with real arithmetical, differential, semialgebraic, and geometrical properties.
Future work includes investigating this new field further that we call \emph{real differential semialgebraic geometry}, whose development we have only just begun.
One interesting question, e.g., is how our differential invariance chart changes in the presence of disturbances, which differential invariants can handle \cite{DBLP:journals/logcom/Platzer10}.

\section*{Acknowledgement}
I am grateful to the anonymous referees for their valuable feedback and to Frank Herrlich who taught me algebraic geometry.

\bibliographystyle{alpha}
\bibliography{diffcut}

\appendix
\section{Background Proof Rules} \label{app:proof-rules}
Figure~\ref{fig:rules} shows the proof rules that we assume as background rules for our purposes.
They consist of the standard propositional sequent proof rules including the axiom (\irref{axiom}) and \irref{cut} rule for glueing proofs together.
Rule \irref{qear} can use any valid instance of a first-order real arithmetic tautology as a proof rule.
This rule is a simplification of more constructive deduction modulo proof rules for real arithmetic and modular quantifier elimination \cite{DBLP:journals/jar/Platzer08,DBLP:journals/logcom/Platzer10,Platzer10}, which we do not need to consider in this paper.
Except for rule \irref{qear}, the rules in \rref{fig:rules} are standard and listed just for the sake of a complete presentation.
\begin{figure}[htbp]
  \def\globalrule{g}%
  \tabcolsep=1pt%
  \def\leftrule{l}
  \def\rightrule{r}
  \newdimen\linferenceRulehskipamount
  \linferenceRulevskipamount=5mm
    \linferenceRulehskipamount=4mm%
  \newcommand{\dupdate}[2]{\dmodality{\pupdate{#1}}{#2}}%
  \newcommand{\dupdatevar}[1]{\dmodality{\pupdatevar{#1}}}%
  \begin{calculuscollections}{\textwidth}
  \begin{calculuscollection}[prefix=P,reset,context=L]
    \begin{calculus}
      \cinferenceRule[notr|$\lnot$\rightrule]{$\lnot$ right}
      {\linferenceRule[sequent]
        {\lsequent{\phi}{}}
        {\lsequent{}{\lnot \phi}}
      }{}
      \cinferenceRule[notl|$\lnot$\leftrule]{$\lnot$ left}
      {\linferenceRule[sequent]
        {\lsequent{}{\phi}}
        {\lsequent{\lnot \phi}{}}
      }{}
    \end{calculus}
    \hspace{\linferenceRulehskipamount}
    \begin{calculus}
      \cinferenceRule[orr|$\lor$\rightrule]{$\lor$ right}
      {\linferenceRule[sequent]
        {\lsequent{}{\phi, \psi}}
        {\lsequent{}{\phi \lor \psi}}
      }{}
      \cinferenceRule[orl|$\lor$\leftrule]{$\lor$ left}
      {\linferenceRule[sequent]
        {\lsequent{\phi}{}
          & \lsequent{\psi}{}}
        {\lsequent{\phi \lor \psi}{}}
      }{}
    \end{calculus}
    \hspace{\linferenceRulehskipamount}
    \begin{calculus}
      \cinferenceRule[andr|$\land$\rightrule]{$\land$ right}
      {\linferenceRule[sequent]
        {\lsequent{}{\phi}
          & \lsequent{}{\psi}}
        {\lsequent{}{\phi \land \psi}}
      }{}
      \cinferenceRule[andl|$\land$\leftrule]{$\land$ left}
      {\linferenceRule[sequent]
        {\lsequent{\phi , \psi}{}}
        {\lsequent{\phi \land \psi}{}}
      }{}
    \end{calculus}
    \hspace{\linferenceRulehskipamount}
    \begin{calculus}
      \cinferenceRule[implyr|$\limply$\rightrule]{$\limply$ right}
      {\linferenceRule[sequent]
        {\lsequent{\phi}{\psi}}
        {\lsequent{}{\phi \limply \psi}}
      }{}
      \cinferenceRule[implyl|$\limply$\leftrule]{$\limply$ left}
      {\linferenceRule[sequent]
        {\lsequent{}{\phi}
          & \lsequent{\psi}{}}
        {\lsequent{\phi \limply \psi}{}}
      }{}
    \end{calculus}
    \hspace{\linferenceRulehskipamount}
    \begin{calculus}
      \cinferenceRule[axiom|$ax$]{axiom}
      {\linferenceRule[sequent]
        {}
        {\lsequent{\phi}{\phi}}
      }{}
      \cinferenceRule[cut|$cut$]{cut}
      {\linferenceRule[sequent]
        {\lsequent{}{\phi}
        &\lsequent{\phi}{}}
        {\lsequent{}{}}
      }{}
    \end{calculus}
    \hspace{\linferenceRulehskipamount}\hspace{1.9cm}
    \begin{calculus}
      \cinferenceRule[qear|\usebox{\Rval}]{real arithmetic}
      {\linferenceRule[sequent]
        {\lsequent[l]{\tilde{\Gamma}}{\tilde{\Delta}}}
        {\lsequent[l]{\Gamma}{\Delta}}
      }{if $(\tilde{\Gamma}\limply\tilde{\Delta})\limply(\Gamma\limply\Delta)$ is an instance of a valid tautology of first-order real arithmetic}
    \end{calculus}
  \end{calculuscollection}
  \end{calculuscollections}
  \caption{Basic proof rules}
  \label{fig:rules}
\end{figure}

Despite the presence of arbitrary cuts (\irref{cut}) in our proof calculus, only differential invariants (using rule \irref{diffind}) and, for simple properties, differential weakening (rule \irref{diffweak}) ultimately prove properties of differential equations.
The propositional proof rules (including \irref{cut}) and real arithmetic rule (\irref{qear}) can perform arbitrary case distinctions, consider subregions, transform and handle the arithmetic etc., all of which is necessary.
But they cannot, on their own, prove properties of differential equations that are not just mere first-order tautologies.
Likewise, differential cuts (rule \irref{diffcut}) are very powerful, but still ultimately need differential invariants (rule \irref{diffind}) to prove at least their respective left premise.
By inspecting the proof rules, it is easy to see that the only proof rules in \DI that can derive nontrivial properties of differential equations are \irref{diffind}, \irref{diffweak}, and \irref{genb}. 
Rule \irref{diffcut} is not part of the \DI calculus and, thus, rule \irref{diffweak} cannot derive a formula without evolution domain region.
Rule \irref{genb} cannot be used instead of \irref{diffind}, only in addition to \irref{diffind}, because the extra premise resulting from \irref{genb} still needs to be proved, ultimately, using \irref{diffind} for a sufficiently strong differential invariant.
We, thus, prove that, if no differential cuts with rule \irref{diffcut} are used, then all provable properties of differential equations follow directly or indirectly from differential invariants (\irref{diffind}).
This result depends on the proof rules we consider.
\begin{lemma}[Differential invariance base] \label{lem:DIbase}
  If \m{\lsequent{A}{\dbox{\hevolve{\D{x}=\theta}}{B}}} is provable in \DI[\Omega] for a set $\Omega$ of operators and nontrivial first-order formulas $A,B$ (i.e., \m{A\mnequiv\lfalse}, \m{B\mnequiv\ltrue}),
  then there is a (non-zero) finite number of \dL formulas \m{\lsequent{F_i}{\dbox{\hevolve{\D{x}=\theta}}{F_i}}} provable by \irref{diffind} in \DI[\Omega] with quantifier-free first-order formulas $F_i$ over the operators $\Omega$ such that the following \dL formulas are valid
  \begin{align*}
    \entails & A\limply\lorfold_i F_i\\
    \entails & \landfold_i (F_i\limply B)
  \end{align*}
\end{lemma}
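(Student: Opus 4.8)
The plan is to exploit the observation from \rref{app:proof-rules} that, in the absence of \irref{diffcut}, the only proof rules of \DI[\Omega] that can establish anything about the modality \m{\dbox{\hevolve{\D{x}=\theta}}{G}} beyond treating it as an opaque atom are \irref{diffind} and \irref{genb}. Since the lemma concerns \m{\hevolve{\D{x}=\theta}} with no evolution domain constraint, the premise \m{\lsequent{\ltrue}{\inv}} of \irref{diffweak} forces \m{\inv} to be valid, so \irref{diffweak} can only ever introduce the modality applied to a valid postcondition. First I would fix a proof of \m{\lsequent{A}{\dbox{\hevolve{\D{x}=\theta}}{B}}} and record the finite list \m{F_1,\dots,F_k} of differential invariants occurring as conclusions of \irref{diffind} steps; each \m{F_i} is a quantifier-free \m{\Omega}-formula and each \m{\lsequent{F_i}{\dbox{\hevolve{\D{x}=\theta}}{F_i}}} is \irref{diffind}-provable by the very subproof witnessing it. One may further assume, by an argument about the box-depth along the branches that build the endsequent's modality, that every modal formula in the proof has the shape \m{\dbox{\hevolve{\D{x}=\theta}}{G}} for a quantifier-free first-order \m{G}.

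Next I would abstract the proof by replacing each distinct modal formula \m{\dbox{\hevolve{\D{x}=\theta}}{G}} uniformly by a fresh atom \m{P_G}, and collect the only two kinds of \emph{facts} the calculus supplies about these atoms: each \irref{diffind} step with invariant \m{F_i} yields the generative fact \m{F_i\limply P_{F_i}}, and each \irref{genb} step with premise \m{\lsequent{G}{H}} yields the monotonicity fact \m{P_G\limply P_H} for a valid \m{\entails G\limply H} (while \irref{diffweak} only ever yields \m{P_G} for valid \m{G}). The hard part is to verify, rule by rule, that the abstracted derivation is sound over the resulting fact-base \m{T}: the propositional rules, \irref{axiom}, \irref{cut}, and \irref{qear} all handle the \m{P_G} as uninterpreted atoms, a \irref{qear} step being a first-order real-arithmetic tautology instance that holds under every valuation of the atoms, so that no rule manufactures an ODE-specific relation between distinct box-atoms. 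Establishing this carefully is the main obstacle, since it is precisely where the absence of \irref{diffcut}, which would introduce new evolution-domain relations among modalities, is used. The payoff is that the given proof witnesses \m{\entails A\limply P_B} in every valuation satisfying \m{T}.

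Finally I would derive both required validities from a single valuation argument. Pick any state \m{\iget[state]{\I}} with \m{\imodels{\I}{A}}, which exists because \m{A\mnequiv\lfalse}, and evaluate the atoms by declaring \m{P_G} true exactly when either \m{G} is valid or there is an \m{i} with \m{\imodels{\I}{F_i}} and \m{\entails F_i\limply G}. This valuation satisfies \m{T}: the \irref{diffind} facts hold by taking \m{G\mequiv F_i}, the \irref{genb} facts hold by transitivity of entailment, the \irref{diffweak} facts hold by the validity clause, and the \irref{qear} validities hold as tautology instances. Since \m{\imodels{\I}{A}}, soundness over \m{T} forces \m{P_B} true, and because \m{B\mnequiv\ltrue} the validity clause is unavailable, so there is an \m{i} with \m{\imodels{\I}{F_i}} and \m{\entails F_i\limply B}. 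Taking \m{R=\{i:\entails F_i\limply B\}}, this simultaneously gives \m{\entails\landfold_{i\in R}(F_i\limply B)} by definition of \m{R}, and \m{\entails A\limply\lorfold_{i\in R}F_i} because the argument applies to every state satisfying \m{A}; the set \m{R} is nonempty since \m{A} is satisfiable, and each \m{F_i} with \m{i\in R} is a quantifier-free \m{\Omega}-formula provable by \irref{diffind}, which is exactly the claimed finite family.
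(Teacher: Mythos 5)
Your proposal is correct in substance and, at its core, runs the same argument as the paper's proof: both treat the modalities as uninterpreted atoms for the propositional and real-arithmetic part of the proof, observe that in \DI[\Omega] only \irref{diffind}, \irref{genb}, and \irref{diffweak} supply facts about those atoms, and then argue state-by-state that for every state satisfying $A$ some \irref{diffind}-invariant true at that state must entail $B$. The difference is in execution. The paper fixes a proof with minimally many \irref{cut} and \irref{genb} applications, performs a structural analysis to extract a single first-order tautology of Horn shape, and then recursively traces \irref{genb} chains through that tautology; you instead collect a Horn fact-base $T$ and discharge everything with one explicit valuation (\m{P_G} true iff $G$ is valid or some recorded invariant $F_i$ holds at the state and entails $G$), whose closure under the \irref{genb} facts is just transitivity of entailment. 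This buys you independence from any minimality assumption and eliminates the recursive tracing, which is a genuine simplification of the bookkeeping.

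There is, however, one concrete gap in your setup. The claim that one may assume every modal formula in the proof has the shape \m{\dbox{\hevolve{\D{x}=\theta}}{G}}, justified ``by an argument about box-depth,'' is off target: box-depth is not the issue (the fragment has no nested modalities at all); the issue is evolution domain constraints. A \irref{cut} can introduce a formula \m{\dbox{\hevolvein{\D{x}=\theta}{\ivr}}{G}} with nontrivial $\ivr$, which can then be proved, e.g., by \irref{diffweak} with a non-valid $G$ (its premise is \m{\lsequent{\ivr}{G}}, not \m{\lsequent{}{G}}), or by \irref{diffind} relative to the domain $\ivr$. Your fact-base as described does not cover these atoms, so the soundness-over-$T$ verification would fail at such leaves. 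The repair is routine and preserves your architecture: index the atoms by the evolution domain as well as the postcondition, note that \irref{diffind}, \irref{diffweak}, and \irref{genb} each stay within a single domain---the only rule relating modalities with different domains is \irref{diffcut}, which is excluded from \DI[\Omega]---and extend your valuation domain-wise (for domain $\ivr$: true iff \m{\ivr\limply G} is valid, or some recorded \irref{diffind}-invariant for that same domain holds at the state and entails $G$). Then only trivial-domain \irref{diffind} facts can force \m{P_B}, which is exactly what the lemma needs, and exactly the point the paper makes when it observes that \irref{diffweak} cannot derive a formula without an evolution domain region and that \irref{diffcut} is the only rule that crosses domains.
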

\begin{proof}
\newcommand*{\hevolveinh}[2]{\hevolve{#1}}%
Consider a proof of goal \m{\lsequent{A}{\dbox{\hevolveinh{\D{x}=\theta}{\ivr}}{B}}} with a minimal number of \irref{cut} uses among the proofs with a minimal number of \irref{genb} uses.
All minimal proofs of \m{\lsequent{A}{\dbox{\hevolveinh{\D{x}=\theta}{\ivr}}{B}}} have the following form (there can be additional premises that are first-order tautologies, close by first-order reasoning, and are of no relevance for this result):
\begin{sequentdeduction}
  \linfer
    {\linfer
      {\linfer{\lclose}{\linfer{\vdots}{\lsequent{\Gamma_1}{\dbox{\hevolvein{\D{x}=\theta}{\ivr_1}}{B_1}}}}
      & \dots
      & \linfer{\lclose}{\linfer{\vdots}{\lsequent{\Gamma_n}{\dbox{\hevolvein{\D{x}=\theta}{\ivr_n}}{B_n}}}}}
      {\lsequent{}{\vdots}}
    }
  {\lsequent{A}{\dbox{\hevolveinh{\D{x}=\theta}{\ivr}}{B}}}
\end{sequentdeduction}
The \m{\lsequent{\Gamma_i}{\dbox{\hevolvein{\D{x}=\theta}{\ivr_i}}{B_i}}} are proved by one of the global rules \irref{diffind}, \irref{diffweak}, \irref{diffcut}, or \irref{genb}
and all rules that conclude \m{\lsequent{A}{\dbox{\hevolveinh{\D{x}=\theta}{\ivr}}{B}}} from the \m{\lsequent{\Gamma_i}{\dbox{\hevolvein{\D{x}=\theta}{\ivr_i}}{B_i}}} are propositional rules or rule \irref{qear}.
In particular, the following is a first-order tautology:
\begin{equation}
\landfold_{i=1}^n (\lsequent{\Gamma_i}{\dbox{\hevolvein{\D{x}=\theta}{\ivr_i}}{B_i}}) \limply (\lsequent{A}{\dbox{\hevolveinh{\D{x}=\theta}{\ivr}}{B}})
\label{eq:FOLpart}
\end{equation}
Formula \rref{eq:FOLpart} is FOL-valid, its premises \m{\landfold_i (\lsequent{\Gamma_i}{\dbox{\hevolvein{\D{x}=\theta}{\ivr_i}}{B_i}})} are \dL-valid, the antecedent $A$ is \dL-satisfiable and the succedent \m{\dbox{\hevolveinh{\D{x}=\theta}{\ivr}}{B}} is not \dL-valid.
Without loss of generality, we can assume there are no extra succedents $\Delta_i$ and all $\Gamma_i$ have no superfluous formulas by weakening (a special case of rule \irref{qear}) in the global premises.

At least one \m{\lsequent{\Gamma_i}{\dbox{\hevolvein{\D{x}=\theta}{\ivr_i}}{B_i}}} occurs that is proved by a global rule, otherwise the proof would be a first-order tautology, yet, the goal is not a first-order tautology since $A$ is first-order yet its the succedent is nontrivial. The propositional rules and rule \irref{qear} are sound for first-order real arithmetic so they cannot derive formulas that are no first-order tautologies. 
If one of the global branches, say, \m{\lsequent{\Gamma_n}{\dbox{\hevolvein{\D{x}=\theta}{\ivr_n}}{B_n}}} is proved by \irref{genb}, then for some $C_n$ the formula \m{C_n\limply B_n} is valid and $\Gamma_n$ is (by weakening via rule \irref{qear}) identical to \m{\dbox{\hevolvein{\D{x}=\theta}{\ivr_n}}{C_n}}. This formula $\Gamma_n$ also needs to appear positively in another branch,
because it is not in the goal, so it could only have been introduced by a \irref{cut}, which leaves the formula in some antecedent.
Note that weakening (via rule \irref{qear}) could remove the positive formula \m{\dbox{\hevolvein{\D{x}=\theta}{\ivr_n}}{C_n}} again, but not from all branches, since then there would have been a simpler proof with less uses of \irref{cut}, contradicting our minimality assumption.
This modal formula \m{\dbox{\hevolvein{\D{x}=\theta}{\ivr_n}}{C_n}} also 
cannot be provable trivially just by propositional and \irref{qear} rules alone, because it must then have been introduced by a \irref{cut} before and the goal would then be provable with less uses of \irref{cut}.
Hence, \m{\dbox{\hevolvein{\D{x}=\theta}{\ivr_n}}{C_n}} is proved by a global rule and must, thus, be identical to some succedent \m{\dbox{\hevolvein{\D{x}=\theta}{\ivr_i}}{B_i}}.
By the form of \irref{genb}, \m{B_i\limply B_n} is valid.
If \m{\dbox{\hevolvein{\D{x}=\theta}{\ivr_i}}{B_i}} is proved by \irref{genb}, the argument repeats and some global branch has to be proved by \irref{diffind}, otherwise the proof does not close.
Note that \irref{diffweak} would not suffice, because \m{B\mnequiv\ltrue}, yet the evolution domain constraint in the goal is trivial ($\ltrue$) and the only rule that can conclude properties without evolution domain constraints from properties with evolution domain constraints, i.e., rule \irref{diffcut}, is not allowed in \DI[\Omega].

We can assume that \m{\lsequent{B_1}{\dbox{\hevolvein{\D{x}=\theta}{\ivr_1}}{B_1}}}, \dots, \m{\lsequent{B_r}{\dbox{\hevolvein{\D{x}=\theta}{\ivr_r}}{B_r}}} for some $r\geq1$ are exactly the global branches that are proved using \irref{diffind} just by reordering the branches.
By weakening (via rule \irref{qear}), no superfluous formulas occur in the $\Gamma_i$.
Furthermore, $B_i$ is in first-order logic over the operators $\Omega$ by assumption.
Since rule \irref{diffweak} does not help without \irref{diffcut} for proving the goal (and irrelevant uses of \irref{diffweak} violate our minimality assumption), the remaining global rule applications are \irref{genb}.
Overall, the first-order tautology \rref{eq:FOLpart}, thus, is of the following form:
\begin{equation}
\landfold_{i=1}^r (\lsequent{B_i}{\dbox{\hevolvein{\D{x}=\theta}{\ivr_i}}{B_i}}) \land
\landfold_{i=r+1}^n (\lsequent{\dbox{\hevolvein{\D{x}=\theta}{\ivr_i}}{B_{l_i}}}{\dbox{\hevolvein{\D{x}=\theta}{\ivr_i}}{B_i}}) \limply (\lsequent{A}{\dbox{\hevolveinh{\D{x}=\theta}{\ivr}}{B}})
\label{eq:FOLpartb}
\end{equation}
By the above argument about the recursive structure of \irref{genb}, we can further assume the branches are ordered such that $l_i<i$ (no cyclic generalizations, which would be unsound).
Note that formula \rref{eq:FOLpartb} is a Horn-logic query with query \m{\dbox{\hevolveinh{\D{x}=\theta}{\ivr}}{B}}, atomic facts $A$, and the antecedent as Horn rules.

Consider any \dL state $\iportray{\I}$ with \m{\imodels{\I}{A}}, which exists by assumption.
Since the antecedent of \rref{eq:FOLpartb} is valid in \dL, it is also true in $\iportray{\I}$.
Since \rref{eq:FOLpartb} is a first-order tautology, some subformula of \rref{eq:FOLpartb} forces \m{\dbox{\hevolveinh{\D{x}=\theta}{\ivr}}{B}} to be true in the state $\iportray{\I}$ (considered as a state of first-order real arithmetic).
By the shape of \rref{eq:FOLpartb}, this implies that the antecedent of \rref{eq:FOLpartb} has to contain some implication \m{\lsequent{\Gamma_i}{\dbox{\hevolvein{\D{x}=\theta}{\ivr}}{B_i}}} with $B_i$ being $B$ such that \m{\imodels{\I}{\Gamma_i}}; otherwise \m{\lnot\dbox{\hevolvein{\D{x}=\theta}{\ivr}}{B}} would be a first-order consistent extension of state $\iportray{\I}$ that invalidates \rref{eq:FOLpartb}.
If $\Gamma_i$ is in first-order logic (i.e., $i\leq r$), then we are done when choosing $\Gamma_i$ to be among the $F_k$.
Otherwise, $\Gamma_i$ is of the form \m{\dbox{\hevolvein{\D{x}=\theta}{\ivr}}{B_{l_i}}} (recall that \m{\entails B_{l_i}\limply B_i}) and must be forced by another formula \m{\lsequent{\Gamma_j}{\dbox{\hevolvein{\D{x}=\theta}{\ivr}}{B_{l_i}}}} in \rref{eq:FOLpartb} with \m{\imodels{\I}{\Gamma_j}} and so on recursively using the  same argument for $\Gamma_j$ instead of $\Gamma_i$.
Repeating the argument for all \dL states $\iportray{\I}$, we obtain a finite number of formulas $F_k$ with the desired property: each $F_k$ is a differential invariant in \DI[\Omega], implies $B$, and their disjunction holds in any state where $A$ holds.
These are finitely many $F_k$ since the formula \rref{eq:FOLpartb} where the $F_k$ are selected from is finite.
There is at least one $F_k$, because, as above, the goal itself is no first-order tautology.
\qedhere
\end{proof}

A similar result holds in the presence of evolution domain constraints.

\section{Soundness of Differential Induction and Differential Cuts} \label{app:sound}
{\newcommand{\crf}{c}%
\newcommand{\der}[2][]{\subst[#2']{\D{x}}{\theta}}%
\newcommand{\If}{\DALint[flow=\varphi]}%
\providecommand{\interval}[1]{#1}%
\providecommand{\argholder}[0]{\cdot}%
We have proved soundness of proof rules \irref{diffind} and \irref{diffcut} and the other rules in previous work \cite{DBLP:journals/logcom/Platzer10}.
In the interest of a self-contained presentation, we repeat the critical soundness proofs here in a simplified form that directly uses the notation of this paper.

For the proof of soundness of \irref{diffind}, we first prove \rref{lem:derivationLemma}, which says that the valuation of syntactic total derivation~$\der{F}$ (with differential equations substituted in) of formula~$F$ as defined in \rref{sec:diffind} coincides with analytic differentiation.
We first prove this derivation lemma for terms~$c$.
\proof[Proof of \rref{lem:derivationLemma}]
  \newcommand{\Iff}{\iconcat[state=\varphi(t)]{\I}}%
  \newcommand{\Ifz}{\iconcat[state=\varphi(\zeta)]{\I}}%
  The proof is by induction on term~$\crf$.
  The differential equation~$\hevolve{\D{x}=\theta}$ is of the form \m{\hevolve{\D{x_1}=\theta_1\syssep\dots\syssep\D{x_n}=\theta_n}}.
  \newcommand{\Ds}[2][]{\D[#1]{}(#2)}%
  \begin{iteMize}{$\bullet$}
   \item If~$\crf$ is one of the variables~$x_j$ for some~$j$ (for other variables, the proof is simple because~$\crf$ is constant during~$\iget[flow]{\If}$) then:
    \begin{displaymath}
      \D[t]{\,{\ivaluation{\Iff}{x_j}}}(\zeta)
      = \ivaluation{\Ifz}{\theta_j}
      = \ivaluation{\Ifz}{\sum_{i=1}^n\Dp[x_i]{x_j}\theta_i}
      \enspace.
    \end{displaymath}
    The first equation holds by definition of the semantics.
    The last equation holds as \m{\Dp[x_j]{x_j}=1} and \m{\Dp[x_i]{x_j}=0} for~$i\neq j$.
    The derivatives exist because~$\iget[flow]{\If}$ is (continuously) differentiable for $x_j$.
   \item If~$\crf$ is of the form~$a+b$, the desired result can be obtained by using the properties of derivatives and semantic valuation:
    \begin{align*}
      & \phantom{=~}
      \D[t]{\,\ivaluation{\Iff}{a+b}}(\zeta)
      && \\
      & =
      \D[t]{\,(\ivaluation{\Iff}{a} + \ivaluation{\Iff}{b})}(\zeta)
      && \ivaluation{\I}{\argholder}~\text{is a linear operator for all}~\iportray{\I}\\
      & =
      \D[t]{\,\ivaluation{\Iff}{a}}(\zeta) + \D[t]{\,\ivaluation{\Iff}{b}}(\zeta)
      && \D[t]{}~\text{is a linear operator}\\
      & =
      \ivaluation{\Ifz}{\der{a}} + \ivaluation{\Ifz}{\der{b}}
      && \text{by induction hypothesis}\\
      & =
      \ivaluation{\Ifz}{\der{a} + \der{b}}
      && \ivaluation{\I}{\argholder}~\text{is a linear operator for}~\iget[state]{\I}=\iget[state]{\Ifz}\\
      & =
      \ivaluation{\Ifz}{\der{(a + b)}}
      && \text{derivation is linear, because}~\Dp[x_i]{}~\text{is~linear}
    \end{align*}
   \item The case where~$\crf$ is of the form~$a \cdot b$ is accordingly, using Leibniz's product rule for \m{\Dp[x_i]{}}; see \cite{Platzer10}.\qed
  \end{iteMize}\smallskip

\begin{proof}[Proof of Soundness of \irref{diffind}]
  {\newcommand{\iFf}[1][t]{\iconcat[state=\varphi(#1)]{\I}}%
  \newcommand{\iFz}{\iFf[\zeta]}%
  \newcommand{\Ifx}{\iFf[\xi]}%
  In order to prove soundness of rule \irref{diffind}, we need to prove that, whenever the premise is valid (true in all states), then the conclusion is valid.
  We have to show that~\m{\imodels{\I}{\inv\limply\dbox{\hevolvein{\hevolve{\D{x}=\theta}}{\ivr}}{\inv}}} for all states~$\iportray{\I}$.
  Let~$\iportray{\I}$ satisfy~\m{\imodels{\I}{\inv}} as, otherwise, there is nothing to show.
  We can assume~$F$ to be in disjunctive normal form and consider any disjunct~$G$ of~$F$ that is true at~$\iportray{\I}$.
  In order to show that~$F$ remains true during the continuous evolution, it is sufficient to show that each conjunct of~$G$ is.
  We can assume these conjuncts to be of the form~$\crf\geq0$ (or~$\crf>0$ where the proof is accordingly).
  Finally, using vectorial notation, we write~$\D{x}=\theta$ for the differential equation system.
  Now let~\m{\iget[flow]{\If}:\interval{[0,r]}\to(V\to\reals)} be any flow of~\m{\hevolvein{\D{x}=\theta}{\ivr}} beginning in~$\iget[flow]{\If}(0)=\iget[state]{\I}$.
  If the duration of~$\iget[flow]{\If}$ is~$r=0$, we have \m{\imodels{{\iFf[0]}}{\crf\geq0}} immediately, because \m{\imodels{\I}{\crf\geq0}}.
  For duration~$r>0$, we show that~$\crf\geq0$ holds all along the flow~$\iget[flow]{\If}$, i.e.,~\m{\imodels{\iFz}{\crf\geq0}} for all~\m{\zeta\in\interval{[0,r]}}.
  
  Suppose there was a~$\zeta\in\interval{[0,r]}$ with~\m{\imodels{\iFz}{\crf<0}}, which will lead to a contradiction.
  The function~$h:\interval{[0,r]}\to\reals$ defined as~\m{h(t)=\ivaluation{\iFf}{\crf}} satisfies the relation~\m{h(0)\geq0>h(\zeta)}, because \m{h(0)=\ivaluation{\iFf[0]}{\crf}=\ivaluation{\I}{\crf}} and~\m{\imodels{\I}{\crf\geq0}} by antecedent of the conclusion.
  By \rref{lem:derivationLemma}, $h$ is continuous on~$\interval{[0,r]}$ and differentiable at every~$\xi\in\interval{(0,r)}$.
  By mean value theorem,
  there is a~$\xi\in\interval{(0,\zeta)}$ such that
  \m{\D[t]{h(t)}(\xi)\cdot(\zeta-0) = h(\zeta)-h(0)<0}.
  In particular, since~$\zeta\geq0$, we can conclude that~\m{\D[t]{h(t)}(\xi) < 0}.
  Now \rref{lem:derivationLemma} implies that
  \m{\D[t]{h(t)}(\xi) = \ivaluation{\Ifx}{\der{\crf}}<0}.
  This, however, is a contradiction, because the premise implies that the formula \m{\ivr\limply\der{(\crf\geq0)}} is true in all states along~$\iget[flow]{\If}$,
  including \m{\imodels{\Ifx}{\ivr\limply\der{(\crf\geq0)}}}.
  In particular, as~$\iget[flow]{\If}$ is a flow for~\m{\hevolvein{\D{x}=\theta}{\ivr}}, we know that~\m{\imodels{\Ifx}{\ivr}} holds, and we have~\m{\imodels{\Ifx}{\der{(\crf\geq0)}}}, 
  which contradicts \m{\ivaluation{\Ifx}{\der{\crf}}<0}.
  }%
  \qedhere
\end{proof}\smallskip

\begin{proof}[Proof of Soundness of \irref{diffcut}]
  Rule \irref{diffcut} is sound using the fact that the left premise implies that every flow~$\iget[flow]{\If}$ that satisfies~\m{\D{x}=\theta} also satisfies~$\ivr$ \emph{all along} the flow.
  Thus, if flow $\iget[flow]{\If}$ satisfies \m{\D{x}=\theta}, it also satisfies \m{\hevolvein{\D{x}=\theta}{\ivr}},
  so that the right premise entails the conclusion.
\qedhere
\end{proof}
}%

\end{document}